\documentclass[journal,onecolumn,10pt]{IEEEtran}
\IEEEoverridecommandlockouts

\usepackage{cite}
\usepackage{amsmath,amssymb,amsfonts}
\usepackage[noend]{algpseudocode}
\usepackage{amsthm}

\usepackage{graphicx}
\usepackage{float}
\usepackage{threeparttable}
\usepackage[ruled,linesnumbered]{algorithm2e}
\usepackage{algpseudocode}
\usepackage{textcomp}
\usepackage{xcolor}
 \usepackage{makecell}
\usepackage{pdfpages}
\usepackage{footnote}
\usepackage{diagbox}
\usepackage{bbm}
\usepackage{dsfont}
\usepackage{subcaption}

\newtheorem{theorem}{Theorem}
\newtheorem{assumption}{Assumption}
\newtheorem{Remark}{Remark}

\newtheorem{Definition}{Definition}

\newtheorem{lemma}{Lemma}
\newtheorem{Proposition}{Proposition}
\usepackage{comment}
\usepackage{booktabs,color,amssymb,amsmath}

\newcommand{\circled}[1]{\mathbin{\text{\textcircled{\scriptsize #1}}}}

\def\BibTeX{{\rm B\kern-.05em{\sc i\kern-.025em b}\kern-.08em
    T\kern-.1667em\lower.7ex\hbox{E}\kern-.125emX}}
    
\begin{document}

\title{ {Differentially Private Secure Multiplication: Beyond Two Multiplicands}

\author{Haoyang Hu,
Viveck R. Cadambe}

\thanks{
Haoyang Hu and Viveck R. Cadambe are with the School of Electrical and Computer Engineering, Georgia Institute of Technology, Atlanta, GA, 30332 USA. (E-mail: \{haoyang.hu, viveck\}@gatech.edu)
This work is partially funded by the National Science Foundation under Grant CIF 2506573.
}

}

\maketitle

\begin{abstract}
We study the problem of differentially private (DP) secure multiplication in distributed computing systems, focusing on regimes where perfect privacy and perfect accuracy cannot be simultaneously achieved. 
Specifically, $\mathsf{N}$ nodes collaboratively compute the product of $\mathsf{M}$ private inputs while guaranteeing $\epsilon$-DP against any collusion of up to $\mathsf{T}$ nodes. Prior work has characterized the fundamental privacy–accuracy trade-off for the multiplication of two multiplicands. In this paper, we extend these results to the more general setting of computing the product of an arbitrary number $\mathsf{M}$ of multiplicands.
We propose a secure multiplication framework based on carefully designed encoding polynomials combined with layered noise injection. The proposed construction generalizes existing schemes and enables the systematic cancellation of lower-order noise terms, leading to improved estimation accuracy. We explore two regimes: $(\mathsf{M}-1)\mathsf{T}+1 \le \mathsf{N} \le \mathsf{M}\mathsf{T}$ and $\mathsf{N}=\mathsf{T}+1$. 
For $(\mathsf{M}-1)\mathsf{T}+1 \le \mathsf{N} \le \mathsf{M}\mathsf{T}$, we characterize the optimal privacy--accuracy trade-off. 
When $\mathsf{N}=\mathsf{T}+1$, we derive nontrivial achievability and converse bounds that are asymptotically tight in the high-privacy regime.
\end{abstract}

\section{Introduction}

Secure multi-party computation (MPC) enables multiple parties to collaboratively compute functions over their private inputs while preserving privacy \cite{goldreich1998secure}. In this paper, we focus on secure computation of products involving $\mathsf{M}$ multiplicands, where parties hold private random variables $A_1, A_2, \ldots, A_\mathsf{M}$ and aim to compute their product $\prod_{i=1}^\mathsf{M} A_i$ without revealing the individual inputs. This computation arises naturally in numerous applications, including secure computation of high-dimensional statistics, multivariate moments, and complex polynomial functions \cite{du2001secure}. Existing information-theoretically secure MPC protocols (e.g., the celebrated BGW protocol \cite{goldreich1998secure} and variations) enable secure computation for $\mathsf{N}$ computation nodes with up to $\mathsf{T}$ potentially colluding nodes under two scenarios: (a) an honest majority setting ($\mathsf{N} \geq 2\mathsf{T}+1$, where most nodes are honest) using an interactive protocol with $O(\mathsf{M})$ rounds, or (b) a one-round protocol requiring $\mathsf{N} \geq \mathsf{M}\mathsf{T}+1$ nodes. These approaches ensure perfect privacy and perfect accuracy—that is, the information gathered by any subset of $\mathsf{T}$ nodes during the protocol is statistically independent of the $\mathsf{M}$ random variables $A_1, A_2,\ldots, A_\mathsf{M}$, and the product $\prod_{i=1}^\mathsf{M} A_i$ can be exactly recovered by a legitimate user who can receive the computation output from all $\mathsf{N}$ nodes. However, they incur infrastructural resource or communication overhead that scales linearly with $\mathsf{M}$. This creates a fundamental bottleneck for secure MPC in modern machine learning tasks involving complex nonlinear computations over high-dimensional data. We address this challenge by studying an information-theoretic framework that relaxes the requirements of perfect privacy and accuracy. The framework uses differential privacy (DP) to enable controlled privacy leakage and characterizes privacy-accuracy trade-offs for one-round protocols in the resource-constrained regime of fewer than $\mathsf{M}\mathsf{T}+1$ nodes.

There is an extensive literature on the privacy–accuracy trade-off in DP that systematically calibrates noise distributions to satisfy DP constraints while meeting specific utility objectives\cite{kairouz2014extremal, geng2015optimal, geng2018privacy, balle2018improving, gilani2025optimizing}. In particular, a significant body of prior work focuses on marginal mechanisms, such as additive noise mechanisms optimized for $\ell_1$ or $\ell_2$ loss, and selects per query or per node noise distributions to minimize the expected error under a fixed privacy budget. 
In distributed computations involving multiple nodes and potential collusion, controlling only marginal noise distribution is insufficient, and carefully designed noise correlations can significantly improve accuracy \cite{imtiaz2019distributed, vithana2025correlated, pillutla2025correlated, choquettecorrelated, koloskova2023gradient, allouah2024privacy}.
In classical secure MPC over finite fields, these correlations are usually imposed via algebraic constructions such as Reed–Solomon codes and Shamir secret sharing, which guarantee that the sensitive data is statistically independent from the input to designated subsets of users.
For secure multiplication over the reals in the honest-minority regime $\mathsf{N} \leq 2\mathsf{T}$, where DP (rather than statistical independence) is the target, recent results \cite{cadambe2023differentially, hu2025differentially} establish the optimal \emph{joint} noise distribution across the nodes. This noise distribution is inherently correlated, and the correlation structure arises from a careful weaving of previously studied DP-optimal mechanisms with real-valued embeddings of generalized Reed–Solomon (GRS) codes from coding theory.
Specifically, for unit-variance inputs, this construction achieves the following asymptotic mean squared error in the multiplication of two multiplicands $A_1$, $A_2$ (i.e., $\mathsf{M}=2$):
\begin{align}\label{eq:tradeoff} \mathbb{E}[(A_1A_2 - \tilde{V})^2] \to \frac{1}{(1+{\tt SNR}^*(\epsilon))^2} \end{align} where $\tilde{V}$ is the estimate of the product $A_1A_2$, and
${\tt SNR}^*(\epsilon)$ is a function of the DP parameter $\epsilon$.
The optimality of this tradeoff is established via a geometric converse, further enriching the growing literature on the power of correlated noise in DP mechanisms \cite{imtiaz2019distributed, vithana2025correlated, pillutla2025correlated, choquettecorrelated, koloskova2023gradient, allouah2024privacy}. 
However, prior work, including \cite{cadambe2023differentially, hu2025differentially}, is limited to $\mathsf{M}=2$ and thus addresses only a narrow subset of privacy-preserving MPC. The fundamental infrastructure bottlenecks emerge for complex functions when $\mathsf{M}$ is large. We take a critical next step by developing DP mechanisms for $\mathsf{M}>2$ that enable \emph{one-round} (contrary to conventional MPC) protocols with fewer than $\mathsf{M}\mathsf{T}+1$ nodes, even for complex computations.


In this paper, we study two regimes: (1) $(\mathsf{M}-1)\mathsf{T} + 1 \leq \mathsf{N} \leq \mathsf{M}\mathsf{T}$ and (2) $\mathsf{N}=\mathsf{T}+1$. In the first regime of $(\mathsf{M}-1)\mathsf{T} + 1 \leq \mathsf{N} \leq \mathsf{M}\mathsf{T}$, for unit variance $A_1,\dots,A_\mathsf{M}$, we show that the privacy-accuracy trade-off satisfies the converse:
\begin{align}
\mathbb{E}\left[\left(\prod_{i=1}^{\mathsf{M}}A_i - \tilde{V}\right)^2\right] \geq \frac{1}{(1+{\tt SNR}^*(\epsilon))^\mathsf{M}}
\end{align}
where $\tilde{V}$ is an estimate obtained by the decoder, and we show that this trade-off is asymptotically achievable. The optimal trade-off is derived by first developing an alternate, geometric interpretation of \eqref{eq:tradeoff}. The geometric interpretation leads to a natural generalization of the trade-off of \eqref{eq:tradeoff} for the $(\mathsf{M}-1)\mathsf{T} + 1 \leq \mathsf{N} \leq \mathsf{M}\mathsf{T}$ regime. The second regime of $\mathsf{N} = \mathsf{T}+1$ is important as it represents minimal redundant infrastructure. Achieving privacy-preserving computation of non-linear functions with just one round and good accuracy can have a high impact in practice. 
For this practically important regime, we study the case of $\mathsf{N} < \mathsf{M}$ and establish constructive privacy mechanism design results together with fundamental impossibility bounds for the privacy–accuracy trade-off. While a gap remains between the bounds in this regime, we show that our bounds are information-theoretically tight in the high-privacy limit, that is, as $\epsilon \to 0$.

\emph{Related Work:} 
Correlated-noise mechanisms have been explored to improve utility in several DP settings \cite{imtiaz2019distributed, vithana2025correlated, pillutla2025correlated, choquettecorrelated, koloskova2023gradient, allouah2024privacy}. For distributed mean estimation and general distributed function computation, early work and recent advances design cross-user correlated Gaussian noise to reduce variance while preserving local or distributed DP guarantees \cite{imtiaz2019distributed, vithana2025correlated}. In iterative optimization and empirical risk minimization, a line of work \cite{pillutla2025correlated, choquettecorrelated, koloskova2023gradient} shows that linearly correlated noise can provably outperform independent noise in private learning, with both theoretical convergence analyses and scalable mechanisms. In decentralized and federated learning, correlated and often canceling noise is combined with secure aggregation to approach centralized DP utility without relying on a trusted curator \cite{allouah2024privacy}.

Coding techniques are widely used in distributed systems to protect sensitive inputs \cite{yu2019lagrange, d2020gasp, jahani2023swiftagg+, akbari2021secure, chang2018capacity, jia2021capacity, liang2024privacy, soleymani2021coded}. However, these works operate over finite fields and require perfect privacy guarantees, which in turn necessitate additional infrastructure.
In addition to \cite{cadambe2023differentially, hu2025differentially}, a series of recent studies \cite{soleymani2021analog, soleymani2022analog, liu2023analog, liu2023differentially, makkonen2022analog, makkonen2025analog, borah2024securing} have investigated private distributed computing directly over the real domain, where a perfect information-theoretic privacy guarantee is generally unattainable, and like us, approximation is necessary.
In particular, \cite{liu2023analog, liu2023differentially} adopt DP to characterize the privacy–utility trade-off in multiparty computation over real fields, similar in spirit to the present work. Their schemes rely on complex-valued Shamir’s secret sharing; however, \cite{cadambe2023differentially} demonstrates that these constructions fail to achieve the optimal privacy–accuracy trade-off.

To handle general nonlinear computations, the BGW protocol employs secret resharing \cite{wigderson1988completeness} by re-encoding intermediate results using freshly sampled polynomials of degree at most $\mathsf{T}$. While this preserves correctness and privacy and prevents growth in the number of nodes, it incurs additional communication rounds and computational overhead, significantly reducing efficiency.
Alternative approaches, such as Beaver multiplication triples (MTriples) \cite{beaver1992efficient}, enable more efficient secure multiplication using precomputed correlated randomness, but rely on an expensive offline preprocessing phase that may be impractical in many settings. Moreover, our goal is not merely to reduce the number of participating nodes required for privacy, but to develop a principled framework that extends naturally to more general computations. Designing efficient, secure multiplication protocols with fewer participating nodes, therefore, remains a fundamental open problem in secure MPC.


\emph{Notations}: 
Calligraphic symbols denote sets, bold symbols denote matrices and vectors, and sans-serif symbols denote system parameters.
$\vec{1}$ denotes an all-ones column vector.
For a positive integer $a$, we let $[a]\triangleq \{1,\dots,a\}$.
For a vector $\vec{\Gamma}$, we denote by $\vec{\Gamma}[i]$ the element with index $i$,.
For a matrix $\mathbf{A}$, let $\mathbf{A}^T$ denote its transpose. 
For functions $f$ and $g$ and for all large enough values of $x$, we write $f(x)=O(g(x))$ if there exists a positive real number $M$ and a real number $a_0 \in \mathbb{R}$ such that
$|f(x)|\le M |g(x)|$ for all $x\ge a_0$.

\begin{figure}[t]
    \centering
    \includegraphics[scale=0.5]{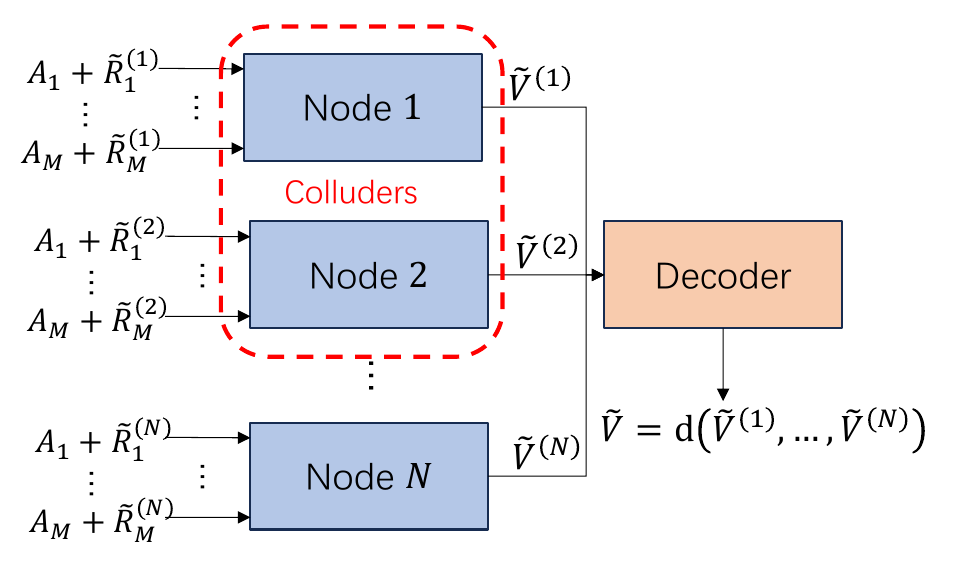}
    \caption{Illustration of the system model, where $\tilde{V}^{(j)} = \prod_{i \in [\mathsf{M}]} \bigl(A_i + \tilde{R}_i^{(j)}\bigr)$.}
    \label{fig::system_model}
\end{figure}

\section{System Model and Main Results}
In this section, we introduce the system model and present the main results of this paper. 

\subsection{System Model}
As shown in Figure \ref{fig::system_model}, we consider a distributed computing system with $\mathsf{N}$ nodes that collaboratively compute the product of $\mathsf{M}$ real-valued random variables $A_i \in \mathbb{R}$, where $i \in [\mathsf{M}]$. These random variables are assumed to satisfy the following condition\footnote{The privacy analysis does not rely on this assumption and the distribution of inputs $A_i$; the assumption is introduced for the purpose of evaluating the estimation accuracy.}.

\begin{assumption}
The random variables $A_i \in \mathbb{R}$ are statistically independent and satisfy
\begin{align}
    \mathrm{Var}(A_i) \le \eta,  
\end{align}
for a constant $\eta \ge 0$.
\label{assumption1}
\end{assumption}

Let each node $j\in[\mathsf{N}]$ store a noisy version of inputs $\{A_{i}\}_{i \in [\mathsf{M}]}$, given by
\begin{align}
    &\tilde{A}_i^{(j)} = A_i+\tilde{R}_i^{(j)}, \qquad i \in [\mathsf{M}], 
\end{align}
where $\{\tilde{R}_i^{(j)}\}_{i\in [\mathsf{M}], j \in [\mathsf{N}]}$ are  random variables that are statistically independent of the inputs $\{A_{i}\}_{i \in [\mathsf{M}]}$.
We assume that $\{\tilde{R}_1^{(j)}\}_{j \in [\mathsf{N}]}, \{\tilde{R}_2^{(j)}\}_{j \in [\mathsf{N}]},\dots, \{\tilde{R}_\mathsf{M}^{(j)}\}_{j \in [\mathsf{N}]}$ are statistically independent, i.e.,
\begin{align}
    \mathbb{P}_{\{\tilde{R}_i^{(j)}\}_{i\in [\mathsf{M}], j \in [\mathsf{N}]}} = \prod_{i\in [\mathsf{M}]} \mathbb{P}_{\{\tilde{R}_i^{(j)}\}_{j \in [\mathsf{N}]}}.
\end{align}

Without loss of generality, we assume that inputs $\{A_i\}_{i\in [\mathsf{M}]}$ and random variables $\{\tilde{R}_i^{(j)}\}_{i\in [\mathsf{M}], j \in [\mathsf{N}]}$ have zero mean \footnote{If the mean is nonzero, one can simply replace the decoding function with an affine mapping. It is straightforward to verify that all of our results continue to hold without the zero-mean assumption when affine decoders are used.}.

Node $j\in[\mathsf{N}]$  outputs the product of its local data, i.e.,
\begin{align}
    \tilde{V}^{(j)}= \prod_{i\in [\mathsf{M}]} \tilde{A}_i^{(j)} = \prod_{i\in [\mathsf{M}]} \left( A_i+\tilde{R}_i^{(j)}\right).
\end{align}

A decoder receives the computation output from all nodes, and then applies a linear decoding function $d: \mathbb{R}^\mathsf{N}\rightarrow \mathbb{R}$ to estimate the desired product $\prod_{i\in [\mathsf{M}]} A_i$, i.e., the decoder outputs 
\begin{align}
    \tilde{V}=d( \tilde{V}^{(1)}, \dots, \tilde{V}^{(\mathsf{N})}) = \sum_{j=1}^{\mathsf{N}} d_j \tilde{V}^{(j)},
\end{align}
where the coefficients $d_j \in \mathbb{R}$ specify the linear function $d$.

We assume that the decoding function can be designed based on the knowledge of the joint distributions of $\{\tilde{R}_i^{(j)}\}_{i\in [\mathsf{M}], j \in [\mathsf{N}]}$ and the parameters $\mathsf{N}, \mathsf{T}, \mathsf{M}, \eta$.
A secure multiplication coding scheme $\mathcal{C}(\mathsf{N}, \mathsf{T}, \mathsf{M}, \eta)$ specifies both the joint distribution of $\{\tilde{R}_i^{(j)}\}_{i\in [\mathsf{M}], j \in [\mathsf{N}]}$ and a linear decoding function $d: \mathbb{R}^\mathsf{N}\rightarrow \mathbb{R}$. We omit the dependence on $\mathcal{C}$ in $(\mathsf{N}, \mathsf{T}, \mathsf{M}, \eta)$ when clear from context.
A coding scheme $\mathcal{C}(\mathsf{N}, \mathsf{T}, \mathsf{M}, \eta)$ satisfies $\mathsf{T}$-node $\epsilon$-DP if the data at any $\mathsf{T}$ nodes is $\epsilon$-DP with respect to the original inputs.
Mathematically:

\begin{Definition}[$\mathsf{T}$-node $\epsilon$-Differential Privacy ($\mathsf{T}$-node $\epsilon$-DP)]\label{def::dp}
    A coding scheme $\mathcal{C}(\mathsf{N}, \mathsf{T}, \mathsf{M}, \eta)$ with  random noise variables $\{\tilde{R}_i^{(j)}\}_{i\in [\mathsf{M}], j \in [\mathsf{N}]}$ satisfies $\mathsf{T}$-node $\epsilon$-DP for $\epsilon>0$ if for each $i \in [\mathsf{M}]$, any $A_{i,0}, A_{i,1} \in \mathbb{R}$ satisfying $|A_{i,0} - A_{i,1}| \le 1$, we have  
    \begin{align}
        \frac{\mathbb{P} \left( \mathbf{X}^{\mathcal{T}}_{i,0} \in \mathcal{B}\right)}{\mathbb{P} \left( \mathbf{X}^{\mathcal{T}}_{i,1}\in \mathcal{B}\right)} \le e^{\epsilon},
    \end{align}
    for all subsets $\mathcal{T} \subseteq [\mathsf{N}]$ with $|\mathcal{T}|=\mathsf{T}$, and for all subsets $\mathcal{B}\subset \mathbb{R}^{1\times \mathsf{T}}$ in the Borel $\sigma$-field, where 
    \begin{align}
        & \mathbf{X}^{\mathcal{T}}_{i, \ell} \triangleq \begin{bmatrix}
            A_{i,\ell}+ \tilde{R}_i^{(t_1)} &  A_{i,\ell}+ \tilde{R}_i^{(t_2)} & \cdots & A_{i,\ell}+ \tilde{R}_i^{(t_\mathsf{T})}
        \end{bmatrix},
    \end{align}
    with $\ell \in \{0,1\}$, $\mathcal{T}=\{t_1, t_2, \dots, t_{\mathsf{T}}\}$.
\end{Definition}

We measure the accuracy of a coding scheme $\mathcal{C}$ in terms of its linear mean squared error.

\begin{Definition}[Linear Mean Square Error ($\tt LMSE$)]
Let $\mathcal{P}_{\eta}$ denote the set of all joint distributions 
$\mathbb{P}_{A_1\cdots A_\mathsf{M}}$ such that $A_1,\ldots,A_\mathsf{M}$ 
are mutually independent and $\mathrm{Var}(A_i) \le \eta$ for all 
$i\in[\mathsf{M}]$ (Assumption \ref{assumption1}). For a coding scheme $\mathcal{C}$ consisting of the 
joint distribution of $\{\tilde{R}_i^{(j)}\}_{i\in [\mathsf{M}], j \in [\mathsf{N}]}$ 
and a linear decoding function $d$, the $\tt LMSE$ is defined as
\begin{align}
    {\tt LMSE}(\mathcal{C}) 
    = \sup_{\mathbb{P}_{A_1\cdots A_\mathsf{M}} \in \mathcal{P}_{\eta}} 
    \mathbb{E}\left[\left|\tilde{V}-\prod_{i\in [\mathsf{M}]} A_i\right|^2\right],
\end{align}
where the expectation is taken jointly over $\{A_i\}_{i\in[\mathsf{M}]}$ 
distributed according to $\mathbb{P}_{A_1\cdots A_\mathsf{M}}$ and the 
randomness in $\{\tilde{R}_i^{(j)}\}_{i\in [\mathsf{M}], j \in [\mathsf{N}]}$.
\end{Definition}



For fixed parameters $\mathsf{N}, \mathsf{T}, \mathsf{M}, \eta$, we are interested in studying the trade-off between the ${\tt LMSE}$ and the $\mathsf{T}$-node DP parameter $\epsilon$ for coding schemes $\mathcal{C}(\mathsf{N}, \mathsf{T}, \mathsf{M}, \eta).$

\subsection{Main Results}

For any DP parameter $\epsilon>0$, we introduce the notation
\begin{align}
    {\tt SNR}^*(\epsilon) \triangleq \frac{\eta}{\sigma^{*}(\epsilon)^2},
\end{align}
where
\begin{equation} \sigma^{*}(\epsilon)^2 = \frac{2^{2/3}e^{-2\epsilon/3}(1+e^{-2\epsilon/3})+e^{-\epsilon}} {(1-e^{-\epsilon})^2}.\label{equ::optimalsigma}
\end{equation}
$\sigma^{*}(\epsilon)^2$ can be interpreted as the smallest variance of additive noise that ensures $\epsilon$-DP, corresponding to the staircase mechanism \cite{geng2015optimal}. A rigorous characterization is provided in Lemma~\ref{lemma::pramod_results}.

We first present the achievability result.

\begin{theorem}
Consider positive integers $\mathsf{N}, \mathsf{T}, \mathsf{M}$ with $(\mathsf{M}-1)\mathsf{T}+1 \leq \mathsf{N \le \mathsf{M} \mathsf{T}}$. 
For any $\epsilon,\xi > 0,$ there exists a coding scheme $\mathcal{C}$ that achieves $\mathsf{T}$-node $\epsilon$-DP secure multiplication with
\begin{equation}
   {\tt LMSE} (\mathcal{C}) \leq \frac{\eta^\mathsf{M}}{\left(1+ {\tt SNR}^*(\epsilon)\right)^{\mathsf{M}}}+\xi.
\end{equation}
\label{thm::achievability}    
\end{theorem}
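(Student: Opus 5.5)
The plan is to reduce the claim to a single-variable identity and then realize that identity with polynomial (GRS-type) encodings plus layered noise, as in the two-multiplicand schemes of \cite{cadambe2023differentially, hu2025differentially}. The target value suggests the mechanism. Let $Z_i$ be additive noise with variance $\sigma^*(\epsilon)^2$ that makes $A_i+Z_i$ $\epsilon$-DP (Lemma~\ref{lemma::pramod_results}). Set $\kappa=\eta/(\eta+\sigma^*(\epsilon)^2)$. Then
\begin{align}
\mathbb{E}\Big[\prod_{i}\big((1-\kappa)A_i-\kappa Z_i\big)^2\Big]=\prod_i\big((1-\kappa)^2\eta+\kappa^2\sigma^{*2}\big)=\frac{\eta^{\mathsf{M}}}{(1+{\tt SNR}^*(\epsilon))^{\mathsf{M}}},
\end{align}
using independence across $i$ and zero means. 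This is the product of the per-factor LMMSE errors. So it suffices for the decoder to output, up to an error of at most $\xi$,
\begin{align}
\tilde V\approx \prod_i A_i-\prod_i\big((1-\kappa)A_i-\kappa Z_i\big).
\end{align}
Writing $f(x)=\prod_i(A_i+xZ_i)$, this equals $f(0)-(1-\kappa)^{\mathsf{M}}f(x_0)$ with $x_0=-\kappa/(1-\kappa)$. We therefore want to evaluate a fixed linear functional of the polynomial $f$.

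Concretely, I would first reduce to $\mathsf{N}=(\mathsf{M}-1)\mathsf{T}+1$. The decoder can put zero weight on extra nodes. Those nodes can be given evaluations at further distinct points of the same encoding, and privacy is checked over all $\mathsf{T}$-subsets in any case. The encoding for node $j$ at point $\alpha_j$ is then
\begin{align}
\tilde A_i^{(j)}=A_i+\alpha_j Z_i+\sum_{k=2}^{\mathsf{T}}\delta^{\,c_k}\alpha_j^{k}R_{i,k}.
\end{align}
Here the $R_{i,k}$ are independent Gaussians and $\delta\to 0$ is a layering parameter with exponents $c_k$ to be tuned. Any $\mathsf{T}$ nodes see an invertible (Vandermonde) linear image of $(A_i+\text{scaled }Z_i,\ \text{layers})$. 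Hence $\epsilon$-DP holds by post-processing, once the evaluation points and scalings are chosen so that the effective noise on $A_i$ in every $\mathsf{T}$-subset has the staircase variance $\sigma^{*2}$. This is the same GRS-embedding argument as for $\mathsf{M}=2$, applied separately to each $i$ and combined by the independence of the noise across $i$.

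Next, the node output $\tilde V^{(j)}=F(\alpha_j)$ is a polynomial in $\alpha$ of degree $\mathsf{M}\mathsf{T}$. Its coefficients are multilinear in the $A_i$, $Z_i$ and $\delta$-scaled layers. Since $\mathsf{N}=(\mathsf{M}-1)\mathsf{T}+1$, the decoder can choose $d_j$ solving a Vandermonde system. This lets it reproduce the functional $f(0)-(1-\kappa)^{\mathsf{M}}f(x_0)$ on the coefficients of degree at most $(\mathsf{M}-1)\mathsf{T}$, and cancel the lower-order terms that involve the auxiliary layers. The remaining coefficients of degree above $(\mathsf{M}-1)\mathsf{T}$ cannot be interpolated away. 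The layering must ensure they carry enough factors of $\delta$ that their contribution to the ${\tt LMSE}$ vanishes as $\delta\to0$, while the decoder coefficients stay bounded or grow slower than $\delta$ decays. Taking $\delta$ small then gives the bound up to $\xi$. The supremum over $\mathcal{P}_\eta$ is harmless because the error depends only on second moments, and it is monotone in the variances $\mathrm{Var}(A_i)\le\eta$.

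I expect the main obstacle to be this last balancing act. The exponents $c_k$ and the evaluation points (possibly also $\delta$-dependent, for example $\alpha_j$ clustered near a point) must be chosen simultaneously so that three things hold. First, each $\mathsf{T}$-subset retains exactly staircase-level noise on $A_i$ (privacy). Second, with only $(\mathsf{M}-1)\mathsf{T}+1$ evaluations, every monomial that is neither in the target functional nor cancelled has vanishing weight. Third, the decoder weights do not blow up faster than those monomials shrink. I would first work out $\mathsf{M}=3$, $\mathsf{T}=2$ explicitly to find the right $\delta$-exponents, and then generalize by induction on the number of layers.
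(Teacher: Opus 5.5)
There is a genuine gap. It lies in your encoding for $\mathsf{T}\ge 2$, not only in the exponent tuning you defer, and your privacy claim is false. Because $A_i$ is the pure constant term, a colluding set $\mathcal{T}$ sees $A_i\vec{1}+\mathbf{V}(Z_i,\delta^{c_2}R_{i,2},\ldots,\delta^{c_{\mathsf{T}}}R_{i,\mathsf{T}})^T$, with $\mathbf{V}=[\alpha_j^k]_{j\in\mathcal{T},k\in[\mathsf{T}]}$ invertible. Applying $\mathbf{V}^{-1}$ gives $w_1A_i+Z_i$ and $w_kA_i+\delta^{c_k}R_{i,k}$ for $k\ge 2$. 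Here $w=\mathbf{V}^{-1}\vec{1}$ holds the coefficients of $1-\prod_{j\in\mathcal{T}}(1-x/\alpha_j)$, so $w_{\mathsf{T}}=(-1)^{\mathsf{T}+1}\prod_{j\in\mathcal{T}}\alpha_j^{-1}\neq 0$.

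\begin{itemize}
\item \emph{Small layers leak.} If the layers vanish as $\delta\to 0$, as your accuracy step needs, the colluders recover $A_i$ essentially exactly. The layers are not an independent noise block, because $A_i$ enters every coordinate.
\item \emph{Large layers cannot be cancelled.} They then contaminate the low-degree coefficients of $F$. A linear decoder touches the degree-$m$ coefficient only through the scalar $\sum_j d_j\alpha_j^m$, so it cannot cancel layer terms while keeping the $f$-terms of the same degree.
\item \emph{The layer-free coordinate is mis-scaled, and tuning does not fix it in the smallest case.} Lemma~\ref{lemma::pramod_results} applied to $A_i+Z_i/w_1$ forces $\mathrm{Var}(Z_i)\ge(\sum_{j\in\mathcal{T}}\alpha_j^{-1})^2\sigma^*(\epsilon)^2$. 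The converse chain of Section~\ref{sec::converse}, with a partition isolating node $j$ (possible since $\mathsf{N}-1=(\mathsf{M}-1)\mathsf{T}$), gives ${\tt LMSE}\ge\eta^{\mathsf{M}}/((1+{\tt SNR}_j)(1+{\tt SNR}^*(\epsilon))^{\mathsf{M}-1})$. So node $j$'s own noise variance, which is at least $\alpha_j^2\mathrm{Var}(Z_i)$, must tend to $\sigma^*(\epsilon)^2$. For $\mathsf{M}=\mathsf{T}=2$, $\mathsf{N}=3$ this needs $\alpha_a/\alpha_b\in[-2,-1/2]$ for every pair, which is impossible for three points. Clustering at $1$ instead gives $w_1\to\mathsf{T}$, so the colluders effectively see $\mathsf{T}A_i+Z_i$.
\end{itemize}

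The paper avoids this by putting the staircase noise in the constant term with the secret:
\[
p_i(x)=(A_i+R_i)+\zeta_2\sum_{t=1}^{\mathsf{T}-1}S_{i,t}x^t+\zeta_1R_ix^{\mathsf{T}}.
\]
The same $R_i$ reappears at degree $\mathsf{T}$ with tiny weight, and the secret-sharing layers sit in between with $\zeta_1\ll\zeta_2\ll\zeta_1^{(\mathsf{T}-1)/\mathsf{T}}$. Colluders then reduce to $A_1+(1+O(\zeta_1))R_1$, plus coordinates whose noise scales like $\zeta_2/\zeta_1\to\infty$. Meanwhile $c_{\ell\mathsf{T}}=\zeta_1^{\ell}C_\ell+o(\zeta_1^{\ell})$ for $\ell\le\mathsf{M}-1$.

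Your accuracy plan also misses the fact that makes $(\mathsf{M}-1)\mathsf{T}+1$ nodes enough. Write $L(f)=f(0)-(1-\kappa)^{\mathsf{M}}f(x_0)$ for your functional; then $L((x-1)^{\mathsf{M}})=0$. Writing $f(1+y)=\sum_{k=0}^{\mathsf{M}}C_ky^k$ (your $Z_i$ playing the paper's $R_i$), $L$ depends only on $C_0,\ldots,C_{\mathsf{M}-1}$ and never on $C_{\mathsf{M}}=\prod_iZ_i$. This is the content of Propositions~\ref{prop::CkDk} and~\ref{prop::alter_sum}.

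It is also why, for $\mathsf{T}=1$, your points must cluster at exactly $x=1$. At unclustered points your interpolating decoder leaves $\prod_iZ_i$ with weight $L(\prod_j(x-\alpha_j))\neq 0$. With $\alpha_j=1+\delta x_j$, your $\mathsf{T}=1$ scheme coincides with the paper's.
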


Theorem \ref{thm::achievability}, which is a generalization of the main result of \cite{cadambe2023differentially}, is proved in Section \ref{sec::achievable_scheme}.

We then present a converse result establishing a lower bound on the ${\tt LMSE}$ for DP secure multiplication.

\begin{theorem}
Consider positive integers $\mathsf{N}, \mathsf{T}, \mathsf{M}$ with $\mathsf{M}  \leq \mathsf{N} \leq \mathsf{M}\mathsf{T}$. 
For any coding scheme ${\mathcal{C}}$ that achieves $\mathsf{T}$-node $\epsilon$-DP secure multiplication, there exists a distribution $\prod_{i \in [\mathsf{M}]} \mathbb{P}_{A_i}$ that satisfies Assumption \ref{assumption1} and
\begin{equation} 
    {\tt LMSE} (\mathcal{C}) \geq \frac{\eta^\mathsf{M}}{\left(1+ {\tt SNR}^*(\epsilon)\right)^\mathsf{M}}.
\end{equation}
\label{thm::converse}
\end{theorem}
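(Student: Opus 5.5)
The plan is to fix a convenient input distribution, rewrite the ${\tt LMSE}$ as a squared distance in a tensor product of Hilbert spaces, and then use $\mathsf{N}\le\mathsf{M}\mathsf{T}$ to reduce the bound to $\mathsf{M}$ independent single-input bounds. Concretely, take $A_1,\dots,A_\mathsf{M}$ i.i.d., zero mean, with $\mathrm{Var}(A_i)=\eta$ (e.g., Gaussian). Then $\prod_i \mathbb{P}_{A_i}\in\mathcal{P}_\eta$, so ${\tt LMSE}(\mathcal{C})$ is at least the MSE under this choice.

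\textbf{Step 1 (Hilbert space setup).} For each $i\in[\mathsf{M}]$, let $\mathcal{H}_i$ be the finite-dimensional space of zero-mean random variables spanned by $A_i$ and $\{\tilde R_i^{(j)}\}_{j\in[\mathsf{N}]}$, with inner product $\langle X,Y\rangle=\mathbb{E}[XY]$. The $\mathsf{M}$ families $(A_i,\{\tilde R_i^{(j)}\}_j)$ are mutually independent, since the noise families are independent of each other and of the inputs. Hence for $X_i,Y_i\in\mathcal{H}_i$,
\[
\mathbb{E}\Big[\prod_i X_i\prod_i Y_i\Big]=\prod_i\mathbb{E}[X_iY_i].
\]
So the map $\otimes_i X_i\mapsto\prod_i X_i$ is an isometry from $\mathcal{H}_1\otimes\cdots\otimes\mathcal{H}_\mathsf{M}$ into $L^2$. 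Writing $v_i^{(j)}=A_i+\tilde R_i^{(j)}$ and $v^{(j)}=\otimes_i v_i^{(j)}$, the error becomes
\[
\mathbb{E}\big[(\tilde V-\textstyle\prod_i A_i)^2\big]=\big\|\sum_j d_j v^{(j)}-\otimes_i A_i\big\|^2 .
\]
This is at least the squared distance from $\otimes_i A_i$ to $\mathcal{W}=\mathrm{span}\{v^{(j)}:j\in[\mathsf{N}]\}$.

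\textbf{Step 2 (partition and orthogonal complement).} Since $\mathsf{N}\le\mathsf{M}\mathsf{T}$, partition $[\mathsf{N}]$ into groups $\mathcal{G}_1,\dots,\mathcal{G}_\mathsf{M}$, each of size at most $\mathsf{T}$; using $\mathsf{N}\ge\mathsf{M}$, all groups can be taken nonempty. Let $\mathcal{U}_k=\mathrm{span}\{v_k^{(j)}:j\in\mathcal{G}_k\}\subseteq\mathcal{H}_k$. For $j\in\mathcal{G}_k$ we have $v^{(j)}\in\mathcal{H}_1\otimes\cdots\otimes\mathcal{U}_k\otimes\cdots\otimes\mathcal{H}_\mathsf{M}$. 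Therefore $\mathcal{W}$ lies in the sum of these $\mathsf{M}$ subspaces, and that sum is orthogonal to $\mathcal{U}_1^\perp\otimes\cdots\otimes\mathcal{U}_\mathsf{M}^\perp$. Let $P_k$ be the orthogonal projection onto $\mathcal{U}_k^\perp$. Projecting the error vector onto $\otimes_k\mathcal{U}_k^\perp$ gives
\[
\mathrm{dist}(\otimes_k A_k,\mathcal{W})^2\ \ge\ \|\otimes_k P_kA_k\|^2=\prod_k\|P_kA_k\|^2 .
\]

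\textbf{Step 3 (single-input DP bound).} Here $\|P_kA_k\|^2$ is exactly the linear MMSE of estimating $A_k$ from $\{A_k+\tilde R_k^{(j)}\}_{j\in\mathcal{G}_k}$. These at most $\mathsf{T}$ observations are $\epsilon$-DP with respect to $A_k$ by Definition~\ref{def::dp}; if $|\mathcal{G}_k|<\mathsf{T}$, pad the group to a $\mathsf{T}$-subset, which can only enlarge $\mathcal{U}_k$. I will invoke Lemma~\ref{lemma::pramod_results} (the staircase-optimality result, in its form from \cite{cadambe2023differentially}): any $\epsilon$-DP collection of noisy linear observations of a variance-$\eta$ input has linear MMSE at least
\[
\frac{\eta\,\sigma^*(\epsilon)^2}{\eta+\sigma^*(\epsilon)^2}=\frac{\eta}{1+{\tt SNR}^*(\epsilon)} .
\]
Multiplying over $k$ then gives the claimed bound $\eta^\mathsf{M}/(1+{\tt SNR}^*(\epsilon))^\mathsf{M}$.

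\textbf{Main obstacle.} I expect the main difficulty to be Step 3: making sure the available lemma gives the linear-estimation bound for $\mathsf{T}$ jointly correlated observations, and not just a single additive-noise variance. If it gives only the latter, I would first show that any unit-norm linear combination $\sum_{j\in\mathcal{G}_k} w_j v_k^{(j)}$ with $\sum_j w_j=1$ is a post-processing of the $\mathcal{T}$-view. It is therefore an $\epsilon$-DP additive mechanism $A_k+\sum_j w_j\tilde R_k^{(j)}$, whose noise variance is at least $\sigma^*(\epsilon)^2$. I would then optimize the scaling of the estimator to obtain $\eta\sigma^{*2}/(\eta+\sigma^{*2})$. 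A minor additional point is checking the tensor/isometry argument when the $\mathcal{H}_i$ are degenerate, which is handled by working modulo null vectors.
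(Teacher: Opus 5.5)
Your proposal is correct and is essentially the paper's proof written without coordinates. The isometry $\otimes_i X_i\mapsto\prod_i X_i$ is the paper's whitened tensor identity $\tilde V-\prod_i A_i=\langle\mathcal{A},\mathcal{D}\rangle$, and the partition of $[\mathsf{N}]$ into $\mathsf{M}$ groups of size at most $\mathsf{T}$ is the same. Projecting onto $\mathcal{U}_1^{\perp}\otimes\cdots\otimes\mathcal{U}_{\mathsf{M}}^{\perp}$ matches the paper's contraction with the nullifying vectors $\vec{\alpha}_i$ of Lemma~\ref{lemma::nullify}, which are exactly the projections of the first coordinate vector onto $\mathrm{span}\{\vec{w}_i^{(j)}: j\in\mathcal{S}_i\}^{\perp}$, and it yields the identical bound. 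Your fallback in Step 3 (reduce to an additive mechanism $A_k+\sum_j w_j\tilde R_k^{(j)}$ with $\sum_j w_j=1$ by post-processing, apply Lemma~\ref{lemma::pramod_results}, then optimize the scale) is precisely the paper's proof of Lemma~\ref{lemma::bound_snr}, so no gap remains.
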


Theorem \ref{thm::converse}, which is a generalization of the main result of \cite{cadambe2023differentially}, is proved in Section \ref{sec::converse}.

In Fig.~\ref{fig::baseline}, we compare our main result with two baseline schemes motivated by secret sharing and DP literature: complex-valued Shamir secret sharing and independent additive noise. The results show that the proposed scheme consistently outperforms both baselines.
The first baseline applies a complex-valued embedding of Shamir secret sharing \cite{soleymani2021analog,liu2023analog}, followed by a conventional DP analysis. The second baseline introduces independent additive noise at each node, where the noise distribution is chosen according to the optimal staircase mechanism~\cite{geng2015optimal}. Detailed descriptions of both baseline schemes can be found in Appendix~A of~\cite{cadambe2023differentially}.

\begin{figure}[t]
    \centering
    \includegraphics[scale=0.5]{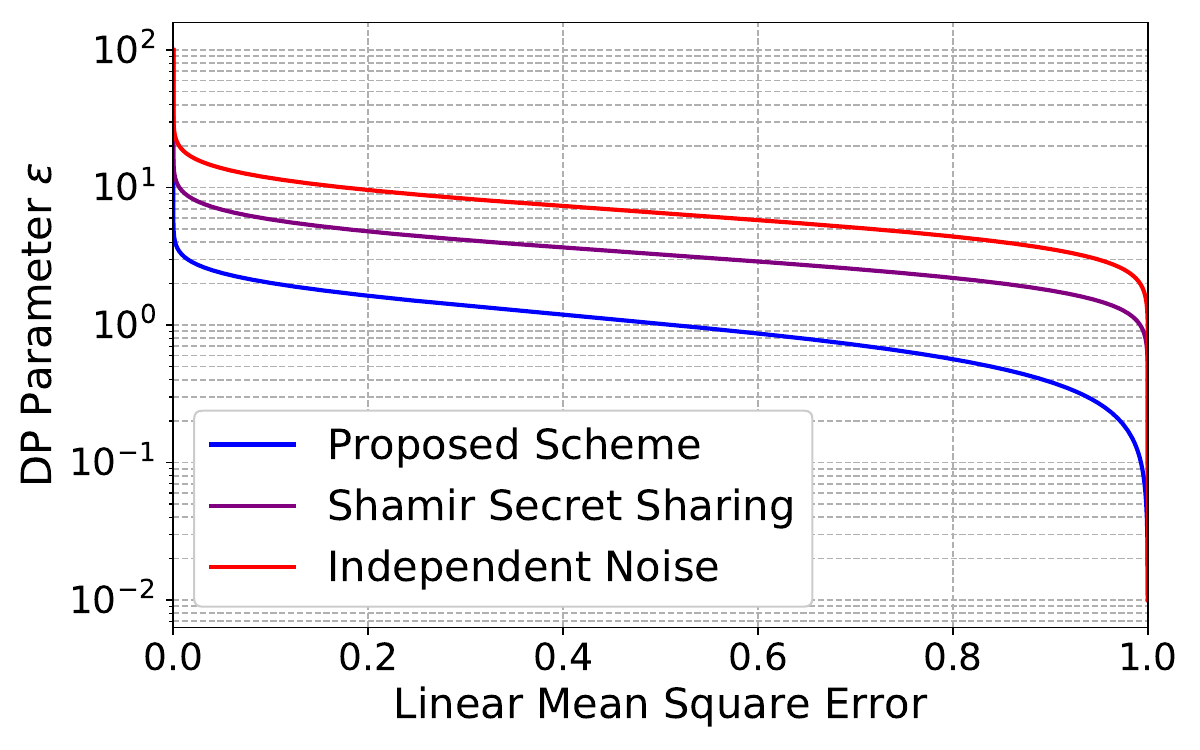}
    \caption{Performance of the proposed optimal scheme for $\mathsf{N}=5$, $\mathsf{M}=3$, and $\mathsf{T}=2$, compared with the baselines: (i) complex-valued Shamir secret sharing and (ii) independent noise across nodes.}
    \label{fig::baseline}
\end{figure}

From Theorems \ref{thm::achievability} and \ref{thm::converse}, we obtain a tight characterization of the privacy–accuracy trade-off in the regime $(\mathsf{M}-1)\mathsf{T}+1 \leq \mathsf{N} \leq \mathsf{M}\mathsf{T}$.
\cite{cadambe2023differentially} established, for the case of $\mathsf{M}=2$ multiplicands, the converse bound 
$
{\tt LMSE}(\mathcal{C}) \ge {\eta^2}/{\big(1+{\tt SNR}^*(\epsilon)\big)^2}
$
under the regime $\mathsf{N}\le 2\mathsf{T}$. 
Our converse result in Theorem~\ref{thm::converse} generalizes this bound to the case of $\mathsf{M} > 2$ multiplicands.
Furthermore, \cite{cadambe2023differentially} shows that when $\mathsf{T}+1 \le \mathsf{N} \le 2\mathsf{T}$, there exists an achievable scheme that attains the converse bound. This result can be regarded as a special case of our achievability result presented in Theorem~\ref{thm::achievability}.

Based on the achievability result in Theorem~\ref{thm::achievability}, secure multiplication of $\mathsf{M}$ multiplicands with $\mathsf{T}$-node $\epsilon$-DP can be achieved using $\mathsf{N} = (\mathsf{M}-1)\mathsf{T}+1$ nodes, which attains the converse bound in Theorem~\ref{thm::converse}. Consequently, in the regime $(\mathsf{M}-1)\mathsf{T}+1 \le \mathsf{N} \le \mathsf{M}\mathsf{T}$, increasing the number of participating nodes does not further reduce the achievable ${\tt LMSE}$.


We next consider the case $\mathsf{N} = \mathsf{T} + 1$ with $\mathsf{N} < \mathsf{M}$, for which neither the achievability condition in Theorem \ref{thm::achievability}, $(\mathsf{M}-1)\mathsf{T}+1 \le \mathsf{N} \le \mathsf{M}\mathsf{T}$, nor the converse condition in Theorem \ref{thm::converse}, $\mathsf{M} \le \mathsf{N} \le \mathsf{M}\mathsf{T}$, is satisfied.

\begin{theorem}
Consider the case $\mathsf{N}= \mathsf{T} + 1, \mathsf{N} < \mathsf{M}$. Then, for any $\epsilon,\xi > 0,$ there exists a coding scheme $\mathcal{C}$ that achieves $\mathsf{T}$-node $\epsilon$-DP secure multiplication with
\begin{equation}\label{equ::achievability_lessN}
   {\tt LMSE} (\mathcal{C}) \leq {\eta^\mathsf{M}} \frac{(1+{\tt SNR}^*(\epsilon))^{\mathsf{M}} - \mathsf{M} {\tt SNR}^*(\epsilon)^{\mathsf{M}-1} - {\tt SNR}^*(\epsilon)^{\mathsf{M}}}{\left(1 + {\tt SNR}^*(\epsilon)\right)^\mathsf{M}}+\xi.
\end{equation}
\label{thm::achievability_lessN}    
\end{theorem}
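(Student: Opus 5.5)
We would build the scheme by specializing the layered-noise polynomial encoding used for Theorem~\ref{thm::achievability} to $\mathsf{N}=\mathsf{T}+1$. The target is a decoder that accesses only two "clean" quantities:
\begin{align}
Y_0=\prod_{i\in[\mathsf{M}]}(A_i+R_i), \qquad Y_1=\sum_{i\in[\mathsf{M}]} R_i\prod_{l\neq i}(A_l+R_l).
\end{align}
Here $R_i$ is drawn from the staircase mechanism of Lemma~\ref{lemma::pramod_results} with variance approximately $\sigma^*(\epsilon)^2$. By the linearity of the decoder, the achievable error is that of the best linear estimator of $\prod_i A_i$ from $(Y_0,Y_1)$, plus a vanishing term that is absorbed into $\xi$.

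Concretely, pick distinct nonzero evaluation points $x_1,\dots,x_{\mathsf{N}}$ and a small parameter $\delta>0$, and set
\begin{align}
\tilde R_i^{(j)} = R_i + \delta x_j\,(\text{a layer proportional to } R_i \text{ plus independent noise}) + \sum_{k=2}^{\mathsf{T}-1}\delta^k x_j^k S_{i,k}.
\end{align}
The $S_{i,k}$ are independent, zero mean, and of bounded variance, and the layers are chosen so that, after expanding $\prod_i(A_i+\tilde R_i^{(j)})$ as a polynomial in $\delta x_j$, the coefficient of $(\delta x_j)^0$ is $Y_0$ and the coefficient of $(\delta x_j)^1$ is $Y_1$, up to an independent zero-mean correction whose contribution can be controlled.

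Privacy comes first. Any $\mathsf{T}$ nodes observe $\mathsf{T}$ evaluations of a polynomial in $x_j$ of degree at most $\mathsf{T}-1$. The associated Vandermonde matrix is invertible, so this view is a bijective image of $(A_i+R_i, \text{higher-layer noises})$. The higher-layer noises are independent of $A_i$, so $\epsilon$-DP follows from post-processing of the staircase mechanism, exactly as in the proof of Theorem~\ref{thm::achievability}.

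For accuracy, the $\mathsf{N}=\mathsf{T}+1$ evaluations $\tilde V^{(j)}$ are samples of a polynomial in $\delta x_j$ of degree $\mathsf{M}(\mathsf{T}-1)$. We choose decoding weights $d_j$ with $\sum_j d_j x_j^k$ prescribed for $k=0,1$ and equal to zero for $2\le k\le \mathsf{T}$. This uses all $\mathsf{T}+1$ degrees of freedom of the Vandermonde system. The result isolates $\alpha Y_0+\beta Y_1$, with the $k=1$ weight scaled by $1/\delta$. The remaining terms carry factors $\delta^{k-1}$ with $k\ge \mathsf{T}+1$ and have bounded second moments, so they vanish as $\delta\to 0$.

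We then optimize $(\alpha,\beta)$ using second-order statistics. With $\eta=1$ after normalization and $s={\tt SNR}^*(\epsilon)$, the combination $Y_0-\lambda Y_1$ reweights the monomials of $\prod_i(A_i+R_i)$ according to how many $R$ factors they contain. By independence, all of these monomials are orthogonal. A direct computation of the optimal linear MMSE, treating the $2^{\mathsf{M}}$ monomials as orthogonal components with weights $\eta^{\mathsf{M}-|S|}\sigma^{2|S|}$, should produce
\begin{align}
\eta^{\mathsf{M}}\Bigl[1-\bigl(\mathsf{M}s^{\mathsf{M}-1}+s^{\mathsf{M}}\bigr)/(1+s)^{\mathsf{M}}\Bigr].
\end{align}
Finally, a worst case over $\mathcal P_\eta$ is needed. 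Only variances enter the computation, and the error is monotone in them, so it is maximized at $\mathrm{Var}(A_i)=\eta$.

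The main obstacle is the design of the first layer. It must make the $\delta$-linear coefficient equal exactly to $Y_1$, which requires correlating that layer with $R_i$, while still keeping the $\mathsf{T}$-node view an invertible function of $A_i+R_i$ and of noises independent of $A_i$. If the first layer is a pure multiple of $R_i$, two nodes can recover $R_i$ and privacy fails. Our first attempt would add to it an independent noise of large variance, and check that this noise contributes to the degree-one coefficient only through a zero-mean term orthogonal to $\prod_i A_i$, so that it can be dropped by the optimal weights. Failing that, we would push the independent masking into the top layer $S_{i,\mathsf{T}-1}$, using the fact that the decoder annihilates degrees $2,\dots,\mathsf{T}$. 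The second delicate point is the closed-form evaluation of the two-term LMMSE, and we expect it to reduce to the displayed expression after the binomial sums are collapsed.
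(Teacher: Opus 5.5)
Your reduction is the same as the paper's. The paper also decodes from $c_0=\prod_i(A_i+R_i)$ and $c_{\mathsf{T}}/\zeta_1(n)\to\sum_iR_i\prod_{l\neq i}(A_l+R_l)$, which is your pair $(Y_0,Y_1)$. Your closed form is also correct: with $p=\sigma^2/(\eta+\sigma^2)$ the monomial weights are binomial, and the error is $\eta^{\mathsf{M}}\bigl[1-(1-p)^{\mathsf{M}}\bigl(1+\mathsf{M}p/(1-p)\bigr)\bigr]$, which equals the claimed expression.

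\textbf{The gap is the encoding.} That is where the content of the theorem lies, and for $\mathsf{T}\ge 2$ neither of your designs works. Your polynomial has degree $\mathsf{T}-1$, so $\mathsf{T}$ colluders can invert the Vandermonde matrix and see every coefficient separately, in particular $A_i+R_i$ and $cR_i+U_i$.

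Your post-processing argument does not apply, because the degree-one coefficient is correlated with $R_i$. Subtracting $c(A_i+R_i)$ shows the view is equivalent to the pair $(A_i+R_i,\;A_i-U_i/c)$, i.e.\ two separate mechanisms on $A_i$. Apply the argument of Lemma~\ref{lemma::bound_snr} to the best unbiased linear combination of this pair: it gives $\bigl(\mathrm{Var}(R_i)^{-1}+c^2/\mathrm{Var}(U_i)\bigr)^{-1}\ge\sigma^*(\epsilon)^2$. So letting $\mathrm{Var}(R_i)\to\sigma^*(\epsilon)^2$ forces $\mathrm{Var}(U_i)/c^2\to\infty$.

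But then the degree-one coefficient of the product is $cY_1+\sum_iU_i\prod_{l\neq i}(A_l+R_l)$. The second term is uncorrelated with $\prod_iA_i$, but it sits in the same observed coefficient as $cY_1$. No choice of decoding weights can drop it: it is additive noise on your only observation of $Y_1$, with relative variance tending to infinity. The LMMSE then collapses to that from $Y_0$ alone, $\eta^{\mathsf{M}}\bigl(1-s^{\mathsf{M}}/(1+s)^{\mathsf{M}}\bigr)$ with $s={\tt SNR}^*(\epsilon)$, which misses exactly the $\mathsf{M}s^{\mathsf{M}-1}$ term. Keeping the ratio finite instead keeps $\mathrm{Var}(R_i)$ bounded away from $\sigma^*(\epsilon)^2$, so no trade-off reaches the target.

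Your fallback fails outright. If the masking is moved to $S_{i,\mathsf{T}-1}$, the degree-one coefficient $\delta cR_i$ is still individually visible, and together with $A_i+R_i$ it reveals $A_i$.

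\textbf{The missing idea} is a design in which the colluders have fewer equations than unknowns, with two separated scales. The paper takes
$p_i(x)=(A_i+R_i)+\zeta_2(n)\sum_{t=1}^{\mathsf{T}-1}S_{i,t}x^t+\zeta_1(n)R_ix^{\mathsf{T}}$,
of degree $\mathsf{T}$, with the copy of $R_i$ in the top coefficient, and with $\zeta_1/\zeta_2\to0$ and $\zeta_2^{\mathsf{T}/(\mathsf{T}-1)}/\zeta_1\to0$.

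On the privacy side, $\mathsf{T}$ colluders face $\mathsf{T}+1$ unknown coefficients. Their view reduces (Lemma~\ref{lemma::_z_z'}) to $A_1+(1+O(\zeta_1))R_1$, plus copies of $A_1$ masked by Laplace noise of scale $\zeta_2/\zeta_1\to\infty$, which gives $\epsilon$-DP by composition. Note that the masking must be much larger than the $R_i$-layer, the opposite of your $\delta^k$ scaling.

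On the decoding side, $\mathsf{T}+1$ evaluations of the product determine $c_0,\dots,c_{\mathsf{T}}$ up to $o(\zeta_1)$. The second scale condition makes the pure-$S$ cross terms in $c_{\mathsf{T}}$ negligible, so $c_{\mathsf{T}}/\zeta_1\to Y_1$. The decoder must therefore annihilate the large but useless degrees $1,\dots,\mathsf{T}-1$, not $2,\dots,\mathsf{T}$.

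For $\mathsf{T}=1$ your pure-$R_i$ first layer is fine, since a single node only sees $A_i+(1+\delta c x_j)R_i$; that is the paper's $\mathsf{T}=1$ scheme.
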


Theorem~\ref{thm::achievability_lessN} is proved in Appendix~\ref{sec::achievability_lessN}. We next establish a lower bound on the ${\tt LMSE}$.

\begin{theorem}

For the case $\mathsf{N}= \mathsf{T} + 1, \mathsf{N} < \mathsf{M}$. For any coding scheme ${\mathcal{C}}$ that achieves $\mathsf{T}$-node $\epsilon$-DP secure multiplication, there exists a distribution $\prod_{i \in [\mathsf{M}]} \mathbb{P}_{A_i}$ that satisfies Assumption \ref{assumption1} and
\begin{equation} 
    {\tt LMSE} (\mathcal{C}) \geq \eta^\mathsf{M} \frac{(1+{\tt SNR}^*(\epsilon))^{\mathsf{M}-\mathsf{T}} - {\tt SNR}^*(\epsilon)^{\mathsf{M}-\mathsf{T}}}{(1+{\tt SNR}^*(\epsilon))^\mathsf{M}}.
\end{equation}
\label{thm::converse_lessN}
\end{theorem}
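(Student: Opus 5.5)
The plan is to reduce the converse to a finite-dimensional quadratic optimization over the decoder coefficients and the second-order statistics of the noise, as in the proof of Theorem~\ref{thm::converse}, and then to exploit the fact that with $\mathsf{N}=\mathsf{T}+1$ nodes the noise vectors have only one degree of freedom beyond what any colluding set observes.

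\textbf{Step 1: worst-case input distribution and orthogonal expansion.} I take $A_1,\dots,A_\mathsf{M}$ independent, zero-mean, with $\mathrm{Var}(A_i)=\eta$ (for example Gaussian). Expanding $\tilde V-\prod_i A_i=\sum_j d_j\prod_i(A_i+\tilde R_i^{(j)})-\prod_iA_i$ over subsets $\mathcal S\subseteq[\mathsf{M}]$ (the indices that carry $A_i$), the monomials $\prod_{i\in\mathcal S}A_i$ are mutually orthogonal, are independent of the noise, and have second moment $\eta^{|\mathcal S|}$. This gives
\begin{align}
{\tt LMSE}(\mathcal{C})\ \ge\ \sum_{\mathcal S\subseteq[\mathsf{M}]}\eta^{|\mathcal S|}\,\mathbb{E}\Big[\Big(\sum_{j=1}^{\mathsf{N}} d_j\prod_{i\notin\mathcal S}\tilde R_i^{(j)}-\mathbbm{1}\{\mathcal S=[\mathsf{M}]\}\Big)^2\Big].
\end{align}
Because the noise blocks $\{\tilde R_i^{(j)}\}_j$ are independent across $i$, each expectation equals $\vec d^{\,T}\big(\bigodot_{i\notin\mathcal S}\mathbf K_i\big)\vec d$, where $\mathbf K_i$ is the $\mathsf{N}\times\mathsf{N}$ covariance of $(\tilde R_i^{(1)},\dots,\tilde R_i^{(\mathsf{N})})$ and $\bigodot$ denotes the Hadamard product. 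Summing over $\mathcal S$ collapses the bound to
\begin{align}
{\tt LMSE}\ \ge\ \vec d^{\,T}\Big(\bigodot_{i=1}^{\mathsf{M}}(\eta\vec 1\vec 1^T+\mathbf K_i)\Big)\vec d-2\eta^\mathsf{M}\vec 1^T\vec d+\eta^\mathsf{M}.
\end{align}

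\textbf{Step 2: DP constraint on $\mathbf K_i$.} From the DP lemma underlying Theorem~\ref{thm::converse} (Lemma~\ref{lemma::pramod_results}), for every $\mathsf{T}$-subset $\mathcal T$ and every $\vec c$ supported on $\mathcal T$, I will use
\begin{align}
\vec c^{\,T}\mathbf K_i\vec c\ \ge\ \sigma^*(\epsilon)^2(\vec 1^T\vec c)^2 .
\end{align}
With $\mathsf{N}=\mathsf{T}+1$, every vector in $\mathbb{R}^{\mathsf{N}}$ vanishing at some coordinate is covered by this. After normalizing by $\eta$, each matrix $\mathbf G_i=\eta\vec 1\vec 1^T+\mathbf K_i$ satisfies $\vec c^{\,T}\mathbf G_i\vec c\ge \eta(1+1/{\tt SNR}^*)(\vec 1^T\vec c)^2$ on all such $\vec c$. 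The only freedom left is a single direction $\vec v_i$, with all coordinates nonzero, along which $\mathbf K_i$ may be small; this direction is what lets noise cancel across all $\mathsf{N}$ nodes.

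\textbf{Step 3: the optimization.} Minimizing over $\vec d$ gives ${\tt LMSE}\ge \eta^\mathsf{M}\big(1-\eta^\mathsf{M}\vec 1^T\mathbf H^{-1}\vec 1\big)$, where $\mathbf H=\bigodot_i\mathbf G_i$. I then need to upper bound $\vec 1^T\mathbf H^{-1}\vec 1$ over all admissible $\{\mathbf K_i\}$.

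Heuristically, the target $(1+s)^{-\mathsf{M}}\big[(1+s)^{\mathsf{M}-\mathsf{T}}-s^{\mathsf{M}-\mathsf{T}}\big]$, with $s={\tt SNR}^*(\epsilon)$, reads as follows. A Hadamard product of $\mathsf{T}$ of the $\mathbf G_i$, each having one "cheap" direction, can at best make $\mathsf{T}$ factors nearly perfectly cancellable across the $\mathsf{N}=\mathsf{T}+1$ nodes. The remaining $\mathsf{M}-\mathsf{T}$ factors then each contribute an irreducible $s/(1+s)$ as in the one-node bound, giving $1-(s/(1+s))^{\mathsf{M}-\mathsf{T}}$ times $(1+s)^{-\mathsf{T}}$.

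To make this rigorous I would proceed in three moves:
\begin{itemize}
\item First, lower bound each $\mathbf G_i\succeq \mathbf G_i^{\min}$, where $\mathbf G_i^{\min}$ is the extremal matrix equal to $\eta(1+1/s)\vec 1\vec 1^T$ on the $\mathsf{T}$-dimensional subspaces and degenerate along $\vec v_i$. The Schur product theorem preserves this order under Hadamard products.
\item Second, show that the Hadamard product of rank-structured extremal matrices has at most $\mathsf{T}$ "free" factors whose degeneracy can jointly align. The argument is that a vector orthogonal to $\mathsf{T}+1$ generic directions in $\mathbb{R}^{\mathsf{T}+1}$ is zero, which is where $\mathsf{N}=\mathsf{T}+1$ is used.
\item Third, evaluate the resulting closed form.
\end{itemize}

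The main obstacle is the second move: proving that the supremum of $\vec 1^T\mathbf H^{-1}\vec 1$ over all choices of the directions $\vec v_i$ is attained when $\mathsf{T}$ factors are aligned and the rest are uncorrelated. I would first attempt induction on $\mathsf{M}$. The idea is to peel off one factor $\mathbf G_\mathsf{M}$ and bound $\vec 1^T(\mathbf H'\odot\mathbf G_\mathsf{M})^{-1}\vec 1$ via a Schur-complement inequality in terms of $\vec 1^T\mathbf H'^{-1}\vec 1$. The base case $\mathsf{M}=\mathsf{T}$ then reduces to the argument of Theorem~\ref{thm::converse}.
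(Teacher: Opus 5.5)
You have a genuine gap. Steps 1--3 are a correct reformulation. Your $\mathbf G_i$ is the Gram matrix of the paper's vectors $\vec w_i^{(j)}$, your quadratic form in $\vec d$ is the paper's $\|\mathcal D\|_F^2$, and the second-order constraint you keep is all the converse needs. But the whole content of the theorem is the upper bound on $\vec 1^T\mathbf H^{-1}\vec 1$, and there you give only a program whose key claims fail.

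\begin{itemize}
\item The first move is undefined. The admissible set has no least element: the constraint only says that every $(\mathsf N-1)\times(\mathsf N-1)$ principal submatrix of $\mathbf K_i-\sigma^*(\epsilon)^2\vec 1\vec 1^T$ is PSD. No $\mathbf G_i^{\min}$ is ever specified.
\item The second move is false. In the scheme behind Theorem~\ref{thm::achievability_lessN}, all $\mathsf M$ matrices $\mathbf K_i$ coincide, so all $\mathsf M>\mathsf T$ cheap directions align.
\item Concretely, take $(\mathsf T,\mathsf N,\mathsf M)=(1,2,3)$, $\eta=1$, ${\tt SNR}^*(\epsilon)=1$. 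Your candidate extremizer ($\tilde R_1^{(2)}=-\tilde R_1^{(1)}$, independent noise for $i=2,3$) gives $\mathbf H=8\mathbf I$ and ${\tt LMSE}=3/4$. The configuration of Theorem~\ref{thm::achievability_lessN} approaches $1/2$. The target is $3/8$.
\item The target is not known to be attained by any scheme; this is exactly the open gap between Theorems~\ref{thm::achievability_lessN} and~\ref{thm::converse_lessN}. So locating and evaluating an extremizer cannot be expected to reach it; you need an inequality.
\item Your heuristic is also internally inconsistent: it charges a factor $(1+{\tt SNR}^*)^{-1}$ to factors it calls perfectly cancellable.
\end{itemize}

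The missing idea is the paper's test-vector bound, which uses $\mathsf N=\mathsf T+1$ only through counting. For each $i\in[\mathsf T]$, Lemma~\ref{lemma::nullify} with $\mathcal S=\{i\}$ gives $\vec\alpha_i$ with $\vec\alpha_i^T\vec w_i^{(i)}=0$ and $\alpha_i[1]^2/\|\vec\alpha_i\|_2^2\ge 1/(1+{\tt SNR}^*(\epsilon))$. Contracting $\mathcal D$ along modes $1,\dots,\mathsf T$ annihilates node $j\le\mathsf T$ in mode $j$, so only node $\mathsf N$ survives:
\begin{align*}
\mathcal D\times_{\mathsf T}\vec\alpha_{\mathsf T}\cdots\times_1\vec\alpha_1
= c_{\mathsf N}\,\vec w_{\mathsf T+1}^{(\mathsf N)}\otimes\cdots\otimes\vec w_{\mathsf M}^{(\mathsf N)}
-\eta^{\mathsf T/2}\Bigl(\prod_{i=1}^{\mathsf T}\alpha_i[1]\Bigr)\,\vec e^{\,\otimes(\mathsf M-\mathsf T)} .
\end{align*}
Each $\tilde R_i^{(\mathsf N)}$ has variance at least $\sigma^*(\epsilon)^2$, so the one-node LMMSE bound for $\prod_{i>\mathsf T}A_i$ gives a squared norm of at least $\eta^{\mathsf M}\prod_{i\le\mathsf T}\alpha_i[1]^2/(1+{\tt SNR}'(\epsilon))$. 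The inequality $\|\mathcal D\times_{\mathsf T}\vec\alpha_{\mathsf T}\cdots\times_1\vec\alpha_1\|_F\le\|\mathcal D\|_F\prod_{i\le\mathsf T}\|\vec\alpha_i\|_2$ then finishes the proof. The factor $(1+{\tt SNR}^*)^{-\mathsf T}$ is the price of the test vectors, not the error of any configuration. In your language, take $Y_i$ in the span of $A_i$ and the $i$-th noise block, uncorrelated with $\tilde A_i^{(i)}$. Apply Cauchy--Schwarz to the error against $\prod_{i\le\mathsf T}Y_i$, conditionally on the blocks $i>\mathsf T$; only node $\mathsf N$'s term survives. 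No global analysis of $\mathbf H^{-1}$ is needed.
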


Theorem~\ref{thm::converse_lessN} is proved in Appendix~\ref{sec::converse_lessN}. Note that a gap exists between the achievable upper bound and the converse lower bound for the case $\mathsf{N}= \mathsf{T} + 1$ with $\mathsf{N} < \mathsf{M}$. Closing this gap represents an interesting direction for future work. Specifically, the multiplicative gap is given by
\begin{align}
\mathsf{Gap}({\tt SNR}^*(\epsilon))
=
\frac{(1+{\tt SNR}^*(\epsilon))^{\mathsf{M}} - \mathsf{M} {\tt SNR}^*(\epsilon)^{\mathsf{M}-1} - {\tt SNR}^*(\epsilon)^{\mathsf{M}}}
{(1+{\tt SNR}^*(\epsilon))^{\mathsf{M}-\mathsf{T}} - {\tt SNR}^*(\epsilon)^{\mathsf{M}-\mathsf{T}}}.
\end{align}
In the high-privacy regime, where ${\tt SNR}^*(\epsilon) \to 0$ and subsequently $\epsilon \to 0$, a Taylor expansion yields
\begin{align}
\mathsf{Gap}({\tt SNR}^*(\epsilon))
= \frac{1+\mathsf{M}{\tt SNR}^*(\epsilon)+O({\tt SNR}^*(\epsilon)^2)}{1+(\mathsf{M}-\mathsf{T}){\tt SNR}^*(\epsilon)+O({\tt SNR}^*(\epsilon)^2)}
= 1 + \mathsf{T} {\tt SNR}^*(\epsilon) + O({\tt SNR}^*(\epsilon)^2).
\end{align}
This indicates that the bounds are tight in the high-privacy regime. 


\begin{figure}[t]
    \centering
    \begin{subfigure}{0.48\textwidth}
        \centering
        \includegraphics[scale=0.47]{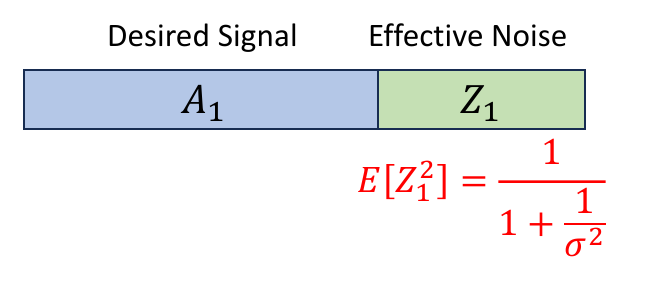}
        \caption{Illustration of estimating $A_1$.}
        \label{fig::example1}
    \end{subfigure}
    \begin{subfigure}{0.48\textwidth}
        \centering
        \includegraphics[scale=0.47]{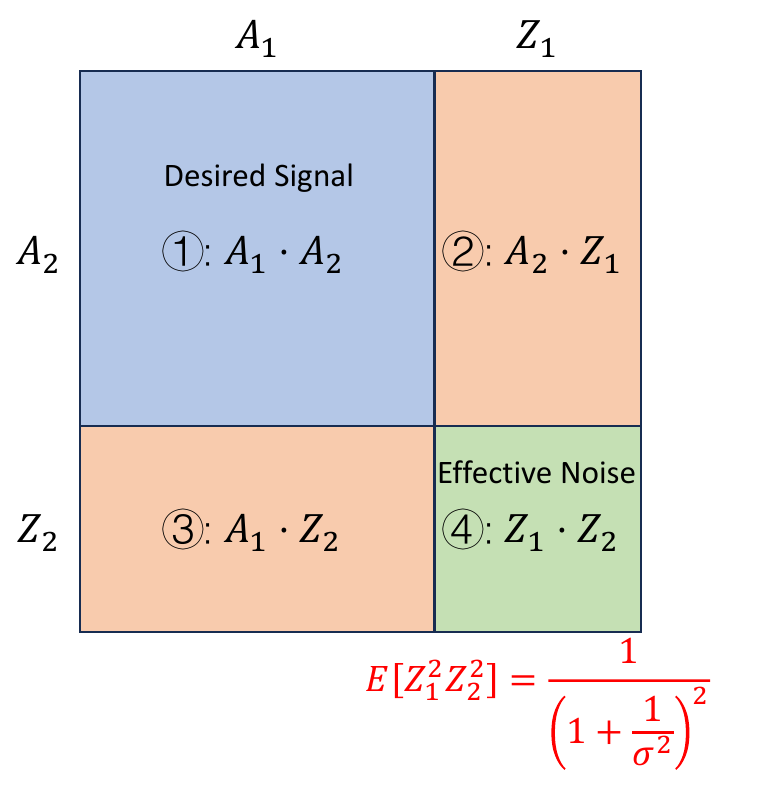}
        \caption{Illustration of estimating $A_1 \cdot A_2$ ($\circled{4}$ denotes the effective noise at the decoder)}
        \label{fig::example2}
    \end{subfigure}
    \caption{Geometric interpretation of Theorem \ref{thm::achievability} for $\mathsf{N}=2$ and $\mathsf{T}=1$, illustrating signal and noise components in the estimation of $A_1$ and $A_1 A_2$.}
\end{figure}

\section{Illustrative Examples}

This section illustrates the main ideas of this work through a series of representative examples. 
We start with the case $\mathsf{T}=1$, where each node aims to estimate the inputs based solely on its local data. 
First, we revisit the setting with two multiplicands ($\mathsf{M}=2$), which was studied in \cite{cadambe2023differentially}; we provide an intuitive geometric interpretation that facilitates extension to scenarios with more than two multiplicands. We then present examples with three multiplicands ($\mathsf{M}=3$), considering cases with either $\mathsf{T}=1$ or $\mathsf{T}=2$ colluding nodes. 
For simplicity, all inputs in this section are assumed to be normalized such that $\mathbb{E}[A_i^2] = 1$.

\subsection{$\mathsf{M}=2,\mathsf{N}=2, \mathsf{T}=1$}
\label{sec::first_example}
We consider a distributed setting with $\mathsf{N}=2$ nodes. Following the scheme in \cite{cadambe2023differentially},  node 1 receives
$\tilde{A}_1^{(1)}$ and $\tilde{A}_2^{(1)}$, and node 2 receives $\tilde{A}_1^{(2)}$ and $\tilde{A}_2^{(2)}$, where,   
for $\zeta>0$:
\begin{subequations}
\begin{align}
    &\tilde{A}_1^{(1)} = A_1 + R_1, 
    \quad 
    \tilde{A}_2^{(1)} = A_2 + R_2,\\
    & \tilde{A}_1^{(2)} = A_1 + (1 +\zeta) R_1, 
    \quad 
    \tilde{A}_2^{(2)} = A_2 + (1 +\zeta) R_2,
\end{align}    
\end{subequations}
$R_1, R_2$ are independent random variables, with the distribution of the $R_i$ chosen to satisfy two conditions:
(i) the privacy mechanism $A_i \mapsto A_i+ R_i$ ensures $\epsilon$-DP;
(ii) among all possible random variables satisfying (i), $R_i$ has the minimal possible variance, corresponding to the staircase mechanism \cite{geng2015optimal}.
The details for $R_i$ are specified in Lemma~\ref{lemma::pramod_results} and Section \ref{sec::dp_analysis}, and we denote its variance by $\mathbb{E}[R_i^2]=\sigma^2 \approx \sigma^{*}(\epsilon)^2$.
For $i=1, 2,$ the input $A_i$ satisfies $\epsilon$-DP due to the design of the random noise $R_i$.

The resulting computation outputs at the two nodes are, respectively, given by
\begin{subequations}\label{equ::first_example}
\begin{align}
    \tilde{V}^{(1)} &= \left(A_1 + R_1 \right) \left(A_2 + R_2\right), \\
    \tilde{V}^{(2)} &= \left(A_1 + (1 +\zeta) R_1\right) \left(A_2 + (1 +\zeta) R_2\right).
\end{align}    
\end{subequations}

\cite{cadambe2023differentially} shows that, under the scheme in \eqref{equ::first_example} and the $\epsilon$-DP constraint, the optimal linear mean squared error, denoted by ${\tt LMSE}^*$, is achieved asymptotically as  $\zeta \to 0$:
\begin{align}
    {\tt LMSE}^* =  \frac{1}{(1+\frac{1}{\sigma^2})^2} \approx \frac{1}{\left(1+ {\tt SNR}^*(\epsilon)\right)^2},
    \label{eq:LMSE_another}
\end{align}
where ${\tt SNR}^*(\epsilon)=\frac{1}{\sigma^{*}(\epsilon)^2}$. 
The original proof in \cite{cadambe2023differentially} relies on directly computing the signal-to-noise ratio of the target product, which involves calculating the determinant of the covariance matrix. However, this approach provides limited intuition and becomes cumbersome when generalizing to more multiplicands. We here present an alternative, more constructive proof sketch that (i) offers a clearer geometric and intuitive interpretation of the expression \eqref{eq:LMSE_another} and the corresponding decoding process, and (ii) naturally generalizes to multiplication involving a larger number of multiplicands.

First, consider a linear minimum mean-squared error (MMSE) estimator for the random variable $A_i$ from a noisy observation $A_i + R_i$:
\begin{align}
    \hat{A}_i = \alpha (A_i + R_i),
\end{align}
where $\alpha = \frac{1}{1+\sigma^ 2}$. 
Equivalently, we may write
\begin{align}
\hat{A}_i = A_i + Z_i,
\end{align}
with $Z_i=-\frac{\sigma^ 2}{1+\sigma^2} A_i + \frac{1}{1+\sigma^2} R_i$.
A direct calculation shows that
\begin{align}
    \mathbb{E}[(\hat{A}_i-A_i)^2] = \mathbb{E}[Z_i^2] = \frac{\sigma^4}{(1+\sigma^2)^2}+\frac{\sigma^2}{(1+\sigma^2)^2} = \frac{1}{1+\frac{1}{\sigma^2}} \approx \frac{1}{1+{\tt SNR}^*(\epsilon)}.
\end{align}
Thus, the estimation error for each individual $A_i$ is approximately $1/(1+{\tt SNR}^*(\epsilon))$, and $Z_i$ can be interpreted as the effective additive estimation noise after the linear MMSE scaling, as illustrated in Fig.~\ref{fig::example1}.

We now extend this single-variable perspective to analyze the estimation of the product $A_1 \cdot A_2$ from the two local products $\tilde V^{(1)}$ and $\tilde V^{(2)}$. From $\tilde V^{(1)}$, the decoder can construct
\begin{align}
    \alpha^2 \tilde{V}^{(1)} &= \alpha^2 (A_1 + R_1)(A_2 + R_2) = (A_1 + Z_1)(A_2 + Z_2).\\&= \underbrace{A_1 A_2}_{\circled{1}} + \underbrace{A_1 Z_2 + A_2 Z_1}_{\circled{2}+\circled{3}} + \underbrace{Z_1 Z_2}_{\circled{4}} 
\end{align}


The identity is visualized in Fig.~\ref{fig::example2} -- in effect, the quantity $\alpha^2\tilde{V}^{(1)} = (A_1+Z_1)(A_2+Z_2)$ can be interpreted as the area of a rectangle, which is the sum of the areas of four rectangles $A_1 A_2 \equiv \circled{1}, A_2 Z_1 \equiv \circled{2}, A_1 Z_2 \equiv \circled{3}$ and $Z_1 Z_2 \equiv \circled{4}.$   Observe that the rectangle $\circled{1}$ contains the desired quantitiy $A_1A_2$, and the optimal mean squared error $\frac{1}{(1+{1}/{\sigma^2})^2}$, is in fact, equal to $\mathbb{E}[(Z_1Z_2)^2]$. 
Therefore, to achieve this mean squared error up to a vanishing error, it suffices to show that the decoder can recover
$
\tilde{V} = A_1 A_2 - Z_1 Z_2 \equiv \circled{1} - \circled{4}
$
from $\tilde{V}^{(1)}$ and $\tilde{V}^{(2)}$ by canceling $\circled{2}$ and $\circled{3}$.




Using both $\tilde{V}^{(1)}$ and $\tilde{V}^{(2)}$, the decoder can compute,
$
\frac{\tilde{V}^{(2)}-\tilde{V}^{(1)}}{\zeta} = R_1(A_2+R_2)+R_2(A_1+R_1)+O(\zeta).
$
After scaling this by $\alpha$, the decoder can get
\begin{align}
& \quad \alpha \left(R_1(A_2+R_2)+R_2(A_1+R_1)\right) + O(\zeta) \\
&= R_1(A_2+Z_2)+R_2(A_1+Z_1)  + O(\zeta)\\
&=\frac{2}{\alpha} (A_1 + Z_1)(A_2 + Z_2) -A_1 (A_2+Z_2) -A_2 (A_1+Z_1) + O(\zeta).
\end{align}
Since $(A_1 + Z_1)(A_2 + Z_2)$ is already available from $\alpha^2\tilde V^{(1)}$, the decoder can recover the quantity $A_1 (A_2 + Z_2) + A_2 (A_1 + Z_1)$, which corresponds exactly to $2 \circled{1}+\circled{2} + \circled{3}$ in the rectangle notation.

Subtracting the term $2 \circled{1}+\circled{2} + \circled{3}$ from $\alpha^2 \tilde{V}^{(1)} \equiv \circled{1} + \circled{2} + \circled{3} + \circled{4}$ yields $A_1 A_2 - Z_1 Z_2 \equiv \circled{1} - \circled{4}$, up to an error of $O(\zeta)$, as desired. Consequently, a mean squared error arbitrarily close to $\mathbb{E}[Z_1^2 Z_2^2] = \frac{1}{(1+1/\sigma^2)^2}$ can be achieved.


\subsection{$\mathsf{M}=3,\mathsf{N}=3, \mathsf{T}=1$}\label{sec::example2}
We consider a previously unsolved computing setting where $\mathsf{N}=3$ nodes collaboratively compute the product of $\mathsf{M}=3$ private inputs $A_1, A_2, A_3$, subject to an $\epsilon$-DP constraint. Our encoding scheme generalizes the two-node construction described in \eqref{equ::first_example}, and for a fixed $\zeta > 0$, we set
\begin{subequations}
\begin{align}
    &\tilde{A}_1^{(1)} = A_1 + R_1, 
    \quad 
    \tilde{A}_2^{(1)} = A_2 + R_2, 
    \quad 
    \tilde{A}_3^{(1)} = A_3+ R_3,\\
    & \tilde{A}_1^{(2)} = A_1 + (1 +\zeta) R_1, 
    \quad 
    \tilde{A}_2^{(2)} = A_2 + (1 +\zeta) R_2, 
    \quad 
    \tilde{A}_3^{(2)} = A_3+ (1 +\zeta)R_3, \\
    & \tilde{A}_1^{(3)} = A_1 + (1 +2\zeta) R_1, 
    \quad 
    \tilde{A}_2^{(3)} = A_2 + (1 +2\zeta) R_2, 
    \quad 
    \tilde{A}_3^{(3)} = A_3+ (1 +2\zeta)R_3,
\end{align}    
\end{subequations}
where $\{R_i\}_{i=1}^3$ are independent random variables with $\mathbb{E}[R_i^2]=\sigma^2 \approx \sigma^{*}(\epsilon)^2$, chosen to minimize variance subject to the $\epsilon$-DP constraint. 
Similarly, the random variables $R_i$ serve to ensure single-node DP for each input.
Analogously to the two-node case, let $\alpha$ denote the single-variable MMSE scaling, and we have
\begin{align}
    A_i + Z_i = \alpha (A_i + R_i), \qquad i=1,2,3,
\end{align}
where $\mathbb{E}[Z_i^2]=1/(1+1/\sigma^2)$. 

Each node computes the product of its received inputs:
\begin{subequations}
\begin{align}
    \tilde{V}^{(1)} &= \left(A_1 + R_1 \right) \left(A_2 + R_2\right) \left(A_3+ R_3\right), \\
    \tilde{V}^{(2)} &= \left(A_1 + (1 +\zeta) R_1\right) \left(A_2 + (1 +\zeta) R_2\right) \left(A_3+ (1 +\zeta)R_3\right), \\
    \tilde{V}^{(3)} &= \left(A_1 + (1 +2\zeta) R_1\right) \left( A_2 + (1 +2\zeta) R_2 \right) \left( A_3+ (1 +2\zeta)R_3 \right).
\end{align}    
\end{subequations}

Based on $\tilde{V}^{(1)}$, $\tilde{V}^{(2)}$, and $\tilde{V}^{(3)}$, the decoder can derive
\begin{subequations}
\begin{align}
    \frac{\tilde{V}^{(2)}-\tilde{V}^{(1)}}{\zeta} 
    &= 
    (A_1+R_1)(A_2+R_2) R_3 
       + (A_1+R_1)(A_3+R_3) R_2 
       + (A_2+R_2)(A_3+R_3) R_1 
       + O(\zeta), \\
    \frac{\left(\tilde{V}^{(3)}-\tilde{V}^{(2)} \right) - \left(\tilde{V}^{(2)}-\tilde{V}^{(1)} \right)}{2\zeta^2} 
    &= (A_1+R_1) R_2 R_3 
       + (A_2+R_2) R_1 R_3 
       + (A_3+R_3) R_1 R_2 
       + O(\zeta).
\end{align}    
\end{subequations}

Taking the limit as $\zeta \to 0$, the decoder can therefore form the combinations
\begin{subequations}
\begin{align}
    C_0 &= (A_1+R_1)(A_2+R_2)(A_3+R_3)\\
    C_1 &= (A_1+R_1)(A_2+R_2) R_3 
       + (A_1+R_1)(A_3+R_3) R_2 
       + (A_2+R_2)(A_3+R_3) R_1, \\
    C_2 &= (A_1+R_1) R_2 R_3 
       + (A_2+R_2) R_1 R_3 
       + (A_3+R_3) R_1 R_2.
\end{align} \label{equ::Ck_example}
\end{subequations}
Applying the MMSE rescaling \(R_i = \frac{1}{\alpha}(A_i+Z_i) - A_i\) from Section~\ref{sec::first_example}, we substitute into \(C_0, C_1,\) and \(C_2\); expanding the resulting expressions then derives the corresponding linear relations.
\begin{subequations}
\begin{align}
C_0 &= \alpha^{-3} D_0,\\
C_1 &= 3\alpha^{-3} D_0 - \alpha^{-2} D_1,\\
C_2 &= 3\alpha^{-3} D_0 - 2\alpha^{-2}D_1 + \alpha^{-1}D_2,
\end{align}\label{equ::relation_Ck_Dk}
\end{subequations}
where 
\begin{subequations}
\begin{align}
D_0 &= (A_1+Z_1)(A_2+Z_2)(A_3+Z_3),  \\
D_1 &= (A_1+Z_1)(A_2+Z_2) A_3 
     + (A_1+Z_1)(A_3+Z_3) A_2 
     + (A_2+Z_2)(A_3+Z_3) A_1, \\
D_2 &= (A_1+Z_1) A_2 A_3 
     + (A_2+Z_2) A_1 A_3 
     + (A_3+Z_3) A_1 A_2.
\end{align}
\end{subequations}
\eqref{equ::relation_Ck_Dk} can be inverted since the coefficient matrix is triangular with nonzero diagonal entries. Consequently, the decoder can recover \(D_0, D_1,\) and \(D_2\) as
\begin{align}
D_0=\alpha^3 C_0,\qquad
D_1=\alpha^2(3C_0-C_1),\qquad
D_2=\alpha\big(C_2+3C_0-2C_1\big).
\end{align}


The decoder can then derive,
\begin{align}
    \tilde{V} = D_0 - D_1 + D_2 =  A_1 A_2 A_3 + Z_1 Z_2 Z_3,
\end{align}
Observe that this way, the effective noise is $\tilde{V}-A_1A_2A_3 = Z_1 Z_2 Z_3$.

Due to the independence between the residuals $Z_1, Z_2, Z_3$, the ${\tt LMSE}$ of the decoder is as follows.
\begin{align}
    \mathbb{E}[|\tilde{V} - A_1 A_2 A_3|^2] = \mathbb{E}[Z_1^2 Z_2^2 Z_3^2] = \mathbb{E}[Z_1^2] \mathbb{E}[Z_2^2] \mathbb{E}[Z_3^2] = \frac{1}{(1+\frac{1}{\sigma^2})^3} \approx \frac{1}{\left(1+ {\tt SNR}^*(\epsilon)\right)^3}.
\end{align}
Thus, the achievable ${\tt LMSE}$ equals $\frac{1}{(1+{1}/{\sigma^2})^3}$, which naturally generalizes the two-multiplicand result: the estimation error scales multiplicatively with the single-variable MMSE terms, since the final residual is the product of the individual estimation noises.

\subsection{$\mathsf{M}=3,\mathsf{N}=5, \mathsf{T}=2$}
We now extend the preceding analysis to the general case of $\mathsf{T}>1$, focusing on $\mathsf{T}=2$, and show that the achievable ${\tt LMSE}$ remains $\frac{1}{(1+{1}/{\sigma^2})^3}$, consistent with the result in Theorem \ref{thm::achievability}.
In this setting, any pair of $\mathsf{T}=2$ nodes may collude to infer a private input, and thus a $2$-node $\epsilon$-DP guarantee is required.  Under a standard Shamir secret sharing approach, achieving perfect information-theoretic privacy would require at least $\mathsf{M}\mathsf{T}+1 = 7$ nodes (for $\mathsf{M}=3$ and $\mathsf{T}=2$). In contrast, the proposed scheme ensures the 2-node $\epsilon$-DP guarantee with only $\mathsf{N} = 5$ nodes.

To construct the scheme, for $\zeta>0$, we define three encoding polynomials
\begin{subequations}
\begin{align}
f_1(x) &= A_1+ \underbrace{R_1}_{\text{First Layer}} + \underbrace{\zeta^{3/4} S_1 x}_{\text{Second Layer}} + \underbrace{\zeta R_1 x^2}_{\text{Third Layer}},\\
f_2(x) &= A_2+R_2+ \zeta^{3/4} S_2 x + \zeta R_2 x^2,\\ f_3(x) &= A_3+R_3+ \zeta^{3/4} S_3 x + \zeta R_3 x^2,
\end{align}
\end{subequations}
where $\{R_i,S_i\}_{i=1}^3, $ are mutually independent random variables, with the marginals of $R_{i}|_{i=1}^{3}$ chosen as the DP-calibrated noise variables with minimal variance as introduced in Section \ref{sec::first_example}, and $\{S_i\}_{i=1}^3$ random variables with unit variance. 
Each node $i \in \{1, \dots, 5\}$ stores the triple $(f_1(x_i), f_2(x_i), f_3(x_i))$, where $x_i = i$.

Note that the additive noise at each node can be interpreted as a superposition of three layers. Based on the distributions of $\{R_i\}_{i=1}^3$, the first and third layers implement an optimal staircase mechanism, while the second layer functions similarly to a secret sharing scheme. 
As shown in Section~\ref{sec::achievable_scheme}, in the limit $\zeta \to 0$, the proposed scheme asymptotically achieves the optimal privacy–accuracy trade-off.

We now provide an informal privacy analysis of the scheme.
Without loss of generality, we assume that nodes 1 and 2 collaborate to infer $A_1$, and our goal is to characterize the privacy loss. 
Under this setting, the available observations are given as follows.
\begin{subequations}
\begin{align}
\tilde{A}_1^{(1)} & = A_1+ {R_1} + {\zeta^{3/4} S_1 } + {\zeta R_1 }, \\
\tilde{A}_1^{(2)} & = A_1+ {R_1} + 2{\zeta^{3/4} S_1} + 4{\zeta R_1}.
\end{align}    
\end{subequations}
As $\zeta \to 0$, the magnitude of the third layer diminishes relative to the first two layers. Consequently, $\tilde{A}_1^{(1)}$ and $\tilde{A}_1^{(2)}$ become statistically close to 
$A_1 + R_1 + \zeta^{3/4} S_1$ and $A_1 + R_1 + 2 \zeta^{3/4} S_1$, respectively. 
In other words, $\tilde{A}_1^{(1)}$ and $\tilde{A}_1^{(2)}$ can be viewed as degraded versions of $A_1 + R_1$. 
Since $R_1$ is designed to satisfy $\epsilon$-DP, the scheme asymptotically achieves $2$-node $\epsilon$-DP.

We next outline the accuracy analysis. 
Define the product polynomial
\begin{align}\label{equ::product_polynomial}
    f(x) = f_1(x) f_2(x) f_3(x) = \sum_{k=0}^6 c_k x^k,
\end{align}
whose coefficients admit the expansion
{
\small
\begin{subequations}
\begin{align}
c_0 &= (A_1+R_1)(A_2+R_2)(A_3+R_3), \\[0.6ex]
c_1 &= \zeta^{3/4} \Big(
       (A_1+R_1)(A_2+R_2) S_3
     + (A_1+R_1)(A_3+R_3) S_2
     + (A_2+R_2)(A_3+R_3) S_1 \Big), \\[0.6ex]
c_2 &= \zeta \Big(
       (A_1+R_1)(A_2+R_2) R_3
     + (A_1+R_1)(A_3+R_3) R_2
     + (A_2+R_2)(A_3+R_3) R_1 \Big) \notag\\
     &\quad + \zeta^{3/2} \Big(
       (A_1+R_1) S_2 S_3
     + (A_2+R_2) S_1 S_3
     + (A_3+R_3) S_1 S_2 \Big), \\[0.6ex]
c_3 &= \zeta^{7/4} \Big(
       (A_1+R_1) S_2 R_3
     + (A_1+R_1) S_3 R_2
     + (A_2+R_2) S_1 R_3
     + (A_2+R_2) S_3 R_1
     + (A_3+R_3) S_1 R_2
     + (A_3+R_3) S_2 R_1 \Big) \notag\\
     &\quad + \zeta^{9/4} S_1 S_2 S_3, \\[0.6ex]
c_4 &= \zeta^{2} \Big(
       (A_1+R_1) R_2 R_3
     + (A_2+R_2) R_1 R_3
     + (A_3+R_3) R_1 R_2 \Big) + \zeta^{5/2} \Big(
       S_1 S_2 R_3
     + S_1 S_3 R_2
     + S_2 S_3 R_1 \Big), \\[0.6ex]
c_5 &= \zeta^{11/4} \Big(
       S_1 R_2 R_3
     + S_2 R_1 R_3
     + S_3 R_1 R_2 \Big), \\[0.6ex]
c_6 &= \zeta^{3} R_1 R_2 R_3.
\end{align}
\end{subequations}}


Recovering all coefficients $\{c_k\}_{k=0}^6$ would require $7$ evaluations of $f(x)$.  However, to estimate the target product $A_1 \cdot A_2 \cdot A_3$ with the optimal ${\tt LMSE}$, it suffices to recover only the coefficient triple $\{c_0,c_2,c_4\}$, as these coefficients contain the terms $C_0, C_1, C_2$ (defined in \eqref{equ::Ck_example}), the combinations used in the decoding procedure described in Section \ref{sec::example2}.  
As $\zeta \to 0$, the higher-order coefficients $c_5$ and $c_6$ become negligible relative to $c_0, \dots, c_4$, so that $\mathsf{N} = 5$ evaluations of $f(x)$ suffice to recover $c_0$, $c_2$, and $c_4$. Computing $c_2 / \zeta$ and $c_4 / \zeta^2$ then yields the desired symmetric sums in the $\zeta \to 0$ limit. Proceeding similarly to the $\mathsf{T} = 1$ case, one can show that the achievable ${\tt LMSE}$ is the optimal value $\frac{1}{(1+{1}/{\sigma^2})^3}$.

The $\mathsf{N}=5$ construction is efficient, relying on two non-trivial design choices. First, the careful scaling of powers of $\zeta$ ($\lim_{\zeta \to 0} (\zeta^{3/4})^2/\zeta =0$) separates the desired terms from higher-order contamination $(c_5, c_6)$, enabling the recovery of $(c_0,c_2,c_4)$ from only five evaluations. Second, the combination of DP-calibrated noise ($R_i$) and secret-sharing components ($\zeta^{3/4} S_i x$) simultaneously ensures privacy and decodability. Conceptually, the scheme mirrors a GRS code: nodes store polynomial evaluations, and decoding extracts the relevant low-order coefficients while higher-degree contributions vanish as $\zeta\to 0$, achieving the optimal ${\tt LMSE}$.


\section{Achievability: Proof of Theorem \ref{thm::achievability}}
\label{sec::achievable_scheme}

In this section, we present an $\mathsf{N}$-node secure multiplication coding scheme that achieves $\mathsf{T}$-node $\epsilon$-DP for the multiplication of $\mathsf{M}$ multiplicands, and the scheme is detailed in Section~\ref{sec::coding_schemes}.  
Specifically, we focus on the case $\mathsf{N} = (\mathsf{M}-1)\mathsf{T}+1 $.  
The $\mathsf{T}$-node $\epsilon$-DP guarantee is established in Section~\ref{sec::dp_analysis}.  
In Section~\ref{sec::accuracy}, we analyze the estimation accuracy by deriving the resulting ${\tt LMSE}$ for the product estimator. 

\subsection{Encoding Schemes}
\label{sec::coding_schemes}

For each $i\in [\mathsf{M}]$, let $R_i, \{ S_{i,t} \}_{t=1}^{\mathsf{T}-1}$ be mutually statistically independent random variables with zero mean. 
The specific distributions of $R_i, \{ S_{i,t} \}_{t=1}^{\mathsf{T}-1}$ will be specified in Section \ref{sec::dp_analysis} to ensure ${\mathsf{T}}$-node $\epsilon$-DP for the fixed DP parameter $\epsilon$.
We now introduce a sequence of coding schemes indexed by the positive integer $n$, that achieves the privacy-utility tradeoff described in Theorem \ref{thm::achievability} as $n\rightarrow \infty$.

Let $\zeta_1{(n)},\zeta_2{(n)}$ be strictly positive sequences such that:
\begin{equation} 
    \lim_{n \to \infty} \frac{\zeta_1{(n)}}{\zeta_2{(n)}} = \lim_{n \to \infty} \zeta_2{(n)} = \lim_{n\to\infty} \frac{\zeta_2{(n)}^{\mathsf{T}/(\mathsf{T}-1)}}{\zeta_1{(n)}} = 0.
    \label{eq:limits}
\end{equation} 
Observe that the conditions in \eqref{eq:limits} directly imply that $\lim_{n\to\infty} \zeta_1{(n)} = 0$.
A concrete example satisfying all conditions is $\zeta_2(n)=1/n$ and $\zeta_1(n)=1/n^{\beta_1}$ with $1 < \beta_1 < \mathsf{T}/(\mathsf{T}-1)$; e.g., $\beta_1 = (2\mathsf{T}-1)/(2(\mathsf{T}-1))$.

For the $i$-th multiplicand $A_i$, in the regime $\mathsf{T}>1$, we introduce the following polynomial:
\begin{align}\label{equ::encoding_polynomial}
    p_i(x) = (A_i + R_i) + \zeta_2(n)\sum_{t=1}^{\mathsf{T}-1} S_{i,t} x^t + \zeta_1(n) R_i x^\mathsf{T}.
\end{align}

In the special case $\mathsf{T}=1$, we introduce the following polynomial:
\begin{align}\label{equ::encoding_polynomial_t=1}
    p_i(x) = (A_i + R_i) + \zeta_1(n) R_i x.
\end{align}

Select $\mathsf{N}$ distinct real numbers $\{x_j\}_{j=1}^{\mathsf{N}}$. For each $j \in [\mathsf{N}]$, node $j$ receives a noisy version of the inputs given by
\begin{align}
    \tilde{A}_i^{(j)} = p_i(x_j), \qquad i  \in [\mathsf{M}].
\end{align}

The computation result of node $j \in [\mathsf{N}]$ is as follows.
\begin{align}
    \tilde{V}^{(j)}=\prod_{i=1}^{\mathsf{M}} p_i(x_j) = p(x_j),
\end{align}
where $p(x) = \prod_{i=1}^{\mathsf{M}} p_i(x)$ is the product polynomial, which is explicitly given in \eqref{equ::general_product_polynomial_T=1} and \eqref{equ::general_product_polynomial}.

\begin{Remark}
\label{remark::error_correction}
The coding scheme can be viewed as an $(\mathsf{N}, \mathsf{T}+1)$ real-valued GRS code with messages: 
$$\{A_i + R_i,\; S_{i,1},\; \ldots,\; S_{i,\mathsf{T}-1},\; R_i\},$$ 
and the corresponding GRS column multipliers (weights) are
$$
\{1,\; \zeta_2(n),\; \ldots,\; \zeta_2(n),\; \zeta_1(n)\}.
$$



\end{Remark}

\subsection{Differential Privacy Analysis}
\label{sec::dp_analysis}

Our proof is based on a specific realization of random variables $R_i, \{ S_{i,t} \}_{t=1}^{\mathsf{T}-1}$ for each $i \in [\mathsf{M}]$, and then establishing that the resulting scheme satisfies $\mathsf{T}$-node $\epsilon$-DP.
Due to the symmetry of the construction and the independence of the multiplicands, it suffices to only consider the privacy guarantee for the first multiplicand $A_1$. Specifically, we consider the worst-case scenario where an arbitrary set of $\mathsf{T}$ nodes colludes to infer the value of $A_1$.
A similar argument also applies to other multiplicands.

We begin by presenting a useful result from Theorem 7 in \cite{geng2015optimal}, which characterizes the minimal noise variance required to achieve single-node $\epsilon$-DP. This result is particularly relevant for the design of $R_i$.
\begin{lemma}[Theorem 7 in \cite{geng2015optimal}]\label{lemma::pramod_results}
For $\epsilon > 0,$ let $\mathcal{S}_{\epsilon}(\mathbb{P})$ denote the set of all real-valued random variables that satisfy $\epsilon$-DP, that is, $X \in\mathcal{S}_{\epsilon}(\mathbb{P})$ if and only if:
\begin{align}
    \sup \frac{\mathbb{P}(X+x' \in \mathcal{A})}{\mathbb{P}(X+x'' \in \mathcal{A})} \leq e^{\epsilon} 
\end{align}
where the supremum is over all constants $x',x'' \in \mathbb{R}$ that satisfy $|x'-x''| \leq 1$ and all subsets $\mathcal{A} \subset \mathbb{R}$ that are in the Borel $\sigma$-field. Let $L^2(\mathbb{P})$ denote the set of all real-valued random variables with finite variance. Then 
\begin{align}
    \inf_{X \in \mathcal{S}_\epsilon(\mathbb{P}) \cap L^2(\mathbb{P})} \mathbb{E}\left[(X-\mathbb{E}[X])^2\right]= \sigma^{*}(\epsilon)^2, 
\end{align}
where $\sigma^{*}(\epsilon)^2$ is given in \eqref{equ::optimalsigma}.
\end{lemma}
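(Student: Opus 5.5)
Since this is Theorem 7 of \cite{geng2015optimal}, the paper can simply cite it. If I were to reprove it, I would establish the two inequalities separately: an explicit construction attaining $\sigma^{*}(\epsilon)^2$, and a structural argument showing that no $\epsilon$-DP noise with finite variance does better. Both work in the additive-noise setting of the statement, and the cost $\mathbb{E}[(X-\mathbb{E}[X])^2]$ is invariant under shifting $X$. So I may assume $\mathbb{E}[X]=0$ and minimize $\mathbb{E}[X^2]$.

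\emph{Upper bound (construction).} For $\gamma\in(0,1)$ I would take the staircase density
\[
f_\gamma(x)=a(\gamma)\,e^{-k\epsilon}\ \text{ for } |x|\in[k,k+\gamma),\qquad f_\gamma(x)=a(\gamma)\,e^{-(k+1)\epsilon}\ \text{ for } |x|\in[k+\gamma,k+1),
\]
with $k\ge 0$ an integer and $a(\gamma)$ the normalizing constant.
\begin{itemize}
\item \emph{Privacy.} Any two points at distance at most $1$ are separated by at most one ``step'' of ratio $e^{\epsilon}$, so $f_\gamma(x)/f_\gamma(x+u)\le e^{\epsilon}$ for $|u|\le1$. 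This pointwise bound gives the set-wise DP condition by integrating over $\mathcal{A}$.
\item \emph{Variance.} Summing the geometric series gives a closed form for $\mathbb{E}[X^2]$ as a function of $\gamma$ and $\epsilon$.
\item \emph{Optimization.} Differentiating in $\gamma$, I expect a cubic-type stationarity condition whose solution is $\gamma^*=1/(1+e^{\epsilon/3})$. Substituting $\gamma^*$ should reproduce \eqref{equ::optimalsigma}; the $2^{2/3}e^{-2\epsilon/3}$ terms are the fingerprint of this cube root.
\end{itemize}

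\emph{Lower bound (reduction to staircases).} Let $X\in\mathcal{S}_\epsilon(\mathbb{P})\cap L^2(\mathbb{P})$. The steps, in order:
\begin{enumerate}
\item \emph{Symmetrize.} Replace $\mathbb{P}_X$ by $\tfrac12(\mathbb{P}_X+\mathbb{P}_{-X})$. DP constraints are preserved under mixtures of distributions with the same shift structure, and the second moment is unchanged.
\item \emph{Make the distribution monotone and absolutely continuous.} First, the DP condition forces mutual absolute continuity with all shifts of size at most $1$. Hence $\mathbb{P}_X$ has a density $f$ with $e^{-\epsilon}\le f(x+u)/f(x)\le e^{\epsilon}$ for a.e.\ $x$ and $|u|\le1$. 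Then show that moving mass toward the origin can only lower $\mathbb{E}[X^2]$ without breaking DP, so $f$ may be taken symmetric and nonincreasing in $|x|$.
\item \emph{Reduce to one period.} Fix the profile $g=f|_{[0,1)}$. The smallest tail compatible with the ratio constraints is $f(x+k)=e^{-k\epsilon}g(x)$, so the optimal density is determined by $g$ on a single period.
\item \emph{Reduce $g$ to two levels.} On $[0,1)$, $g$ is nonincreasing with $g(0)/g(1^-)\le e^{\epsilon}$. Minimizing a linear functional of $g$ (after normalization) over this convex set, the extreme points are two-level step functions $a\mathbf{1}_{[0,\gamma)}+ae^{-\epsilon}\mathbf{1}_{[\gamma,1)}$. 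These are exactly the staircases of the upper-bound step, so the lower bound matches $\sigma^*(\epsilon)^2$.
\end{enumerate}

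\emph{Main obstacle.} The hardest step is step 2: justifying rigorously that an arbitrary, possibly singular or non-monotone, DP distribution can be replaced by a symmetric monotone density. I would try a discretization argument first. Partition $\mathbb{R}$ into cells of width $1/D$ and average $f$ on each cell, which keeps the ratio constraints up to a $(1+o(1))$ factor. Then apply an exchange argument on the resulting finite-dimensional sequences: whenever monotonicity fails, swap mass between two cells so that the heavier mass sits closer to the origin. Finally let $D\to\infty$, controlling the variance through the geometric tails that DP enforces.
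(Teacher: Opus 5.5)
Your two-sided plan is essentially the Geng--Viswanath argument: staircase achievability, then symmetrize, obtain a density, monotonize, impose geometric period-$1$ decay, and reduce the one-period profile to two levels. The paper gives no argument of its own beyond citing \cite{geng2015optimal}, so structurally you are on the same route. The step that breaks is the one you left as an expectation. With $b=e^{-\epsilon}$, summing the geometric series for your staircase gives
\begin{align}
\mathbb{E}[X^2]=\frac{b(1+b)}{(1-b)^2}+\frac{b}{1-b}\cdot\frac{\gamma^2+b(1-\gamma^2)}{\gamma+b(1-\gamma)}+\frac{\gamma^3+b(1-\gamma^3)}{3\bigl(\gamma+b(1-\gamma)\bigr)},
\end{align}
and the stationarity condition collapses to $\bigl((1-b)\gamma+b\bigr)^3=\tfrac{b(1+b)}{2}$, i.e.\ $\gamma^*=\bigl((b(1+b)/2)^{1/3}-b\bigr)/(1-b)$. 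Your $\gamma^*=1/(1+e^{\epsilon/3})$ (an analogue of the $\ell^1$ optimizer $1/(1+e^{\epsilon/2})$) does not satisfy this cubic. As $\epsilon\to\infty$ the true $\gamma^*\sim 2^{-1/3}e^{-\epsilon/3}$ while yours $\sim e^{-\epsilon/3}$, so substituting your value only gives an upper bound strictly above the minimum. Substituting the true $\gamma^*$ gives $\mathbb{E}[X^2]=\bigl(((1-b)\gamma^*+b)^2+b\bigr)/(1-b)^2$, that is,
\begin{align}
\min_{\gamma}\mathbb{E}[X^2]=\frac{2^{-2/3}e^{-2\epsilon/3}(1+e^{-\epsilon})^{2/3}+e^{-\epsilon}}{(1-e^{-\epsilon})^2},
\end{align}
which is the constant of Theorem~7 in \cite{geng2015optimal}.

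This is \emph{not} the expression printed in \eqref{equ::optimalsigma}, which has $2^{2/3}$ and $1+e^{-2\epsilon/3}$ where $2^{-2/3}$ and $(1+e^{-\epsilon})^{2/3}$ belong. So your closing claim that the computation ``should reproduce \eqref{equ::optimalsigma}'' cannot be made good: the printed constant is a transcription error, and the statement is false as typeset. You could have caught this with a quick sanity check. At $\epsilon=\ln 2$ the printed formula equals $4(\tfrac32+2^{-2/3})\approx 8.52$, whereas Laplace noise of scale $1/\epsilon$ is already $\epsilon$-DP with variance $2/(\ln 2)^2\approx 4.16$; the true infimum is $\approx 4.08$. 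Likewise, as $\epsilon\to0$ the printed value grows like $(1+2^{5/3})/\epsilon^2$ rather than the Laplace rate $2/\epsilon^2$. Matching ``$2^{2/3}e^{-2\epsilon/3}$ fingerprints'' was matching against the wrong target; your proof should end by deriving the corrected constant explicitly.

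There is also a smaller hole in the converse. In Step~3, replacing the tail by $e^{-k\epsilon}g$ changes the normalization, so ``smallest tail'' alone does not show the cost decreases. The cost does decrease, for the following reason. The difference $f-f'$ splits into pieces $h_{k_0}=f(\cdot+k_0)-e^{-\epsilon}f(\cdot+k_0-1)\ge 0$ on $[0,1)$, $k_0\ge1$, each continued geometrically to the right. Each piece lies farther from the origin than $f'$ under the obvious coupling, so the mediant inequality gives $\frac{\int x^2 f'}{\int f'}\le\int x^2 f$.
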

Informally, $\sigma^{*}(\epsilon)^2$ denotes the smallest noise variance that achieves the single-node DP parameter $\epsilon$. Based on \eqref{equ::encoding_polynomial}, the data stored at node $j$, i.e., $A_1^{(j)}$, can be rewritten as follows.
\begin{align}
    \tilde{A}_1^{(j)} &=\left(A_1+R_1\right)+\zeta_2(n)\begin{bmatrix}
        S_{1,1}  & \cdots & S_{1, \mathsf{T}-1}\
    \end{bmatrix} \vec{g}_j + \zeta_1(n) h_j R_1,
\end{align}
where $\vec{g}_j = 
\begin{bmatrix}
    x_j & x_j^2 & \cdots & x_j^{\mathsf{T}-1}
\end{bmatrix}^T$ and $h_j = x_j^{\mathsf{T}}$.
Let $
\mathbf{G}= 
\begin{bmatrix}
    \vec{g}_1 & \vec{g}_2 & \cdots & \vec{g}_\mathsf{N} 
\end{bmatrix}^T 
$ of size $\mathsf{N} \times (\mathsf{T}-1)$, $\vec{h}=
\begin{bmatrix}
    h_1 & h_2 & \cdots & h_\mathsf{N}
\end{bmatrix}^T$, and then define the Vandermonde matrix $\mathbf{M}=\begin{bmatrix}
    \vec{1} & \mathbf{G} & \vec{h}
\end{bmatrix}_{\mathsf{N} \times (\mathsf{T}+1)}$. 
We will use the Vandermonde matrix property that every $(\mathsf{T}-1) \times (\mathsf{T}-1)$, $\mathsf{T} \times \mathsf{T}$ and $(\mathsf{T}+1)\times(\mathsf{T}+1)$ submatrix of $\mathbf{M}$ is invertible.

We begin by specifying the distributions of the independent noise variables $R_1, \{S_{1,t}\}_{t=1}^{\mathsf{T}-1}$.

\begin{lemma}\label{lemma::existence_Z}
For every  $\sigma^2 > \sigma^{*}(\epsilon)^2$, there exists a random variable $N^*$ with $\mathbb{E}[(N^*)^2]\le \sigma^2$ such that $A_1+N^*$ achieves $\bar{\epsilon}$-DP for some $\bar{\epsilon}< \epsilon$ and $A_1 \in \mathbb{R}$. 

\end{lemma}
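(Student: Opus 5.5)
The plan is to use the monotonicity of the optimal noise variance in the privacy parameter, together with Lemma~\ref{lemma::pramod_results} applied at a slightly smaller privacy level $\bar{\epsilon}<\epsilon$. From \eqref{equ::optimalsigma}, $\sigma^{*}(\cdot)^2$ is a continuous function of its argument on $(0,\infty)$: the numerator and denominator are continuous, and the denominator $(1-e^{-\epsilon})^2$ is strictly positive for $\epsilon>0$. Since $\sigma^2>\sigma^{*}(\epsilon)^2$ strictly, continuity gives some $\bar{\epsilon}\in(0,\epsilon)$ with $\sigma^{*}(\bar{\epsilon})^2<\sigma^2$. Only continuity at $\epsilon$ is needed, so we need not check that $\sigma^{*}$ is monotone.

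Next, apply Lemma~\ref{lemma::pramod_results} with parameter $\bar{\epsilon}$. The infimum of the variance over $\mathcal{S}_{\bar{\epsilon}}(\mathbb{P})\cap L^2(\mathbb{P})$ equals $\sigma^{*}(\bar{\epsilon})^2$, which is strictly less than $\sigma^2$. By the definition of the infimum, there is a random variable $X\in\mathcal{S}_{\bar{\epsilon}}(\mathbb{P})\cap L^2(\mathbb{P})$ with $\mathbb{E}[(X-\mathbb{E}[X])^2]<\sigma^2$. Set $N^*=X-\mathbb{E}[X]$. Then $N^*$ is zero mean, which matches the zero-mean convention of the system model, and $\mathbb{E}[(N^*)^2]=\mathrm{Var}(X)\le\sigma^2$.

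It remains to check that centering preserves membership in $\mathcal{S}_{\bar{\epsilon}}(\mathbb{P})$. For any $x',x''$ with $|x'-x''|\le 1$ we have $\mathbb{P}(N^*+x'\in\mathcal{A})=\mathbb{P}(X+(x'-\mathbb{E}[X])\in\mathcal{A})$, and similarly for $x''$. The shifted pair $x'-\mathbb{E}[X]$, $x''-\mathbb{E}[X]$ still differs by at most $1$, so the ratio bound $e^{\bar{\epsilon}}$ carries over. Taking $x'=A_{1,0}$ and $x''=A_{1,1}$ then shows that the mechanism $A_1\mapsto A_1+N^*$ is $\bar{\epsilon}$-DP in the sense of Definition~\ref{def::dp} with a single node.

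The only delicate point is the strictness $\bar{\epsilon}<\epsilon$, which is what creates slack for the later layers of noise. It comes entirely from the strict inequality $\sigma^2>\sigma^{*}(\epsilon)^2$ combined with continuity of $\sigma^{*}$, and this step is the main one to verify carefully; the rest is routine.
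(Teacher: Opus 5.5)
Your proposal is correct and follows essentially the same route as the paper: it uses the strict gap $\sigma^2>\sigma^{*}(\epsilon)^2$, together with the regularity of $\sigma^{*}(\cdot)$, to find a level $\bar{\epsilon}<\epsilon$ at which the infimum in Lemma~\ref{lemma::pramod_results} yields noise of variance at most $\sigma^2$. Your version is tidier than the paper's, which works through an auxiliary $\epsilon^*$ (its constraint is written as $\mathbb{E}[N^2]\ge\sigma^2$, evidently meant to be $\le$) and appeals only to strict monotonicity; you correctly identify continuity of $\sigma^{*}$ at $\epsilon$ as the property actually needed, and you handle the centering step explicitly.
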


\begin{proof}

The proof is in Appendix \ref{sec::proof_existence_Z}.

\end{proof}

Let the additive noise $R_1$ follow the same distribution as $N^*$ described in Lemma \ref{lemma::existence_Z}, and it follows that $A_1+R_1$ guarantees $\bar{\epsilon}$-DP. 
The noise variables $S_{1,1}, S_{1,2}, \cdots, S_{1, \mathsf{T}-1}$ are independently drawn from a unit-variance Laplace random distribution, and are independent of $R_1$.

For the case $\mathsf{T} \ge 2$, we assume without loss of generality that the first $\mathsf{T}$ nodes form a colluding set. Due to the inherent symmetry of the coding scheme, the argument presented below applies identically to any other subset of $\mathsf{T}$ colluding nodes.
The colluding nodes receive:
\begin{align}\label{equ::z}
\vec{Z}=(A_1 +R_1)\vec{1}+ 
    \bar{\mathbf{G}}
    \begin{bmatrix}
        \zeta_1(n) R_1 & \zeta_2(n) S_{1,1} & \cdots & \zeta_2(n)  S_{1, \mathsf{T}-1}
    \end{bmatrix}^T,  
\end{align}
where $\vec{1} \in \mathbb{R}^{\mathsf{T}}$ denotes the all-ones column vector of length $\mathsf{T}$,
and $\bar{\mathbf{G}}=
\begin{bmatrix}
    h_1 & h_2 & \cdots & h_{\mathsf{T}} \\
    \vec{g}_1 & \vec{g}_2 & \cdots & \vec{g}_\mathsf{T}
 \end{bmatrix}^T$ of size ${\mathsf{T} \times \mathsf{T}}$. 
Let $\vec{g'}_j^T$ with $j \in [\mathsf{T}]$ denote the $j$-th row of the matrix $\bar{\mathbf{G}}^{-1}$, and then $\vec{g'}_j^T\vec{1}$ represents the $j$-th element of the column vector $\bar{\mathbf{G}}^{-1}\vec{1}$.

\begin{lemma} \label{lemma::_z_z'}
For $\vec{Z}$ in \eqref{equ::z}, there exists a full rank matrix $\mathbf{P}$ such that $\mathbf{P}^{-1} \vec{Z}=\vec{Z}'= 
\begin{bmatrix}
    Z'_1 & Z'_2 & \cdots & Z'_\mathsf{T}
\end{bmatrix}^T$, where 
\begin{align}\label{equ::z_factored}
    Z'_1 &= A_1 + \left(1 + \frac{\zeta_1(n)}{\vec{g'}_1^T\vec{1}}\right) R_1,\nonumber\\
    Z'_j &= A_1 + \frac{\zeta_2(n)\bigl(\vec{g'}_1^T\vec{1} + \zeta_1(n)\bigr)}{\zeta_1(n)\,\vec{g'}_j^T\vec{1}}\, S_{1,j-1}, \quad j=2,\ldots,\mathsf{T}.
\end{align}

\end{lemma}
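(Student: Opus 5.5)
The plan is to construct $\mathbf{P}^{-1}$ explicitly as a product of two invertible maps. The first map is $\bar{\mathbf{G}}^{-1}$, which turns the colluders' observations into one coordinate per noise variable. The second is a lower-triangular elimination that removes $R_1$ from every coordinate except the first and normalizes the coefficient of $A_1$ to one.

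\emph{Step 1.} First I would check that $\bar{\mathbf{G}}$ is invertible and compute $\vec{c}\triangleq\bar{\mathbf{G}}^{-1}\vec{1}$, whose entries are $c_j=\vec{g'}_j^T\vec{1}$. The vector $\vec{u}=\vec{c}$ solves $\bar{\mathbf{G}}\vec{u}=\vec{1}$, which says that
\begin{align*}
q(x)=u_1x^{\mathsf{T}}+\sum_{t=1}^{\mathsf{T}-1}u_{t+1}x^t-1
\end{align*}
vanishes at $x_1,\dots,x_{\mathsf{T}}$. Since $q$ has degree at most $\mathsf{T}$, it must equal $u_1\prod_{j=1}^{\mathsf{T}}(x-x_j)$. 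Matching constant terms gives $u_1(-1)^{\mathsf{T}}\prod_j x_j=-1$. Matching the coefficient of $x^t$ gives $u_{t+1}=u_1(-1)^{\mathsf{T}-t}e_{\mathsf{T}-t}(x_1,\dots,x_{\mathsf{T}})$, where $e_k$ is the $k$-th elementary symmetric polynomial.

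Choosing the evaluation points $x_j$ distinct and strictly positive has two consequences. All $e_k>0$, so every $c_j\neq 0$. Uniqueness of the interpolating polynomial also gives invertibility of $\bar{\mathbf{G}}$: a kernel vector would be a polynomial with zero constant term and degree at most $\mathsf{T}$ vanishing at $\mathsf{T}$ nonzero points plus $0$, hence identically zero. This requirement on the $x_j$ is a harmless restriction on the scheme, and it applies to every $\mathsf{T}$-subset.

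\emph{Step 2.} Multiplying \eqref{equ::z} by $\bar{\mathbf{G}}^{-1}$ gives
\begin{align*}
Y_1=c_1(A_1+R_1)+\zeta_1(n)R_1,\qquad Y_j=c_j(A_1+R_1)+\zeta_2(n)S_{1,j-1},\quad j=2,\dots,\mathsf{T}.
\end{align*}
Setting $Z'_1=Y_1/c_1$ immediately yields the first line of \eqref{equ::z_factored}.

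For $j\ge 2$, I would eliminate $R_1$ by forming $Y_j-\frac{c_j}{c_1+\zeta_1(n)}Y_1$. The $R_1$ terms cancel, and what remains is $\frac{c_j\zeta_1(n)}{c_1+\zeta_1(n)}A_1+\zeta_2(n)S_{1,j-1}$. Dividing by $\frac{c_j\zeta_1(n)}{c_1+\zeta_1(n)}$ gives exactly $Z'_j$ as in \eqref{equ::z_factored}. This uses $c_j\neq0$, $\zeta_1(n)>0$, and $c_1+\zeta_1(n)\neq0$. The last condition holds for all $n$ if $c_1>0$, and otherwise for all sufficiently large $n$ since $\zeta_1(n)\to0$.

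\emph{Step 3.} The combined map is (diagonal scaling)$\times$(unit lower-triangular elimination)$\times\bar{\mathbf{G}}^{-1}$. Each factor is invertible, so I define $\mathbf{P}$ as the inverse of this product, which gives $\mathbf{P}^{-1}\vec{Z}=\vec{Z}'$ with $\mathbf{P}$ full rank.

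The main obstacle is Step 1, namely guaranteeing that all $c_j=\vec{g'}_j^T\vec{1}$ are nonzero for every colluding set. A generic Vandermonde-invertibility argument does not give this, which is why I would use the explicit factorization of $q(x)$ together with positive evaluation points.
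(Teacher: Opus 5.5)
Your proof is correct, and its skeleton (apply $\bar{\mathbf{G}}^{-1}$, normalize the first coordinate, use it to remove $R_1$ from the other coordinates, collect the invertible factors into $\mathbf{P}^{-1}$) is the same as the paper's. The difference is in how you make sure every division is legitimate.

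The paper proves only $\vec{g'}_1^T\vec{1}\neq 0$, by a rank argument. The vector $\vec{g'}_1$ has inner product $1$ with $(h_1,\dots,h_{\mathsf{T}})^T$ and is orthogonal to the columns $(x_1^t,\dots,x_{\mathsf{T}}^t)^T$, $t=1,\dots,\mathsf{T}-1$. If it were also orthogonal to $\vec{1}$, it would have to vanish, because the Vandermonde matrix with rows $(1,x_j,\dots,x_j^{\mathsf{T}-1})$ is invertible. For $j\ge 2$ the paper proves nothing. A footnote notes that if $\vec{g'}_j^T\vec{1}=0$, the $j$-th coordinate is pure noise, so privacy is unaffected, even though the displayed formula for that $Z'_j$ is then undefined.

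Your explicit computation of $\bar{\mathbf{G}}^{-1}\vec{1}$ via elementary symmetric polynomials, with positive nodes, shows every entry is nonzero. So the lemma holds exactly as stated for every colluding set. You are right that distinctness alone does not suffice: for $\mathsf{T}=2$, $x_1=1$, $x_2=-1$ one gets $\vec{g'}_2^T\vec{1}=e_1/e_2=0$.

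Your route also makes explicit two conditions the paper uses tacitly:
\begin{itemize}
\item The nodes must be nonzero for $\bar{\mathbf{G}}$ to be invertible at all. The paper's blanket claim that the relevant square submatrices of $\mathbf{M}$ are invertible therefore needs a restriction such as positive nodes.
\item The elimination needs $\vec{g'}_1^T\vec{1}+\zeta_1(n)\neq 0$. Since $\vec{g'}_1^T\vec{1}=(-1)^{\mathsf{T}+1}/\prod_j x_j$ is negative for even $\mathsf{T}$, this is guaranteed only for large $n$. That suffices, because the DP guarantee is asymptotic.
\end{itemize}

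In exchange, the paper's argument handles the first coordinate for arbitrary distinct nonzero nodes, with no sign condition.
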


\begin{proof}
    The proof is in Appendix \ref{sec::proof_z_z'}
\end{proof}

Since $\vec{Z}=\mathbf{P}\vec{Z}'$ is a deterministic function of $\vec{Z}'$, the post-processing property of DP~\cite{dwork2014algorithmic} implies that $\vec{Z}$ satisfies at least the same DP guarantee as $\vec{Z}'$. Therefore, it suffices to establish that $\vec{Z}'$ satisfies $\epsilon$-DP.

$Z'_1$ is a perturbation of $A_1+R_1$. We have that, for any $\bar{\bar{\epsilon}}$ with
$\bar{\epsilon}< \bar{\bar{\epsilon}} < \epsilon$, as $n \to \infty$,
\begin{align}
    &\sup_{\mathcal{B}\subseteq \mathcal{B}(\mathbb{R}), -1<\lambda<1} \frac{\mathbb{P}\left(A_1 + \left(1 + \frac{1}{ \vec{g'}_1^T\vec{1}} \zeta_1(n) \right) R_1 \in \mathcal{B}\right)}{\mathbb{P}\left(A_1 + \left(1 + \frac{1}{ \vec{g'}_1^T\vec{1}} \zeta_1(n) \right) R_1+\lambda \in \mathcal{B}\right)}  \\
    =&
    \sup_{\mathcal{B}\subseteq \mathcal{B}(\mathbb{R}), -\frac{1}{1+\frac{1}{\vec{g'}_1^T\vec{1}}\zeta_1(n)}<\lambda<\frac{1}{1+\frac{1}{\vec{g'}_1^T\vec{1}}\zeta_1(n)}} \frac{\mathbb{P}(A_1+R_1 \in \mathcal{B})}{\mathbb{P}(A_1+R_1+\lambda \in \mathcal{B})}  \\
    \overset{(a)}{\le} & e^{\bar{\bar{\epsilon}}},
\end{align}
where $(a)$ holds as $\lim_{n\rightarrow \infty} \frac{1}{1+\frac{1}{\vec{g'}_1^T\vec{1}}\zeta_1(n)}=1$. Hence $Z'_1$  achieves $\bar{\bar{\epsilon}}$-DP.

Each $Z'_j$ ($j\ge 2$) adds an independent Laplace noise with coefficient $\frac{\zeta_2(n)}{\zeta_1(n)}\cdot \frac{\vec{g'}_1^T\vec{1}+\zeta_1(n)}{\vec{g'}_j^T\vec{1}}\to\infty$ by~\eqref{eq:limits}, each $Z'_j$ achieves $\frac{\zeta_1(n)\vec{g'}_j^T\vec{1}}{\zeta_2(n)(\vec{g'}_1^T\vec{1}+\zeta_1(n))}\sqrt{2}$-DP. By the composition theorem~\cite{dwork2014algorithmic}, $\vec{Z}'$ achieves
\[\left(\bar{\bar{\epsilon}}+\sqrt{2}\sum_{j=2}^{\mathsf{T}}\frac{\zeta_1(n)\vec{g'}_j^T\vec{1}}{\zeta_2(n)(\vec{g'}_1^T\vec{1}+\zeta_1(n))}\right)\text{-DP.}\]
Since $\lim_{n\to\infty}\zeta_1(n)/\zeta_2(n)=0$ by~\eqref{eq:limits}, $\vec{Z}'$ achieves $\epsilon$-DP as $n \to \infty$. Consequently, $\vec{Z}=\mathbf{P}\vec{Z}'$ satisfies $\epsilon$-DP by post-processing~\cite{dwork2014algorithmic}.

For the case $\mathsf{T}=1$, we have that, for any $\gamma''>0$ with $e^{\bar{\epsilon}}+\gamma'' \le  e^{\epsilon}$, as $n \to \infty$,
\begin{align}
    &\sup_{\mathcal{B}\subseteq \mathcal{B}(\mathbb{R}), -1<\lambda<1} \frac{\mathbb{P}\left(A_1 + \left(1 + \zeta_1(n) h_1 \right) R_1 \in \mathcal{B}\right)}{\mathbb{P}\left(A_1 + \left(1 + \zeta_1(n) h_1 \right) R_1+\lambda \in \mathcal{B}\right)}  \\
    =&
    \sup_{\mathcal{B}\subseteq \mathcal{B}(\mathbb{R}), -\frac{1}{1 + \zeta_1(n) h_1}<\lambda<\frac{1}{1 + \zeta_1(n) h_1}} \frac{\mathbb{P}(A_1+R_1 \in \mathcal{B})}{\mathbb{P}(A_1+R_1+\lambda \in \mathcal{B})}  \\
    \overset{(a)}{\le} & e^{\bar{\epsilon}}+\gamma'' \le  e^{\epsilon},
\end{align}
where $(a)$ holds as $\lim_{n\rightarrow \infty} \frac{1}{1 + \zeta_1(n) h_1} =1$. 
Hence, the coding scheme achieves $\epsilon$-DP, completing the proof that the proposed scheme satisfies $\mathsf{T}$-node $\epsilon$-DP for any $\sigma^2 > \sigma^{*}(\epsilon)^2$.

\subsection{Accuracy Analysis}
\label{sec::accuracy}
This subsection aims to derive the ${\tt LMSE}$ of the encoding scheme presented in Section~\ref{sec::coding_schemes}. Based on Lemma~\ref{lemma::lmse} established in Section~\ref{sec::converse}, it suffices to consider the case $\mathbb{E}[A_i^2] = \eta$ in order to obtain an upper bound on ${\tt LMSE}$.

We examine the fundamental case of estimating a single random variable $A_i$ from the observation $A_i+R_i$. 
Based on the MMSE criterion, the corresponding optimal estimator is given by
\begin{align}
    \hat{A}_i = \alpha (A_i + R_i),
\end{align}
where $\alpha = \frac{\eta}{\sigma^2+\eta}$.
We also reformulate $\hat{A}_i$ as $\hat{A}_i = A_i + Z_i$,
where $Z_i = -\frac{\sigma^2}{\sigma^2+\eta} A_i + \frac{\eta}{\sigma^2+\eta} R_i$, and
\begin{align}
   \mathbb{E}[Z_i^2] = \frac{\eta \sigma^2}{\eta + \sigma^2} = \frac{\eta}{1+\frac{\eta}{\sigma^2}}. 
\end{align}
For integer $k=0,1, \cdots, \mathsf{M}-1$, we define
\begin{align}
    D_k = \sum_{\mathcal{S} \subseteq [\mathsf{M}], |\mathcal{S}|=k}\left( \prod_{i\in \mathcal{S}} A_i \right) \left( \prod_{l \notin \mathcal{S}} (A_l+Z_l) \right).
\end{align}
For example, $D_0=\prod_{i=1}^{\mathsf{M}} (A_i + Z_i)$ and
$D_1 = \sum_{i=1}^\mathsf{M} A_i \left( \prod_{l\in [\mathsf{M}], l\neq i} (A_l + Z_l) \right)$.

\begin{Proposition}\label{prop::CkDk}
Suppose $\mathsf{N} = (\mathsf{M}-1)\mathsf{T}+1$. For each $k=0,1,\ldots,\mathsf{M}-1$, there exists a function $f_k$ such that
\[
    \lim_{n\to\infty} 
    f_k\!\bigl(\tilde{V}^{(1)},\ldots,\tilde{V}^{(\mathsf{N})}\bigr)
    = D_k,
\]
where the coding scheme (and hence $\tilde{V}^{(j)}$) depends on $n$ through the parameters $\zeta_1(n),\zeta_2(n)$ defined in Section~\ref{sec::coding_schemes}.
In particular, $\{D_k\}_{k=0}^{\mathsf{M}-1}$ can be asymptotically recovered from the local computation results $\{\tilde{V}^{(j)}\}_{j\in[\mathsf{N}]}$, each with error vanishing as $n\to\infty$.
\end{Proposition}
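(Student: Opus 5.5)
The plan is to read off the $D_k$ from specific coefficients of the product polynomial $p(x)=\prod_{i=1}^{\mathsf{M}}p_i(x)$, which has degree $\mathsf{M}\mathsf{T}$, generalizing the $\mathsf{M}=3$ examples. There are three steps: (i) identify the coefficients of $x^{k\mathsf{T}}$, $k=0,\ldots,\mathsf{M}-1$, and their leading terms; (ii) show that $\mathsf{N}=(\mathsf{M}-1)\mathsf{T}+1$ evaluations recover these coefficients, after suitable rescaling, up to vanishing error; (iii) convert the recovered quantities into $D_k$ by an invertible triangular linear map.

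\textbf{Step (i): leading terms.} Write $p(x)=\sum_{m=0}^{\mathsf{M}\mathsf{T}}c_m x^m$. A monomial in $c_m$ arises by choosing, from each factor $p_i$, a term of degree $d_i\in\{0,1,\ldots,\mathsf{T}\}$ with $\sum_i d_i=m$. If $j$ factors take a middle degree $1\le d_i\le \mathsf{T}-1$ and $l$ factors take the top degree $\mathsf{T}$, the monomial carries the weight $\zeta_2(n)^j\zeta_1(n)^l$. For $m=k\mathsf{T}$, the choice $j=0,\ l=k$ contributes exactly $\zeta_1(n)^k C_k$, where
\[
C_k=\sum_{|\mathcal{S}|=k}\prod_{i\in\mathcal{S}}R_i\prod_{l\notin\mathcal{S}}(A_l+R_l).
\]
Every other contributing choice has $l<k$ and $j(\mathsf{T}-1)\ge (k-l)\mathsf{T}$, so its weight relative to $\zeta_1^k$ is
\[
\frac{\zeta_2^j}{\zeta_1^{k-l}}\le \Bigl(\frac{\zeta_2^{\mathsf{T}/(\mathsf{T}-1)}}{\zeta_1}\Bigr)^{k-l}\to 0
\]
by \eqref{eq:limits}. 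Hence $c_{k\mathsf{T}}/\zeta_1(n)^k\to C_k$ in $L^2$. This uses that all noise variables have finite second moments and are independent, so each monomial has bounded second moment uniformly in $n$.

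\textbf{Step (ii): recovery from $\mathsf{N}$ evaluations.} Split $p(x)=p_{\le}(x)+p_{>}(x)$, where $p_{\le}$ collects the degrees $m\le(\mathsf{M}-1)\mathsf{T}=\mathsf{N}-1$. Let $\mathbf{W}$ be the invertible $\mathsf{N}\times\mathsf{N}$ Vandermonde matrix on $\{x_j\}$. Applying $\mathbf{W}^{-1}$ to $(\tilde V^{(1)},\ldots,\tilde V^{(\mathsf{N})})^T$ returns $(c_0,\ldots,c_{\mathsf{N}-1})$ plus a fixed linear combination of the high coefficients $c_m$, $m>(\mathsf{M}-1)\mathsf{T}$. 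Define $f_k$ as the $k\mathsf{T}$-th entry divided by $\zeta_1^k$. It remains to show each high coefficient is $o(\zeta_1^{\mathsf{M}-1})$, which suffices for all $k\le\mathsf{M}-1$ because $\zeta_1^k\ge\zeta_1^{\mathsf{M}-1}$ for small $\zeta_1$. For a configuration $(j,l)$ with $l\mathsf{T}+j(\mathsf{T}-1)\ge m>(\mathsf{M}-1)\mathsf{T}$ there are three cases:
\begin{itemize}
\item if $l=\mathsf{M}$, the ratio is $\zeta_1\to0$;
\item if $l=\mathsf{M}-1$, then $j\ge1$ and the ratio is at most $\zeta_2^j\to0$;
\item if $l<\mathsf{M}-1$, then $j(\mathsf{T}-1)>(\mathsf{M}-1-l)\mathsf{T}$, so the ratio is bounded by $\bigl(\zeta_2^{\mathsf{T}/(\mathsf{T}-1)}/\zeta_1\bigr)^{\mathsf{M}-1-l}\to0$.
\end{itemize}
The case $\mathsf{T}=1$ is the same argument without the $\zeta_2$ layer: there $p$ has degree $\mathsf{M}$, $\mathsf{N}=\mathsf{M}$, and only $c_{\mathsf{M}}=\zeta_1^{\mathsf{M}}\prod_i R_i$ lies beyond degree $\mathsf{N}-1$.

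\textbf{Step (iii): from $C_k$ to $D_k$.} Set $B_l=A_l+Z_l$, so that $A_l+R_l=\alpha^{-1}B_l$ and $R_i=\alpha^{-1}B_i-A_i$. Substituting into $C_k$ and expanding each $\prod_{i\in\mathcal{S}}(\alpha^{-1}B_i-A_i)$ gives
\[
C_k=\sum_{r=0}^{k}(-1)^r\binom{\mathsf{M}-r}{k-r}\alpha^{-(\mathsf{M}-r)}D_r,
\]
since each $D_r$-monomial (a set of $r$ indices carrying $A$) arises from $\binom{\mathsf{M}-r}{k-r}$ choices of $\mathcal{S}$. This system is lower triangular with nonzero diagonal entries $(-1)^k\alpha^{-(\mathsf{M}-k)}$, so each $D_k$ is a fixed linear combination of $C_0,\ldots,C_k$. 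Composing this inverse with the maps of step (ii) gives the required $f_k$ (linear in the $\tilde V^{(j)}$), with convergence to $D_k$ inherited in $L^2$.

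The main obstacle is the order-of-magnitude bookkeeping in steps (i) and (ii). One must verify uniformly that every contaminating configuration, both inside $c_{k\mathsf{T}}$ and in the aliased high-degree coefficients, is negligible after division by $\zeta_1^k$. In particular, the mixed-weight terms $\zeta_2^j\zeta_1^l$ are controlled precisely by the condition $\zeta_2^{\mathsf{T}/(\mathsf{T}-1)}/\zeta_1\to0$.
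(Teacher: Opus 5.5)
Your proposal is correct and follows essentially the same route as the paper: you recover $C_k$ from the coefficients $c_{k\mathsf{T}}$ of the product polynomial by Vandermonde inversion, using $\zeta_2(n)^{\mathsf{T}/(\mathsf{T}-1)}/\zeta_1(n)\to 0$ to suppress the mixed terms, and then invert the same lower-triangular relation $C_k=\sum_{r\le k}(-1)^r\binom{\mathsf{M}-r}{k-r}\alpha^{-(\mathsf{M}-r)}D_r$. Your explicit check that every aliased coefficient $c_m$, $m>(\mathsf{M}-1)\mathsf{T}$, is $o(\zeta_1(n)^{\mathsf{M}-1})$, so that the error survives division by $\zeta_1(n)^k$, makes precise a point the paper's asymptotic coefficient-recovery lemma leaves implicit.
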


\begin{proof}
The proof is in Appendix~\ref{sec::proof_CkDk}.    
\end{proof}

Based on Proposition~\ref{prop::CkDk}, the local computation results $\{\tilde{V}^{(j)}\}_{j\in [\mathsf{N}]}$ can recover $\{D_k\}_{k=0}^{\mathsf{M}-1}$ asymptotically.
For the sake of simplicity, the subsequent accuracy analysis is conducted based on $\{D_k\}_{k=0}^{\mathsf{M}-1}$ by proving the following proposition.

\begin{Proposition}\label{prop::alter_sum}
\begin{align}
    \sum_{k=0}^{\mathsf{M}-1} (-1)^k D_k \;=\; \prod_{i=1}^\mathsf{M} Z_i \;+\; (-1)^{\mathsf{M}+1} \prod_{i=1}^\mathsf{M} A_i.
\end{align}
\end{Proposition}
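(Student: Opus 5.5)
This identity is purely algebraic, so I would prove it by writing every $D_k$ as a sum over monomials in the variables $A_i$ and $Z_i$, and then counting signs with the binomial theorem. No probabilistic assumption on the $A_i$ or $Z_i$ is used.

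\emph{Step 1 (expansion of $D_k$).} Write $B_l \triangleq A_l+Z_l$. For a fixed $\mathcal{S}$ with $|\mathcal{S}|=k$, expand $\prod_{l\notin\mathcal{S}}(A_l+Z_l)$ by choosing, for each $l\notin\mathcal{S}$, either $A_l$ or $Z_l$. Each resulting monomial has the form $\prod_{i\in\mathcal{U}}A_i\prod_{i\notin\mathcal{U}}Z_i$ for some $\mathcal{U}\supseteq\mathcal{S}$. Conversely, a fixed $\mathcal{U}$ with $|\mathcal{U}|=u$ appears in $D_k$ once for each $k$-subset $\mathcal{S}\subseteq\mathcal{U}$. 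This gives
\begin{align}
D_k=\sum_{\mathcal{U}\subseteq[\mathsf{M}]}\binom{|\mathcal{U}|}{k}\prod_{i\in\mathcal{U}}A_i\prod_{i\notin\mathcal{U}}Z_i .
\end{align}

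\emph{Step 2 (alternating sum).} Summing over $k$ with the signs $(-1)^k$, the coefficient of the monomial indexed by $\mathcal{U}$ is $c_u\triangleq\sum_{k=0}^{\mathsf{M}-1}(-1)^k\binom{u}{k}$. There are three cases.
\begin{itemize}
\item If $u\le \mathsf{M}-1$, the sum runs over all $k\le u$, so $c_u=(1-1)^u$. This equals $1$ when $u=0$ and $0$ when $1\le u\le \mathsf{M}-1$.
\item If $u=\mathsf{M}$, the term $k=\mathsf{M}$ is missing from the sum. Hence $c_{\mathsf{M}}=(1-1)^{\mathsf{M}}-(-1)^{\mathsf{M}}=(-1)^{\mathsf{M}+1}$.
\end{itemize}

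\emph{Step 3 (conclusion).} Only $\mathcal{U}=\emptyset$ and $\mathcal{U}=[\mathsf{M}]$ survive. The first contributes $\prod_i Z_i$ and the second contributes $(-1)^{\mathsf{M}+1}\prod_i A_i$, which is exactly the claimed identity.

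The only delicate point is the double-counting step that produces the factor $\binom{|\mathcal{U}|}{k}$. It is easy to mis-index because $D_k$ fixes the $A$-factors on $\mathcal{S}$, while the monomial's $A$-set $\mathcal{U}$ can be larger than $\mathcal{S}$. The truncation at $k=\mathsf{M}-1$ is what leaves the residual $\pm\prod_i A_i$ term. As a consistency check, for $\mathsf{M}=3$ the formula gives $D_0-D_1+D_2=\prod Z_i+\prod A_i$, matching the example in Section~\ref{sec::example2}.
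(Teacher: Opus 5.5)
Your proof is correct, but it runs in the opposite direction from the paper's. The paper never expands the $D_k$. It writes $\prod_{i=1}^{\mathsf{M}} Z_i=\prod_{i=1}^{\mathsf{M}}\bigl((A_i+Z_i)-A_i\bigr)$ and applies the distributive law once. Grouping the terms by the set $\mathcal{S}$ of indices where $-A_i$ is chosen gives exactly $\sum_{k=0}^{\mathsf{M}}(-1)^k D_k$, with $D_{\mathsf{M}}=\prod_i A_i$. The claim then follows by moving the $k=\mathsf{M}$ term to the other side.

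So the paper works in the basis $\{A_i,\,A_i+Z_i\}$, where the $D_k$ already appear as the grouped terms of a single binomial expansion. You instead pass to the monomial basis $\{A_i,\,Z_i\}$. That forces the double-counting step producing $\binom{|\mathcal{U}|}{k}$ and the cancellation $\sum_k(-1)^k\binom{u}{k}=0$. In effect, your Step 2 re-verifies, coefficient by coefficient, the expansion the paper uses directly.

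The paper's route is a one-liner with no indexing pitfalls. Yours yields more information. It gives the explicit monomial expansion of every $D_k$, and the same computation handles any truncation of the alternating sum. Concretely, $\sum_{k=0}^{K}(-1)^k\binom{u}{k}=(-1)^K\binom{u-1}{K}$ for $u\ge 1$ tells you exactly which $A$--$Z$ cross terms survive in $\sum_{k=0}^{K}(-1)^kD_k$ when fewer of the $D_k$ are available. The paper's argument does not expose this.
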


\begin{proof}

Expand $\prod_{i=1}^{\mathsf{M}} Z_i = \prod_{i=1}^{\mathsf{M}} \bigl((A_i+Z_i)-A_i\bigr)$:
\begin{align}
  \prod_{i=1}^{\mathsf{M}} Z_i = \sum_{\mathcal{S}\subseteq[\mathsf{M}]} (-1)^{|\mathcal{S}|} \prod_{i\in\mathcal{S}} A_i \prod_{l\notin\mathcal{S}} (A_l+Z_l) = \sum_{k=0}^{\mathsf{M}} (-1)^k D_k.
\end{align}
Isolating the $k=\mathsf{M}$ term ($(-1)^{\mathsf{M}} D_{\mathsf{M}} = (-1)^{\mathsf{M}}\prod_{i=1}^{\mathsf{M}} A_i$) and rearranging yields the result.

\end{proof}

If $\mathsf{M}$ is odd, then $(-1)^{\mathsf{M}+1} = 1$, so the alternating sum gives $S = \prod_{i=1}^\mathsf{M} A_i + \prod_{i=1}^\mathsf{M} Z_i$. If $\mathsf{M}$ is even, then $(-1)^{\mathsf{M}+1} = -1$, so $S = \prod_{i=1}^\mathsf{M} Z_i - \prod_{i=1}^\mathsf{M} A_i$.

Applying Proposition \ref{prop::alter_sum}, the mean estimation error can be bounded as
\begin{align}
    {\tt LMSE} \le \mathbb{E}\left[\left(\prod_{i=1}^\mathsf{M} Z_i\right)^2\right] \overset{(a)}{=} \prod_{i=1}^{\mathsf{M}} \mathbb{E}\left[Z_i^2\right] = \frac{\eta^\mathsf{M}}{\left(1+\frac{\eta}{\sigma^2}\right)^\mathsf{M}},
\end{align}
where $(a)$ follows from the independence of $Z_1, \dots, Z_\mathsf{M}$.
The achievable result in Theorem \ref{thm::achievability} is proved by substituting $\sigma^2 = \sigma^*(\epsilon)^2+\gamma'$ with sufficient small $\gamma'$ as shown in Section \ref{sec::dp_analysis}.

\section{Converse: Proof of Theorem \ref{thm::converse}}
\label{sec::converse}

For the converse proof, we first introduce compact tensor notation that will be used throughout the proof. For an order-\(\mathsf{M}\) tensor \(\mathcal{D}\in\mathbb{R}^{d_1\times\cdots\times d_{\mathsf{M}}}\) and a vector \(\vec{\alpha}\in\mathbb{R}^{d_k}\), the \emph{mode-\(k\) tensor--vector product} \(\mathcal{D}\times_k\vec{\alpha}\) denotes the order-\((\mathsf{M}-1)\) tensor obtained by contracting \(\mathcal{D}\) along its \(k\)-th index:
\(
(\mathcal{D}\times_k\vec{\alpha})_{i_1\ldots i_{k-1} i_{k+1}\ldots i_{\mathsf{M}}}
\;=\;\sum_{i_k=1}^{d_k}\mathcal{D}_{i_1\ldots i_k\ldots i_{\mathsf{M}}}\,\alpha_{i_k}.
\)

We present the following useful lemma, which is a well-known result from linear mean-square estimation theory~\cite{poor2013introduction}.
\begin{lemma}\label{lemma::lmse}
    Let $X$ be a random variable with $\mathbb{E}[X]=0$ and $\mathbb{E}[X^2]=\lambda^2$. Let $\{N_i\}_{i=1}^m$ be random noise variables independent of $X$, and
    $\tilde{\mathbf{X}}=\begin{bmatrix}
        \nu_1 X+ N_1 & \cdots & \nu_m X+ N_m
    \end{bmatrix}^T$, where $\nu_i \in \mathbb{R}$.
    There exists an explicit minimizer $\mathbf{w}^*$ that achieves equality in \eqref{equ::lmse_lemma}.
    \begin{align}\label{equ::lmse_lemma}
    \inf_{\mathbf{w}\in\mathbb{R}^{m}} \mathbb{E}[| \mathbf{w}^T \tilde{\mathbf{X}} -X|^2]=\frac{\lambda^2}{1+{\tt SNR}_a},    
    \end{align}
    where
    $
        {\tt SNR}_a=\frac{\det (\mathbf{K}_1)}{\det(\mathbf{K}_2)}-1
    $, where $\mathbf{K}_1$ denotes the covariance matrix of the noisy observation $\tilde{\mathbf{X}}$, and $\mathbf{K}_2$ denotes the covariance matrix of the noise $\{N_i\}_{i=1}^m$.
    Also, the vector $\mathbf{w}^*$ satisfies the following property: for any random variables $X'$ with  $\mathbb{E}[X']=0$ and $\mathbb{E}[X'^2] \le \lambda^2$, $\mathbb{E}[| \mathbf{w}^{*T} \tilde{\mathbf{X}}'-X'|^2] \le \frac{\lambda^2}{1+{\tt SNR}_a}$, where $\tilde{\mathbf{X}}'=\begin{bmatrix}
        \nu_1 X' + N_1 & \cdots & \nu_m X'+ N_m
    \end{bmatrix}^T$.
\end{lemma}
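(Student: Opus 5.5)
The plan is to reduce everything to an explicit quadratic minimization and then use the matrix determinant lemma to rewrite the answer in the stated determinant form. Write $\vec{\nu}=[\nu_1,\dots,\nu_m]^T$ and $\mathbf{N}=[N_1,\dots,N_m]^T$. Since $X$ is zero mean and independent of $\mathbf{N}$, the cross terms $\mathbb{E}[X\mathbf{N}]$ vanish. Hence
\begin{align}
\mathbf{K}_1=\lambda^2\vec{\nu}\vec{\nu}^T+\mathbf{K}_2, \qquad \mathbb{E}[X\tilde{\mathbf{X}}]=\lambda^2\vec{\nu}.
\end{align}
For any $\mathbf{w}$, expanding the square gives
\begin{align}
\mathbb{E}[|\mathbf{w}^T\tilde{\mathbf{X}}-X|^2]=(\mathbf{w}^T\vec{\nu}-1)^2\lambda^2+\mathbf{w}^T\mathbf{K}_2\mathbf{w},
\end{align}
which is a convex quadratic in $\mathbf{w}$.

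I will first treat the main case where $\mathbf{K}_2$ is invertible. Setting the gradient to zero yields the normal equations $\mathbf{K}_1\mathbf{w}=\lambda^2\vec{\nu}$. By the Sherman--Morrison formula their solution is
\begin{align}
\mathbf{w}^*=\frac{\lambda^2\mathbf{K}_2^{-1}\vec{\nu}}{1+\lambda^2\vec{\nu}^T\mathbf{K}_2^{-1}\vec{\nu}}.
\end{align}
Substituting $\mathbf{w}^*$ back gives the minimal error
\begin{align}
\lambda^2-\lambda^4\vec{\nu}^T\mathbf{K}_1^{-1}\vec{\nu}=\frac{\lambda^2}{1+\lambda^2\vec{\nu}^T\mathbf{K}_2^{-1}\vec{\nu}}.
\end{align}
The matrix determinant lemma gives $\det\mathbf{K}_1=\det\mathbf{K}_2\,(1+\lambda^2\vec{\nu}^T\mathbf{K}_2^{-1}\vec{\nu})$, so $\lambda^2\vec{\nu}^T\mathbf{K}_2^{-1}\vec{\nu}=\det(\mathbf{K}_1)/\det(\mathbf{K}_2)-1={\tt SNR}_a$. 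This establishes \eqref{equ::lmse_lemma}, with the infimum attained at the explicit $\mathbf{w}^*$.

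For the robustness property, fix $\mathbf{w}^*$ and replace $X$ by $X'$ with $\mathbb{E}[X']=0$ and $\mathbb{E}[X'^2]=\mu^2\le\lambda^2$. I will assume here, as in every use of this lemma in the paper, that $\mathbf{N}$ is also independent of $X'$. The same expansion then gives
\begin{align}
\mathbb{E}[|\mathbf{w}^{*T}\tilde{\mathbf{X}}'-X'|^2]=(\mathbf{w}^{*T}\vec{\nu}-1)^2\mu^2+\mathbf{w}^{*T}\mathbf{K}_2\mathbf{w}^*.
\end{align}
This expression is affine and nondecreasing in $\mu^2$. It is therefore at most its value at $\mu^2=\lambda^2$, which is exactly $\lambda^2/(1+{\tt SNR}_a)$.

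The only real obstacle is the degenerate case where $\mathbf{K}_2$ is singular, since the ratio of determinants is then not defined as written. There are two subcases.
\begin{itemize}
\item If $\vec{\nu}$ has a nonzero component in the null space of $\mathbf{K}_2$, I take $\mathbf{u}$ in that null space with $\mathbf{u}^T\vec{\nu}=1$. Then $\mathbf{u}^T\tilde{\mathbf{X}}=X$ almost surely, so the error is zero. This matches ${\tt SNR}_a=\infty$, and the robustness claim holds trivially with $\mathbf{w}^*=\mathbf{u}$.
\item Otherwise, $\vec{\nu}$ lies in the range of $\mathbf{K}_2$. I would restrict to that range and use the pseudo-inverse, or equivalently take the limit $\mathbf{K}_2+\delta\mathbf{I}$ as $\delta\to0$, interpreting ${\tt SNR}_a$ as the corresponding limit.
\end{itemize}
The rest of the argument is routine linear algebra.
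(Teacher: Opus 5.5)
Your proof is correct. The paper gives no proof of this lemma: it quotes it as a standard linear-MMSE fact and cites Poor's textbook. Your derivation is therefore a self-contained replacement for that citation. It uses the normal equations $\mathbf{K}_1\mathbf{w}=\lambda^2\vec{\nu}$, Sherman--Morrison for an explicit $\mathbf{w}^*$, and the matrix determinant lemma to turn $1+\lambda^2\vec{\nu}^T\mathbf{K}_2^{-1}\vec{\nu}$ into $\det(\mathbf{K}_1)/\det(\mathbf{K}_2)$.

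A slightly quicker route to the determinant form uses the Schur complement. The vector $(X,\tilde{\mathbf{X}})$ is the image of $(X,\mathbf{N})$ under a unit-determinant linear map, so its covariance has determinant $\lambda^2\det(\mathbf{K}_2)$. By the Schur complement, the same determinant equals $\det(\mathbf{K}_1)$ times the minimal error $\lambda^2-\lambda^4\vec{\nu}^T\mathbf{K}_1^{-1}\vec{\nu}$. Your route has the advantage of producing $\mathbf{w}^*$ explicitly, which the robustness clause needs.

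What your argument buys beyond the citation is a clear view of which hypotheses are actually used.

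\textbf{Zero-mean noise.} You use $\mathbb{E}[\mathbf{N}]=\mathbf{0}$ when writing $\mathbb{E}[(\mathbf{w}^T\mathbf{N})^2]=\mathbf{w}^T\mathbf{K}_2\mathbf{w}$. The paper's standing zero-mean assumption supplies this, but you should say so.

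\textbf{Independence of $X'$ from the noise.} Your extra assumption that $X'$ is independent of $\mathbf{N}$ is genuinely necessary for the robustness clause. Take $m=1$, $\nu_1=1$, $\lambda^2=\mathbb{E}[N_1^2]=1$. Then $\mathbf{w}^*=1/2$ and the bound is $1/2$, but $X'=-N_1$ gives error $1$.

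\textbf{Uncorrelatedness suffices.} Your computation only ever uses $\mathbb{E}[X\mathbf{N}]=\mathbf{0}$ (resp.\ $\mathbb{E}[X'\mathbf{N}]=\mathbf{0}$), never full independence. This is worth stating explicitly. In Appendices~\ref{sec::achievability_lessN} and~\ref{sec::converse_lessN} the paper applies the lemma with $X=\prod_i A_i$. The noise there contains terms such as $A_1\cdots A_{\mathsf{M}-1}R_{\mathsf{M}}$, which are uncorrelated with $X$ but not independent of it. Your version of the argument covers those uses; the lemma as literally stated does not.

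\textbf{Singular $\mathbf{K}_2$.} Your discussion of this case goes beyond what the statement asserts, since the statement implicitly needs $\det(\mathbf{K}_2)\neq 0$. When $\dim\ker\mathbf{K}_2\ge 2$, both determinants vanish even in your first subcase. The $\delta\to 0$ interpretation is therefore the consistent one to adopt throughout.
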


Consider the setting with $\mathsf{N}$ nodes with at most $\mathsf{T}$ colluding nodes.
In this section, we assume that $\mathbb{E}[A_i^2] = \eta$, and the result can be extended to the case $\mathbb{E}[A_i^2] < \eta$ by Lemma \ref{lemma::lmse}.
There exist uncorrelated, zero-mean, unit-variance random variables $\frac{A_i}{\sqrt{\eta} }, \bar{R}_i^{(1)}, \cdots, \bar{R}_i^{(\mathsf{N})}$ for each $i\in [\mathsf{M}]$\footnote{The set of random variables can be derived by multiplying the square root of the inverse of the covariance matrix of the additive noise across the nodes. 
}.
In this case, for each $i\in [\mathsf{M}]$, let
\begin{align}
    \vec{\Gamma}_i = 
    \begin{bmatrix}
    \frac{A_i}{\sqrt{\eta} } & \bar{R}_i^{(1)} & \cdots & \bar{R}_i^{(\mathsf{N})}    
    \end{bmatrix}^T\in\mathbb{R}^{\mathsf{N}+1}.
\end{align}


Node $j$ stores the noisy version of ${A_i}$ as follows.
\begin{align}
    \tilde{A}_i^{(j)} = \vec{\Gamma}_i^T \vec{w}_i^{(j)},
\end{align}
where $\vec{w}_i^{(j)} \in \mathbb{R}^{\mathsf{N}+1}$ is the corresponding linear combination coefficients.

In this case, node $j$ could get the following local computation results.
\begin{align}
    \tilde{V}^{(j)} = \prod_{i\in [\mathsf{M}]} \tilde{A}_i^{(j)} = \prod_{i\in [\mathsf{M}]} ( \vec{w}_i^{(j)})^T \vec{\Gamma}_i.
\end{align}

The decoder estimates the product by $\tilde{V} = \sum_{j=1}^{\mathsf{N}} d_j \tilde{V}^{(j)}$.
Define a rank-one tensor
\begin{align}
 \mathcal{A} = \vec{\Gamma}_1 \otimes \vec{\Gamma}_2 \otimes \cdots \otimes \vec{\Gamma}_{\mathsf{M}} \in \mathbb{R}^{\underbrace{\scriptstyle (\mathsf{N}+1) \times (\mathsf{N}+1)\times \cdots \times (\mathsf{N}+1)}_{\mathsf{M} \text{ times}}}, 
\end{align}
and let
\begin{align}
    \mathcal{D} = \sum_{j=1}^{\mathsf{N}} d_j
    \left({\vec{w}_1^{(j)} \otimes \vec{w}_2^{(j)} \otimes \cdots \otimes \vec{w}_{\mathsf{M}}^{(j)}} \right)- (\underbrace{\vec{e} \otimes \vec{e} \otimes \cdots \otimes\vec{e}}_{\mathsf{M} \text{ times}}) \in \mathbb{R}^{\underbrace{\scriptstyle (\mathsf{N}+1) \times (\mathsf{N}+1)\times \cdots \times (\mathsf{N}+1)}_{\mathsf{M} \text{ times}}},
\end{align}
where $\vec{e} = \begin{bmatrix}
        \sqrt{\eta} &
        0 &
        \cdots &
        0
    \end{bmatrix}^T \in \mathbb{R}^{\mathsf{N}+1}$.

Then, the estimation error can be expressed as a tensor inner product.
\begin{align}
    \tilde{V} - \prod_{i=1}^\mathsf{M} A_i &= \left(\sum_{j=1}^\mathsf{N} d_j \left( \prod_{i\in [\mathsf{M}]} ( \vec{w}_i^{(j)})^T \vec{\Gamma}_i \right) \right) -  \prod_{i=1}^\mathsf{M} A_i = \langle \mathcal{A}, \mathcal{D} \rangle.
\end{align}

As each element of $\vec{\Gamma}_i$ is of zero-mean and unit-variance, each element of $\mathcal{A}$ is of zero-mean and unit-variance.
Therefore, for the optimal choice of $\{d_j\}$, we have that,
\begin{align}
    {\tt LMSE} = \mathbb{E}\left[\left| \tilde{V} - \prod_{i=1}^\mathsf{M} A_i \right|^2 \right] = \| \mathcal{D} \|_F^2.
\end{align}
To derive a lower bound on ${\tt LMSE}$, it suffices to establish a lower bound on $\| \mathcal{D} \|_F^2$.

We first state the following two lemmas, originally established in \cite{cadambe2023differentially}. For completeness, we prove Lemmas~\ref{lemma::bound_snr} and~\ref{lemma::nullify} in Appendix~\ref{sec::proof_lemma}. 

\begin{lemma} \label{lemma::bound_snr}
For any set $\mathcal{S} \subseteq [\mathsf{N}]$ where $1 \le |\mathcal{S}|\le \mathsf{T}$, and for any constant $\bar{c}_j$ with $j\in\mathcal{S}$, the following inequality holds for all $i \in [\mathsf{M}]$,
\begin{align}
\left\|  \sum_{j\in \mathcal{S}} \bar{c}_j  \vec{w}_i^{(j)}   - \begin{bmatrix}\sqrt{\eta} \\ 0 \\ 0 \\ \vdots \\ 0\end{bmatrix} \right\|_2^2 \geq \frac{\eta}{1+{\tt SNR}^{*}(\epsilon)}. 
\end{align}

\end{lemma}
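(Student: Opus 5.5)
The plan is to reduce the statement to the single-multiplicand estimation bound implied by $\mathsf{T}$-node $\epsilon$-DP, and then translate that bound into the geometric language of the coefficient vectors $\vec{w}_i^{(j)}$.

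\textbf{Step 1 (a linear estimate from the colluding set).} Fix $i$, a set $\mathcal{S}$ with $|\mathcal{S}|\le \mathsf{T}$, and constants $\bar{c}_j$. Since $\mathcal{S}$ is contained in some $\mathsf{T}$-subset, the vector $(\tilde{A}_i^{(j)})_{j\in\mathcal{S}}$ is a sub-vector (post-processing) of a $\mathsf{T}$-node view. By Definition~\ref{def::dp}, this view is $\epsilon$-DP with respect to $A_i$. Hence the scalar $Y=\sum_{j\in\mathcal{S}}\bar{c}_j\tilde{A}_i^{(j)}$ is also $\epsilon$-DP with respect to $A_i$.

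\textbf{Step 2 (rewriting the error).} Write $\sum_j \bar{c}_j\vec{w}_i^{(j)}=[\,\nu/\sqrt{\eta},\,\vec{u}\,]^T$. Then
\[
Y=\nu A_i/\sqrt{\eta}+N, \qquad N=\vec{u}^T[\bar{R}_i^{(1)},\dots,\bar{R}_i^{(\mathsf{N})}]^T,
\]
where $N$ is uncorrelated with $A_i$ and has $\mathbb{E}[N^2]=\|\vec{u}\|^2$. By orthonormality of $\vec{\Gamma}_i$, the quantity to be bounded is exactly the estimation error:
\[
\Big\|\sum_j\bar{c}_j\vec{w}_i^{(j)}-\vec{e}\Big\|^2=(\nu/\sqrt{\eta}-\sqrt{\eta})^2+\|\vec{u}\|^2=\mathbb{E}[(Y-A_i)^2].
\]

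\textbf{Step 3 (the degenerate case).} If $\nu=0$, the left side is at least $\eta$, and the claim holds trivially.

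\textbf{Step 4 (the noise variance bound).} If $\nu\neq0$, then $(\sqrt{\eta}/\nu)Y=A_i+(\sqrt{\eta}/\nu)N$. This is an additive-noise mechanism, and since $\epsilon$-DP is preserved under the invertible scaling, it remains $\epsilon$-DP with respect to $A_i$. The noise term is independent of $A_i$: the noise variables are independent of the inputs by the system model, and $\bar{R}$ is a linear function of them. Lemma~\ref{lemma::pramod_results} therefore gives $(\eta/\nu^2)\|\vec{u}\|^2\ge\sigma^*(\epsilon)^2$.

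\textbf{Step 5 (scalar minimization).} It remains to minimize $(\nu/\sqrt{\eta}-\sqrt{\eta})^2+\|\vec{u}\|^2$ subject to $\|\vec{u}\|^2\ge \nu^2\sigma^{*2}/\eta$. Set $a=\nu/\sqrt{\eta}$ and $s=\sigma^{*2}/\eta=1/{\tt SNR}^*$, so the objective becomes $(a-\sqrt{\eta})^2+a^2 s$. Minimizing over $a$ gives $a=\sqrt{\eta}/(1+s)$, and the minimum value is
\[
\frac{\eta s}{1+s}=\frac{\eta}{1+{\tt SNR}^*(\epsilon)}.
\]
Equivalently, this is the linear MMSE bound of Lemma~\ref{lemma::lmse} with ${\tt SNR}_a\le \eta/\sigma^{*2}$.

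\textbf{Main obstacle.} The delicate point is Step 4, namely applying Lemma~\ref{lemma::pramod_results}. That lemma concerns a variable $X$ with translated copies $X+x'$, which needs independence of $N$ from $A_i$ rather than mere uncorrelatedness. I expect this to hold because the original $\tilde{R}$ are independent of the $A$'s, but it must be checked that the decorrelating transform in the footnote preserves this. The transform depends only on the noise covariance, since the inputs are uncorrelated with the noise, so it should.
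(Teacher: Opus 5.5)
Your proposal is correct and follows essentially the paper's route: post-processing to an additive-noise mechanism $A_i+\tilde{R}_i^{\mathcal{S}}$, the variance lower bound of Lemma~\ref{lemma::pramod_results}, and then the linear-MMSE bound, which you carry out as an explicit scalar minimization; your separate $\nu=0$ case and your independence check fill in points the paper leaves implicit. One bookkeeping fix: call the first coordinate of $\sum_{j}\bar{c}_j\vec{w}_i^{(j)}$ simply $\nu$ (not $\nu/\sqrt{\eta}$), since only then are $Y=\nu A_i/\sqrt{\eta}+N$, the constraint $\|\vec{u}\|^2\ge \nu^2\sigma^*(\epsilon)^2/\eta$, and the objective $(a-\sqrt{\eta})^2+a^2 s$ with $a=\nu$ mutually consistent.
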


\begin{lemma} 
\label{lemma::nullify}
    For all $i\in [\mathsf{M}]$ and any set of nodes $\mathcal{S}$ with $1 \le |\mathcal{S}| \le \mathsf{T}$, there exists a vector 
    \begin{align}
        \vec{\alpha}_i = \begin{bmatrix}
            \alpha_{i}[1] \\
            \alpha_{i}[2] \\
            \vdots \\
            \alpha_{i}[\mathsf{N}+1]
        \end{bmatrix},
    \end{align}
    such that for node $j\in \mathcal{S}$,
    \begin{align}
        \vec{\alpha}_i^T \vec{w}_i^{(j)} = 0. \label{equ::nullify}
    \end{align}
    And 
    \begin{align}
        \frac{(\alpha_{i}[1])^2}{\| \vec{\alpha}_i \|_2^2} \ge \frac{1}{1+{\tt SNR}^*(\epsilon)}. \label{equ::alpha0}
    \end{align} 
\end{lemma}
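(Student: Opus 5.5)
The plan is to derive Lemma~\ref{lemma::nullify} directly from Lemma~\ref{lemma::bound_snr} by a projection argument in $\mathbb{R}^{\mathsf{N}+1}$. Fix $i\in[\mathsf{M}]$ and a set $\mathcal{S}$ with $1\le|\mathcal{S}|\le\mathsf{T}$. Let $\mathcal{W}=\mathrm{span}\{\vec{w}_i^{(j)}: j\in\mathcal{S}\}$, and let $\vec{e}=[\sqrt{\eta},0,\dots,0]^T$ as in the converse setup. Denote by $\Pi_{\mathcal{W}}$ the orthogonal projection onto $\mathcal{W}$. I would take the residual
\begin{align}
\vec{\alpha}_i \triangleq \vec{e}-\Pi_{\mathcal{W}}\vec{e}.
\end{align}

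\textbf{Step 1: the nullifying property \eqref{equ::nullify}.} By construction $\vec{\alpha}_i\perp\mathcal{W}$. Hence $\vec{\alpha}_i^T\vec{w}_i^{(j)}=0$ for every $j\in\mathcal{S}$.

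\textbf{Step 2: compute $\alpha_i[1]$.} Since $\vec{\alpha}_i\perp\Pi_{\mathcal{W}}\vec{e}$, we have
\begin{align}
\sqrt{\eta}\,\alpha_i[1]=\vec{e}^T\vec{\alpha}_i=(\vec{\alpha}_i+\Pi_{\mathcal{W}}\vec{e})^T\vec{\alpha}_i=\|\vec{\alpha}_i\|_2^2 .
\end{align}
It follows that
\begin{align}
\frac{(\alpha_i[1])^2}{\|\vec{\alpha}_i\|_2^2}=\frac{\|\vec{\alpha}_i\|_2^2}{\eta}.
\end{align}

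\textbf{Step 3: lower-bound $\|\vec{\alpha}_i\|_2^2$ and conclude \eqref{equ::alpha0}.} Since $\Pi_{\mathcal{W}}\vec{e}\in\mathcal{W}$, it can be written as $\sum_{j\in\mathcal{S}}\bar c_j\vec{w}_i^{(j)}$ for some constants $\bar c_j$. Lemma~\ref{lemma::bound_snr} then gives
\begin{align}
\|\vec{\alpha}_i\|_2^2=\Bigl\|\sum_{j\in\mathcal{S}}\bar c_j\vec{w}_i^{(j)}-\vec{e}\Bigr\|_2^2\ge\frac{\eta}{1+{\tt SNR}^*(\epsilon)}.
\end{align}
In particular $\vec{\alpha}_i\neq\vec 0$, so the ratio in \eqref{equ::alpha0} is well defined. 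Substituting into Step 2 yields \eqref{equ::alpha0}.

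There is essentially no obstacle once Lemma~\ref{lemma::bound_snr} is available. The only point needing care is Step 2. The identity $\alpha_i[1]=\|\vec{\alpha}_i\|_2^2/\sqrt{\eta}$ is what turns the distance bound of Lemma~\ref{lemma::bound_snr} into the ratio bound. It relies on choosing $\vec{\alpha}_i$ as the \emph{orthogonal} residual, rather than as an arbitrary vector orthogonal to $\mathcal{W}$.
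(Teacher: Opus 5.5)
Your proof is correct and is essentially the paper's argument. The paper also takes $\vec{\alpha}_i$ to be the orthogonal residual of the first coordinate vector after projecting onto $\mathrm{span}\{\vec{w}_i^{(j)}: j\in\mathcal{S}\}$, reduces everything to Lemma~\ref{lemma::bound_snr}, and streamlines nothing extra: your identity $\sqrt{\eta}\,\alpha_i[1]=\|\vec{\alpha}_i\|_2^2$ simply shortcuts its detour through $(w_i[1])^2/\sum_{k\ge 2}(w_i[k])^2\le {\tt SNR}^*(\epsilon)$ and avoids its division by $w_i[1]$, which is undefined when the projection vanishes.
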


By the system condition $\mathsf{M} \le \mathsf{N} \le \mathsf{M}\mathsf{T}$, there exist sets $\mathcal{S}_1, \mathcal{S}_2, \dots, \mathcal{S}_{\mathsf{M}}$ such that 
$\bigcup_{i=1}^{\mathsf{M}} \mathcal{S}_i = [\mathsf{N}]$, 
$\mathcal{S}_i \cap \mathcal{S}_j = \emptyset$ for any $i \neq j$, 
and $1 \le |\mathcal{S}_i| \le \mathsf{T}$ for all $i \in [\mathsf{M}]$.
Based on Lemma \ref{lemma::nullify}, for each set $\mathcal{S}_i$, there exists a vector $\vec{\alpha}_i$ that satisfies the properties stated in Lemma \ref{lemma::nullify}.

Because of \eqref{equ::nullify}, contracting the order-$\mathsf{M}$ tensor $\mathcal{D}$ along modes $2$ through $\mathsf{M}$ with the vectors $\vec{\alpha}_2,\ldots,\vec{\alpha}_{\mathsf{M}}$ yields 

\begin{align}
    \mathcal{D} \times_{\mathsf{M}} \vec{\alpha}_\mathsf{M} \times_{\mathsf{M}-1} \vec{\alpha}_{\mathsf{M}-1} \cdots  \times_2 \vec{\alpha}_2 = \sum_{j \in \mathcal{S}_1} c_j\vec{w}_1^{(j)} - \begin{bmatrix}
        (\eta)^{\frac{\mathsf{M}}{2}} \prod_{i=2}^\mathsf{M} \alpha_{i}[1] \\
        0 \\
        \vdots \\
        0
    \end{bmatrix} \in \mathbb{R}^{\mathsf{N}+1}
\end{align}
where $\times_k$ denotes the mode-$k$ tensor--vector product and $c_j \in \mathbb{R}$ are real coefficients induced by the contraction.

Based on Lemma \ref{lemma::lmse} and the fact that $1 \le |\mathcal{S}_1|\le \mathsf{T}$, we have that
\begin{align}
    \left\|\sum_{j \in \mathcal{S}_1 } c_j\vec{w}_1^{(j)} - \begin{bmatrix}
        (\eta)^{\frac{\mathsf{1}}{2}}\\
        0 \\
        \vdots \\
        0
    \end{bmatrix}\right\|^2_2 \ge \frac{\eta}{1+{\tt SNR}^*(\epsilon)}.
\end{align}

By performing multiplication on both sides, it follows that
\begin{align}
    \left\| \mathcal{D} \times_{\mathsf{M}} \vec{\alpha}_\mathsf{M} \times_{\mathsf{M}-1} \vec{\alpha}_{\mathsf{M}-1} \cdots  \times_2 \vec{\alpha}_2 \right\|_2^2 =
    \left\|\sum_{j \in \mathcal{S}_1} c_j\vec{w}_1^{(j)} - \begin{bmatrix}
        (\eta)^{\frac{\mathsf{M}}{2} \prod_{i=2}^\mathsf{M} \alpha_{i}[1] }\\
        0 \\
        \vdots \\
        0
    \end{bmatrix}\right\|^2_2 \ge \frac{\eta^{\mathsf{M}}\prod_{i=2}^\mathsf{M} (\alpha_{i}[1])^2 }{1+{\tt SNR}^*(\epsilon)}.
\end{align}

Recall that we aim to get a lower bound of $\| \mathcal{D} \|_F^2$. 
Note that $\| \mathcal{D} \|_F^2 \ge \| \mathcal{D} \|_2^2$ for any tensor, and
\begin{align}
    \left\| \mathcal{D} \times_{\mathsf{M}} \vec{\alpha}_\mathsf{M} \times_{\mathsf{M}-1} \vec{\alpha}_{\mathsf{M}-1} \cdots  \times_2 \vec{\alpha}_2 \right\|_2^2 \le \| \mathcal{D} \|_2^2 \| \vec{\alpha}_\mathsf{M}\|^2_2 \| \vec{\alpha}_{\mathsf{M}-1}\|^2_2 \cdots \| \vec{\alpha}_2\|^2_2.
\end{align}

Hence we have
\begin{align}
    \| \mathcal{D} \|_F^2 &\ge \| \mathcal{D} \|_2^2  \\
    & \ge \frac{ \left\| \mathcal{D} \times_{\mathsf{M}} \vec{\alpha}_\mathsf{M} \times_{\mathsf{M}-1} \vec{\alpha}_{\mathsf{M}-1} \cdots  \times_2 \vec{\alpha}_2 \right\|_2^2}{\| \vec{\alpha}_\mathsf{M}\|^2_2 \| \vec{\alpha}_{\mathsf{M}-1}\|^2_2 \cdots \| \vec{\alpha}_2\|^2_2}  \\
    &\ge \frac{\eta^{\mathsf{M}}\prod_{i=2}^\mathsf{M} (\alpha_i[1])^2 }{1+{\tt SNR}^*(\epsilon)} \frac{1}{\| \vec{\alpha}_\mathsf{M}\|^2_2 \| \vec{\alpha}_{\mathsf{M}-1}\|^2_2 \cdots \| \vec{\alpha}_2\|^2_2}  \\
    &= \frac{\eta^{\mathsf{M}}}{1+{\tt SNR}^*(\epsilon)} \prod_{i=2}^\mathsf{M} \frac{(\alpha_i[1])^2}{\|\vec{\alpha}_i\|_2^2}   \\
    &\overset{(a)}{\ge} \frac{\eta^\mathsf{M}}{\left(1+ {\tt SNR}^*(\epsilon)\right)^\mathsf{M}},
\end{align}
where $(a)$ is due to \eqref{equ::alpha0}.

Recall that ${\tt LMSE} =  \| \mathcal{D} \|_F^2$ for the optimal $\{d_j\}$. It then follows that
\begin{align}
    {\tt LMSE}\geq \frac{\eta^\mathsf{M}}{\left(1+ {\tt SNR}^*(\epsilon)\right)^\mathsf{M}}.
\end{align}


\section{Conclusion}

Secure multi-party computation has traditionally been studied in the cryptographic setting, where privacy guarantees are perfect --- colluding parties learn nothing beyond the function output --- but this perfection comes at a steep cost: either a large number of nodes or multiple rounds of interaction. In this paper, we adopt an \emph{approximation-theoretic} mindset, working directly over the reals and designing differentially private mechanisms and characterizing their accuracy guarantees. The result is a framework in which complex nonlinear computations can, in principle, be carried out in a single round of communication with a small number of nodes, in contrast to the demands of perfectly secure protocols.

Within this framework, we consider the canonical task of computing the product of $\mathsf{M}$ private inputs, extending prior results that were limited to pairwise products \cite{cadambe2023differentially, devulapalli2022differentially, hu2025differentially}. For the regime $(\mathsf{M}-1)\mathsf{T}+1 \le \mathsf{N} \le \mathsf{M}\mathsf{T}$, we characterize the optimal privacy--accuracy trade-off: the achievable ${\tt LMSE}$ scales as the $\mathsf{M}$-th power of the single-variable estimation error, and we provide a geometric interpretation of the developed coding-based privacy mechanism. For the minimal-redundancy regime $\mathsf{N}=\mathsf{T}+1$, we derive constructive achievability results and information-theoretic impossibility bounds that are tight in the high-privacy limit. Closing the remaining gap is an interesting direction of future work. In both regimes, by allowing a calibrated $\epsilon$-DP leakage, degree-$\mathsf{M}$ multiplications can be performed in a single round with as few as $(\mathsf{M}-1)\mathsf{T}+1$ workers, compared to $\mathsf{M}\mathsf{T}+1$ workers for classical one-round perfectly secure protocols or $O(\mathsf{M})$ rounds of interaction with $\mathsf{N}\ge 2\mathsf{T}+1$ workers. 

We view the results of this paper as a first step towards at least two broader research directions. The first is the development of a {theory of privacy-accuracy trade-offs for one-shot distributed computation}: the multiplication task studied here is a canonical building block, and a natural next step is to extend the framework to richer function classes --- including polynomial evaluations, multivariate statistics, and nonlinear operations arising in distributed machine learning. The second direction is a {research program that maps these information-theoretic insights into practical MPC protocols}. Our current analysis operates over the reals and relies on asymptotic scaling; translating these results to finite-precision arithmetic, finite block lengths, and practical noise distributions raises important questions about numerical stability and quantization effects. Addressing them is essential for bridging the gap between the theoretical privacy-accuracy trade-offs and practical secure computation systems.

\newpage 
\appendices

\section{Achievability Proof ($\mathsf{N}= \mathsf{T} + 1, \mathsf{N} < \mathsf{M}$)}
\label{sec::achievability_lessN}

This appendix analyzes the privacy--accuracy tradeoff for the case $\mathsf{N} = \mathsf{T} + 1$. Hence, we do not repeat the privacy analysis here.
The coding scheme used here is the same as in Section~\ref{sec::coding_schemes} (encoding polynomials~\eqref{equ::encoding_polynomial} and product polynomials \eqref{equ::general_product_polynomial_T=1} and  ~\eqref{equ::general_product_polynomial}), and the privacy analysis of Section~\ref{sec::dp_analysis} applies unchanged. 
To complete the proof of Theorem \ref{thm::achievability_lessN}, it suffices to show that ${\tt LMSE}$ satisfies the upper bound given in~\eqref{equ::achievability_lessN}.

For the accuracy analysis, by Lemma~\ref{lemma::poly_recovery} and the conditions in~\eqref{eq:limits}, $\mathsf{T}+1$ evaluations of $p(x)$ (specified in \eqref{equ::general_product_polynomial_T=1} and  ~\eqref{equ::general_product_polynomial}) suffice to asymptotically recover
\begin{align}
    c_0 = \prod_{i=1}^{\mathsf{M}} (A_i+R_i), \quad
    c_\mathsf{T} = \zeta_1(n) \sum_{i=1}^{\mathsf{M}}R_i \left( \prod_{l \neq i} (A_l+R_l) \right) + O(\zeta_2(n)^2).
\end{align}
$c_0$ and $c_{\mathsf{T}}$ 
as $n\to\infty$, since all coefficients of degree $j>\mathsf{T}$ satisfy $c_j/c_i\to 0$ for $i\le\mathsf{T}$. 
Here, we aim to estimate the desired product using 
\begin{align}
c_0 = \prod_{i=1}^{\mathsf{M}} (A_i+R_i), \quad 
c_0 + c_\mathsf{T}  = \prod_{i=1}^{\mathsf{M}} (A_i+ (1+\zeta_1(n)) R_i) + O(\zeta_2(n)^2).    
\end{align}

By Lemma~\ref{lemma::lmse}, give the observations $c_0, c_0 + c_\mathsf{T}$,  the ${\tt LMSE}$ is upper bounded by $\eta^\mathsf{M}/(1+{\tt SNR}_a)$, where $1+{\tt SNR}_a$ is as follows.

\begin{align}
    1+{\tt SNR}_a
    &= \frac
    {\left | \begin{matrix}
    (\eta + \sigma^2)^{\mathsf{M}} & (\eta + (1+\zeta_2(n))\sigma^2)^{\mathsf{M}} + O(\zeta_2(n)^4) \\
    (\eta + (1+\zeta_2(n))\sigma^2)^{\mathsf{M}} + O(\zeta_2(n)^4) & (\eta + (1+\zeta_2(n))^2\sigma^2)^{\mathsf{M}} + O(\zeta_2(n)^4)
    \end{matrix} \right | }
    {\left | \begin{matrix}
    (\eta + \sigma^2)^{\mathsf{M}} - \eta^\mathsf{M} & (\eta + (1+\zeta_2(n))\sigma^2)^{\mathsf{M}} - \eta^\mathsf{M} + O(\zeta_2(n)^4) \\
    (\eta + (1+\zeta_2(n))\sigma^2)^{\mathsf{M}} - \eta^\mathsf{M} + O(\zeta_2(n)^4) & (\eta + (1+\zeta_2(n))^2\sigma^2)^{\mathsf{M}} - \eta^\mathsf{M} + O(\zeta_2(n)^4)
    \end{matrix} \right | } \\
    & = \frac{(1+\frac{\eta}{\sigma^2})^\mathsf{M}}{(1+\frac{\eta}{\sigma^2})^\mathsf{M} - \left( \frac{\eta}{\sigma^2} \right)^\mathsf{M} - \mathsf{M} \left( \frac{\eta}{\sigma^2} \right)^{\mathsf{M}-1}} + O(\zeta_2(n)),
\end{align}
The achievable result in Theorem \ref{thm::achievability_lessN} is proved by substituting $\sigma^2 = \sigma^*(\epsilon)^2+\gamma'$ with sufficient small $\gamma'$ as shown in Section \ref{sec::dp_analysis}, i.e., 
\begin{equation}
   {\tt LMSE} (\mathcal{C}) \leq {\eta^\mathsf{M}} \frac{\sum_{k=0}^{\mathsf{M}-2} \binom{\mathsf{M}}{k} \left({\tt SNR}^*(\epsilon)\right)^k}{\left(1 + {\tt SNR}^*(\epsilon)\right)^\mathsf{M}}+\xi = {\eta^\mathsf{M}} \frac{(1+{\tt SNR}^*(\epsilon))^{\mathsf{M}} - \mathsf{M} {\tt SNR}^*(\epsilon)^{\mathsf{M}-1} - {\tt SNR}^*(\epsilon)^{\mathsf{M}}}{\left(1 + {\tt SNR}^*(\epsilon)\right)^\mathsf{M}}+\xi.
\end{equation}
for any $\xi>0$. 

\section{Converse Proof ($\mathsf{N}= \mathsf{T} + 1, \mathsf{N} < \mathsf{M}$)}
\label{sec::converse_lessN}
Consider the case $\mathsf{N}= \mathsf{T} + 1, \mathsf{N} < \mathsf{M}$.
In this subsection, we consider that $\mathbb{E}[A_i^2] = \eta$, and the result can be extended to the case $\mathbb{E}[A_i^2] < \eta$ based on Lemma \ref{lemma::lmse}.
Following the similar procedure in Section \ref{sec::converse}, there exist uncorrelated, zero-mean, unit-variance random variables $\frac{A_i}{\sqrt{\eta} }, \bar{R}_i^{(1)}, \cdots, \bar{R}_i^{(\mathsf{N})}$ for each $i\in [\mathsf{M}]$ \footnote{The set of random variables can be derived by multiplying the square root of the inverse of the covariance matrix.}.
In this case, let
\begin{align}
    \vec{\Gamma}_i = 
    \begin{bmatrix}
    \frac{A_i}{\sqrt{\eta} } & \bar{R}_i^{(1)} & \cdots & \bar{R}_i^{(\mathsf{N})}    
    \end{bmatrix}^T\in\mathbb{R}^{\mathsf{N}+1}.
\end{align}

Node $j$ stores the noisy version of $\frac{A_i}{ \sqrt{\eta}}$ as follows.
\begin{align}
    \tilde{A}_i^{(j)} =( \vec{w}_i^{(j)})^T \vec{\Gamma}_i,
\end{align}
where $\vec{w}_i^{(j)} \in \mathbb{R}^{\mathsf{N}+1}$ is the corresponding coefficient vector.

Similarly to Section \ref{sec::converse}, the estimated error can be written as a tensor inner product
\begin{align}
    \tilde{V} - \prod_{i=1}^\mathsf{M} A_i &= \left(\sum_{j=1}^\mathsf{N} d_j \left( \prod_{i\in [\mathsf{M}]} ( \vec{w}_i^{(j)})^T \vec{\Gamma}_i \right) \right) -  \prod_{i=1}^\mathsf{M} A_i = \langle \mathcal{A}, \mathcal{D} \rangle,
\end{align}
where tensors $\mathcal{A}$ and $\mathcal{D}$ are specified in Section \ref{sec::converse}.
Since each element of $\vec{\Gamma}_i$ has zero mean and unit variance, it follows that each element of $\mathcal{A}$ also has zero mean and unit variance. Hence, for the optimal choice of ${d_j}$, we have
\begin{align}
    {\tt LMSE} = \mathbb{E}\left[\left| \tilde{V} - \prod_{i=1}^\mathsf{M} A_i \right|^2 \right] = \| \mathcal{D} \|_F^2.
\end{align}
To get a lower bound of ${\tt LMSE}$, we aim to get a lower bound of $\| \mathcal{D} \|_F^2$ in the following.

Based on Lemma \ref{lemma::nullify}, for sets $\mathcal{S}_1=\{1\}, \mathcal{S}_2=\{2\}, \dots, \mathcal{S}_\mathsf{T}=\{\mathsf{T}\}$, there exist vectors $\vec{\alpha}_1, \vec{\alpha}_2, \dots, \vec{\alpha}_{\mathsf{T}}$ that satisfies the properties stated in Lemma \ref{lemma::nullify}.

Note that 
\begin{align}
    \mathcal{D} \times_{\mathsf{T}} \vec{\alpha}_{\mathsf{T}} \times_{\mathsf{T}-1} \vec{\alpha}_{\mathsf{T}-1} \cdots  \times_1 \vec{\alpha}_1 =  c_{\mathsf{N}}\vec{w}_{\mathsf{T}+1}^{(\mathsf{N})} \otimes \vec{w}_{\mathsf{T}+2}^{(\mathsf{N})} \otimes \cdots \otimes \vec{w}_{\mathsf{M}}^{(\mathsf{N})}
    - \eta^{\mathsf{T}/2}\prod_{i=1}^{\mathsf{T}} \alpha_{i}[1](\underbrace{\vec{e} \otimes \cdots \otimes\vec{e}}_{\mathsf{M} -\mathsf{T} \text{ times}})
    \in  \mathbb{R}^{\underbrace{\scriptstyle (\mathsf{N}+1) \times \cdots \times (\mathsf{N}+1)}_{\mathsf{M} -\mathsf{T} \text{ times}}},
\end{align}
where $c_{\mathsf{N}}$ is the constant derived via the linear combination.

Based on Lemma \ref{lemma::lmse}, to estimate $A_{\mathsf{T}+1} \cdot A_{\mathsf{T}+2} \cdot \cdots \cdot A_{\mathsf{M}}$ from only node ${\mathsf{N}}$, the achievable signal-noise ratio is 
\begin{align}
    {\tt SNR}'(\epsilon) = \frac{\eta^{\mathsf{M}-\mathsf{T}}}{(\eta+\sigma^*(\epsilon)^2)^{\mathsf{M}-\mathsf{T}} - \eta^{\mathsf{M}-\mathsf{T}}} =\frac{{\tt SNR}^*(\epsilon)^{\mathsf{M}-\mathsf{T}}}{(1+{\tt SNR}^*(\epsilon))^{\mathsf{M}-\mathsf{T}}-{\tt SNR}^*(\epsilon)^{\mathsf{M}-\mathsf{T}}},
\end{align}
Hence by Lemma \ref{lemma::lmse}, it follows that 
\begin{align}
    \left\|c_{\mathsf{N}}\vec{w}_{\mathsf{T}+1}^{(\mathsf{N})} \otimes \vec{w}_{\mathsf{T}+2}^{(\mathsf{N})} \otimes \cdots \otimes \vec{w}_{\mathsf{M}}^{(\mathsf{N})}
    - (\underbrace{\vec{e} \otimes \cdots \otimes\vec{e}}_{\mathsf{M} -\mathsf{T} \text{ times}})\right\|^2_F \ge \frac{\eta^{{\mathsf{\mathsf{M}-\mathsf{T}}}}}{1+{\tt SNR}'(\epsilon)}.
\end{align}

By performing multiplication on both sides, it follows that
\begin{align}
    \left\| \mathcal{D} \times_{\mathsf{T}} \vec{\alpha}_{\mathsf{T}} \times_{\mathsf{T}-1} \vec{\alpha}_{\mathsf{T}-1} \cdots  \times_1 \vec{\alpha}_1  \right\|_F^2 
    = \left \| c_{\mathsf{N}}\vec{w}_{\mathsf{T}+1}^{(\mathsf{N})} \otimes \vec{w}_{\mathsf{T}+2}^{(\mathsf{N})} \otimes \cdots \otimes \vec{w}_{\mathsf{M}}^{(\mathsf{N})}
    - \eta^{\mathsf{T}/2}\prod_{i=1}^{\mathsf{T}} \alpha_{i}[1](\underbrace{\vec{e} \otimes \cdots \otimes\vec{e}}_{\mathsf{M} -\mathsf{T} \text{ times}}) \right\|_F^2  \ge \frac{\eta^\mathsf{M} \prod_{i=1}^{\mathsf{T}} \alpha_{i}[1]^2}{1+ {\tt SNR}'(\epsilon)}.
\end{align}

Recall that we aim to get a lower bound of $\| \mathcal{D} \|_F^2$. 
Note the fact that
\begin{align}
    \left\| \mathcal{D} \times_{\mathsf{T}} \vec{\alpha}_{\mathsf{T}} \times_{\mathsf{T}-1} \vec{\alpha}_{\mathsf{T}-1} \cdots  \times_1 \vec{\alpha}_1  \right\|_F^2  \le \| \mathcal{D} \|_F^2 \|\vec{\alpha}_\mathsf{T}\|^2_2 \| \vec{\alpha}_{\mathsf{T}-1}\|^2_2 \cdots \| \vec{\alpha}_1\|^2_2.
\end{align}

Hence we have
\begin{align}
    \| \mathcal{D} \|_F^2 &\ge 
    \frac{\left\| \mathcal{D} \times_{\mathsf{T}} \vec{\alpha}_{\mathsf{T}} \times_{\mathsf{T}-1} \vec{\alpha}_{\mathsf{T}-1} \cdots  \times_1 \vec{\alpha}_1  \right\|_F^2}{\|\vec{\alpha}_\mathsf{T}\|^2_2 \| \vec{\alpha}_{\mathsf{T}-1}\|^2_2 \cdots \| \vec{\alpha}_1\|^2_2}
     \\
    & \ge \frac{\eta^\mathsf{M} }{1+ {\tt SNR}'(\epsilon)} \prod_{i=1}^{\mathsf{T}} \frac{\alpha_{i}[1]^2}{\|\vec{\alpha}_i\|_2^2}  \\
    &\overset{(a)}{\ge} \eta^\mathsf{M} \frac{1}{(1+{\tt SNR}^*(\epsilon))^\mathsf{T}} \frac{1}{1+ {\tt SNR}'(\epsilon)} \\
    & = \eta^\mathsf{M} \frac{\sum_{k=0}^{\mathsf{M}-\mathsf{T}-1}
\binom{\mathsf{M}-\mathsf{T}}{k}
\left({\tt SNR}^*(\epsilon)\right)^k}{(1+{\tt SNR}^*(\epsilon))^\mathsf{M}},
\end{align}
where $(a)$ is due to \eqref{equ::alpha0}.

Therefore,
\begin{equation} 
    {\tt LMSE} (\mathcal{C}) \geq \eta^\mathsf{M} \frac{\sum_{k=0}^{\mathsf{M}-\mathsf{T}-1} \binom{\mathsf{M}-\mathsf{T}}{k} \left({\tt SNR}^*(\epsilon)\right)^k}{(1+{\tt SNR}^*(\epsilon))^\mathsf{M}} = \eta^\mathsf{M} \frac{(1+{\tt SNR}^*(\epsilon))^{\mathsf{M}-\mathsf{T}} - {\tt SNR}^*(\epsilon)^{\mathsf{M}-\mathsf{T}}}{(1+{\tt SNR}^*(\epsilon))^\mathsf{M}}.
\end{equation}

\section{Proofs of Technical Lemmas}
\label{sec::proof_lemma}

\subsection{Proof of Lemma \ref{lemma::existence_Z}}
\label{sec::proof_existence_Z}

For a given DP parameter $\epsilon$, define 
$$\sigma^2 = \sigma^*(\epsilon)^2+\gamma',$$ 
where $\gamma'>0$.
For a fixed variance level $\sigma^2$, we define,
\begin{align}
    \epsilon^* = \inf_{N, \mathbb{E}[N^2] \ge \sigma^2} \sup_{\mathcal{B} \subseteq \mathcal{B}(\mathbb{R}), B_0, B_1 \in  \mathbb{R}, |B_0-B_1|\le 1} \ln \left( \frac{\mathbb{P}(B_0+ N\in \mathcal{B})}{\mathbb{P}(B_1+ N\in \mathcal{B})}\right),
\end{align}
where $N \in \mathbb{R}$ is a zero-mean random variable, and $\mathcal{B}(\mathbb{R})$ denote the Borel $\sigma$-algebra on $\mathbb{R}$. 
Note that the noise variance $\mathbb{E}[N^2]$ is strictly larger than $\sigma^*(\epsilon)^2$. 
Since $\sigma^*(\epsilon)$ strictly decreases in the DP parameter $\epsilon$ (according to \eqref{equ::optimalsigma}) and $\mathbb{E}[N^2] > \sigma^*(\epsilon)^2$, it follows that 
$\epsilon^* < \epsilon$.
Consequently, for a DP parameter $\bar{\epsilon}$ with $\epsilon^* < \bar{\epsilon} < \epsilon$, there exists a random noise variable $N^*$ such that $\mathbb{E}[(N^*)^2]\le \sigma^2$ satisfying:
\begin{align}
    \sup_{\mathcal{B}\subseteq \mathcal{B}(\mathbb{R}), -1<\lambda<1} \frac{\mathbb{P}(A_1+N^* \in \mathcal{B})}{\mathbb{P}(A_1+N^*+\lambda \in \mathcal{B})} \le e^{\bar{\epsilon}} \le e^{\epsilon}.
\end{align}

\subsection{Proof of Lemma \ref{lemma::_z_z'}}
\label{sec::proof_z_z'}

Since the matrix $\bar{\mathbf{G}}$ has full rank $\mathsf{T}$ by construction, the colluding nodes can get, via a one-to-one map of $\vec{Z}$, the following quantities:
$
    (A_1 +R_1)\bar{\mathbf{G}}^{-1}\vec{1}+ 
    \begin{bmatrix}
        \zeta_1(n) R_1 & \zeta_2(n) S_{1,1} & \cdots & \zeta_2(n)  S_{1, \mathsf{T}-1}
    \end{bmatrix}^T.
$

We now show that $\vec{g'}_1^T\vec{1}\neq 0$.
As $\bar{\mathbf{G}}^{-1}\bar{\mathbf{G}}=\mathbf{I}$, we have that
$
\vec{g'}_1^T \begin{bmatrix}
        h_1 & h_2 & \cdots & h_\mathsf{T}
    \end{bmatrix}^T = 1
$
and
$
\vec{g'}_1^T
\begin{bmatrix}
    \vec{g}_1 & \vec{g}_2 & \cdots & \vec{g}_\mathsf{T}
\end{bmatrix}^T = \vec{0}^T.
$
The first equation shows that $\vec{g'}_1^T$ is not an all-zero row vector.
According to the coding scheme, the matrix $
\begin{bmatrix}
    1 & 1 & \cdots & 1 \\
    \vec{g}_1 & \vec{g}_2 & \cdots & \vec{g}_\mathsf{T}
\end{bmatrix}^T$ is full-rank, together with $
\vec{g'}_1^T
\begin{bmatrix}
    \vec{g}_1 & \vec{g}_2 & \cdots & \vec{g}_\mathsf{T}
\end{bmatrix}^T = \vec{0}^T
$,  $\vec{g'}_1^T\vec{1}$ cannot be zero.

We can then normalize the first component of the mapped $\vec{Z}$ and obtain $A_1 + \left(1 + \frac{1}{ \vec{g'}_1^T\vec{1}} \zeta_1(n)\right) R_1$.
This quantity can subsequently be used to eliminate the corresponding $R_1$ terms from the remaining components of $\vec{Z}$.\footnote{Note that we only consider the non-trivial case where $\vec{g'}_j^T\vec{1}\neq 0$ with $j\in \{2, \cdots, T\}$. If $\vec{g'}_j^T\vec{1} = 0$, privacy is well-preserved as only noise remains.} 
Hence, we can obtain the vector $\vec{Z}'$ shown in Lemma \ref{lemma::_z_z'}.

\subsection{Proof of Lemma \ref{lemma::bound_snr}}
\label{sec::proof_bound_snr}

For any set $\mathcal{S} \subseteq [\mathsf{N}]$ with $1 \le |\mathcal{S}|\le \mathsf{T}$, let
\begin{align}
    A_i^{\mathcal{S}} &= \sum_{j\in \mathcal{S}} \bar{c}_j \tilde{A}_i^{(j)}  \\
    &= \sum_{j\in \mathcal{S}} \bar{c}_j\begin{bmatrix}
    \frac{A_i}{\sqrt{\eta} } & \bar{R}_i^{(1)} & \cdots & \bar{R}_i^{(\mathsf{N})}    
    \end{bmatrix}\vec{w}_i^{(j)}  \\
    &= \frac{\sum_{j\in\mathcal{S}} \bar{c}_j \vec{w}_i^{(j)}[1]}{\sqrt{\eta}} A_i
    + \sum_{k=2}^{\mathsf{N}+1}  \left( \sum_{j\in\mathcal{S}} \bar{c}_j \vec{w}_i^{(j)}[k] \right) \bar{R}_i^{(k-1)}  \\
    &= \frac{\sum_{j\in\mathcal{S}} \bar{c}_j \vec{w}_i^{(j)}[1]}{\sqrt{\eta}}\left( A_i + \tilde{R}_i^\mathcal{S}\right),
\end{align}
where $\vec{w}_i^{(j)}[k]$ denotes the element of $\vec{w}_i^{(j)}$ with index $k$,
$\tilde{R}_i^\mathcal{S} =  \frac{\sqrt{\eta}}{\sum_{j\in\mathcal{S}} \bar{c}_j \vec{w}_i^{(j)}[1]} \sum_{k=2}^{\mathsf{N}+1}  \left( \sum_{j\in\mathcal{S}} \bar{c}_j \vec{w}_i^{(j)}[k] \right) \bar{R}_i^{(k-1)}$.
According to the definition of $\mathsf{T}$-node $\epsilon$-DP in Definition~\ref{def::dp}, the collective information by any subset of at most $\mathsf{T}$ colluding nodes must satisfy $\epsilon$-DP.
Since $A_i^{\mathcal{S}} = \frac{\sum_{j\in\mathcal{S}} \bar{c}_j \vec{w}_i^{(j)}[1]}{\sqrt{\eta}}\left( A_i + \tilde{R}_i^\mathcal{S}\right)$ is a linear combination of the information available to the subset $\mathcal{S}$, the post-processing property of DP \cite{dwork2014algorithmic} implies that $A_i + \tilde{R}_i^\mathcal{S}$ must also satisfy $\epsilon$-DP.
Consequently, the effective noise $\tilde{R}_i^\mathcal{S}$ must satisfy
\begin{align}
    \mathbb{E}\left[\left(\tilde{R}_i^\mathcal{S}\right)^2\right] \ge \sigma^*(\epsilon)^2,   
\end{align}
where $\sigma^*(\epsilon)^2$ denotes the minimal noise variance required to ensure $\epsilon$-DP, as characterized in Lemma~\ref{lemma::pramod_results}.
Consequently, the signal-to-noise ratio of $A_i^{\mathcal{S}}$ as an estimator of $A_i$ is upper bounded by ${\tt SNR}^*(\epsilon)=\frac{\eta}{\sigma^*(\epsilon)^2}$.

Based on Lemma \ref{lemma::lmse}, we have
\begin{align}
    \left\|  \sum_{j\in \mathcal{S}} \bar{c}_j  \vec{w}_i^{(j)}   - \begin{bmatrix}\sqrt{\eta} \\ 0 \\ 0 \\ \vdots \\ 0\end{bmatrix} \right\|^2 
    \overset{(a)}{=} 
    \mathbb{E} \left[\left| \begin{bmatrix}
    \frac{A_i}{\sqrt{\eta} } & \bar{R}_i^{(1)} & \cdots & \bar{R}_i^{(\mathsf{N})}    
    \end{bmatrix}  \left(\sum_{j\in \mathcal{S}} \bar{c}_j  \vec{w}_i^{(j)}   - \begin{bmatrix}\sqrt{\eta} \\ 0 \\ 0 \\ \vdots \\ 0\end{bmatrix} \right) \right|^2 \right]
    \geq \frac{\eta}{1+{\tt SNR}^{*}(\epsilon)},
\end{align}
where $(a)$ follows from the fact that random variables $\frac{A_i}{\sqrt{\eta} }, \bar{R}_i^{(1)}, \cdots, \bar{R}_i^{(\mathsf{N})}$ are uncorrelated, zero-mean, unit-variance.

\subsection{Proof of Lemma \ref{lemma::nullify}}
\label{sec::proof_nullify}

For any vector $\vec{w}_i = \begin{bmatrix}
        w_{i}[1] &
        \cdots &
        w_{i}[\mathsf{N}+1]
\end{bmatrix}^T \in \mathbb{R}^{\mathsf{N}+1}$ in the span of $\{\vec{w}_i^{(j)}: j\in \mathcal{S}\}$.
Let $\mathcal{S}=\{s_1, s_2, \dots, s_t\}$, and assume that $\vec{w}_i=\sum_{j=1}^t \beta_j \vec{w}_i^{(s_j)}$.
Based on Lemma \ref{lemma::bound_snr}, it follows that, for any constant $\gamma$,
\begin{align}
    \mathbb{E}\left[\left|\gamma (\beta_1 \tilde{A}_i^{(s_1)} + \beta_2 \tilde{A}_i^{(s_2)}+ \cdots \beta_t \tilde{A}_i^{(s_t)})- \frac{A_i}{\sqrt{\eta}}\right|^2\right] 
    =\mathbb{E}\left[\left|\gamma \begin{bmatrix}
        \frac{A_i}{\sqrt{\eta}} & \bar{R}_i^{(1)} & \cdots & \bar{R}_i^{(\mathsf{N})}
    \end{bmatrix} \vec{w}_i- \frac{A_i}{\sqrt{\eta}}\right|^2\right]
    \ge \frac{1}{1+{\tt SNR}^*(\epsilon)}.
\end{align}

As each element of $ \begin{bmatrix}
        \frac{A_i}{\sqrt{\eta}} & \bar{R}_i^{(1)} & \cdots & \bar{R}_i^{(\mathsf{N})}
    \end{bmatrix}$ is zero-mean and unit-variance, it follows that
\begin{align}
    (\gamma {w}_{i}[1] -1)^2 + \sum_{k=2}^{\mathsf{N}+1} \gamma^2 (w_{i}[k])^2 \ge \frac{1}{1+{\tt SNR}^*(\epsilon)}.
\end{align}
Let $\gamma = \frac{w_{i}[1]}{\sum_{k=1}^{\mathsf{N}+1} (w_{i}[k])^2}$, we have that
\begin{align}
    \frac{(w_{i}[1])^2}{\sum_{k=2}^{\mathsf{N}+1} (w_{i}[k])^2} \le {\tt SNR}^*(\epsilon).
\end{align}

As $|\mathcal{S}| \le \mathsf{T}$, the null space of $\{\vec{w}_i^{(j)}: j\in \mathcal{S}\}$ is non-trivial.
$\begin{bmatrix}
    1 &
    0 &
    \cdots &
    0
\end{bmatrix}^T \in \mathbb{R}^{\mathsf{N}+1}$ cannot lie in the span of $\{\vec{w}_i^{(j)}: j\in \mathcal{S}\}$, otherwise ${\tt SNR}^*(\epsilon) = \infty$.
Hence, we only consider the case where 
$\begin{bmatrix}
    1 &
    0 &
    \cdots &
    0
\end{bmatrix}^T \in \mathbb{R}^{\mathsf{N}+1}$ does not lie in the span of $\{\vec{w}_i^{(j)}: j\in \mathcal{S}\}$.
By the rank-nullity theorem, there always exists a vector $\vec{w}_i = \begin{bmatrix}
        w_i[1] &
        w_i[2] &
        \cdots &
        w_i[\mathsf{N}+1]
\end{bmatrix}^T$ in the span of $\{\vec{w}_i^{(j)}: j\in \mathcal{S}\}$, and $\vec{\alpha}_i=\begin{bmatrix}
            \alpha_{i}[1] &
            \alpha_{i}[2] &
            \cdots &
            \alpha_{i}[\mathsf{N}+1]
        \end{bmatrix}^T$ that in the null space of $\{\vec{w}_i^{(j)}: j\in \mathcal{S}\}$, such that
\begin{align}
    \vec{w}_i+ \vec{\alpha}_i = \begin{bmatrix}
        1 \\ 0 \\\vdots \\0
    \end{bmatrix}.
\end{align}

As $\vec{\alpha}_i^T \vec{w}_i = 0$, we have
\begin{align}
    w_{i}[1] \alpha_{i}[1] = -\sum_{k=2}^{\mathsf{N}+1} w_{i}[k] \alpha_{i}[k] = \sum_{k=2}^{\mathsf{N}+1} (w_{i}[k])^2 = \sum_{k=2}^{\mathsf{N}+1} (\alpha_{i}[k])^2.
\end{align}

Hence we have
\begin{align}
    \frac{\|\vec{\alpha}_i \|_2^2}{(\alpha_{i}[1])^2} = 1+ \frac{\sum_{k=2}^{\mathsf{N}+1} (\alpha_{i}[k])^2}{(\alpha_{i}[1])^2} = 1+ \frac{\sum_{k=2}^{\mathsf{N}+1} (w_{i}[k])^2}{(\frac{\sum_{k=2}^{\mathsf{N}+1} (w_{i}[k])^2}{w_{i}[1]})^2} =
    1 + \frac{(w_{i}[1])^2}{\sum_{k=2}^{\mathsf{N}+1} (w_{i}[k])^2} \le 1+ {\tt SNR}^*(\epsilon).
\end{align}

Hence, the lemma is proved.

\section{Proof of Proposition~\ref{prop::CkDk}}
\label{sec::proof_CkDk}

To prove Proposition~\ref{prop::CkDk}, we proceed in two parts.

Part~(i):
Suppose $\mathsf{N} = (\mathsf{M}-1)\mathsf{T}+1$. For each $k=0,1,\ldots,\mathsf{M}-1$, define
\begin{align}\label{equ::Ck}
    C_k = \sum_{\mathcal{S} \subseteq [\mathsf{M}],\, |\mathcal{S}|=k}
    \left( \prod_{i\in \mathcal{S}} R_i \right)
    \left( \prod_{l \notin \mathcal{S}} (A_l+R_l) \right).
\end{align}
We first show that there exist functions $g_k$ such that
\[
    \lim_{n\to\infty}
    g_k\!\bigl(\tilde{V}^{(1)},\ldots,\tilde{V}^{(\mathsf{N})}\bigr)
    = C_k,
\]
where the coding scheme (and hence $\tilde{V}^{(j)}$) depends on $n$ through the parameters $\zeta_1(n),\zeta_2(n)$ defined in Section~\ref{sec::coding_schemes}.

Part~(ii):
We then show that $D_k$ is a linear combination of $C_0, C_1, \ldots, C_k$. Consequently, $D_0, D_1, \ldots, D_{\mathsf{M}-1}$ are uniquely determined by $C_0, C_1, \ldots, C_{\mathsf{M}-1}$.

\medskip
Combining Parts (i) and (ii), it follows that $\{D_k\}_{k=0}^{\mathsf{M}-1}$ can be asymptotically recovered from the local computation results $\{\tilde{V}^{(j)}\}_{j\in[\mathsf{N}]}$, with error vanishing as $n\to\infty$.

\subsection*{Part~(i): Asymptotic recovery of $C_k$ from $\{\tilde{V}^{(j)}\}$}

We first state the following auxiliary lemma.

\begin{lemma}[Asymptotic Polynomial Coefficient Recovery]\label{lemma::poly_recovery}
Let $p(x) = \sum_{k=0}^{d} c_k x^k$ be evaluated at $r$ distinct points $x_1, \ldots, x_r$ with $r \le d$. If $\lim_{n\to\infty} c_j / c_i = 0$ for all $i \le r-1 < j \le d$ (with $c_i \ne 0$), then there exist functions $f_k$ such that
\[
    \lim_{n\to\infty} f_k\!\bigl(p(x_1),\ldots,p(x_r)\bigr) = c_k
    \quad\text{for each } k=0,\ldots,r-1.
\]
\end{lemma}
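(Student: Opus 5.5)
The natural approach is Lagrange interpolation on the first $r$ coefficients, combined with a bound on the contribution of the higher-order ones. Write $\vec{p}=[p(x_1),\ldots,p(x_r)]^T$ and let $\mathbf{V}$ be the $r\times r$ Vandermonde matrix with rows $[1,x_j,\ldots,x_j^{r-1}]$. Since the $x_j$ are distinct, $\mathbf{V}$ is invertible. We also have
\[
\vec{p}=\mathbf{V}\,[c_0,\ldots,c_{r-1}]^T+\mathbf{U}\,[c_r,\ldots,c_d]^T,
\]
where $\mathbf{U}$ is the $r\times(d-r+1)$ matrix with entries $x_j^{m}$ for $m=r,\ldots,d$. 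Define $f_k$ to be the $k$-th entry of $\mathbf{V}^{-1}\vec{p}$; this map is linear and does not depend on $n$. Then
\[
f_k(p(x_1),\ldots,p(x_r))=c_k+\sum_{m=r}^{d}(\mathbf{V}^{-1}\mathbf{U})_{k,m}\,c_m .
\]

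The remaining work is to show that the residual $\sum_{m\ge r}(\mathbf{V}^{-1}\mathbf{U})_{k,m}c_m$ is negligible relative to $c_k$. The entries of $\mathbf{V}^{-1}\mathbf{U}$ are fixed constants, bounded independently of $n$. By hypothesis $c_m/c_k\to0$ for every $m\ge r$ and every $k\le r-1$, so
\[
\frac{f_k-c_k}{c_k}=\sum_{m\ge r}(\mathbf{V}^{-1}\mathbf{U})_{k,m}\frac{c_m}{c_k}\to 0 .
\]
Hence $f_k$ recovers $c_k$ with vanishing relative error, which is the precise sense in which $\lim f_k=c_k$ is meant when the $c_k$ themselves depend on $n$. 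When applying the lemma, one rescales: for instance, $f_k/\zeta$-powers equal $C_k+o(1)$, matching the use in Proposition~\ref{prop::CkDk}.

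The main obstacle is interpretive. The coefficients are random variables depending on $n$, so the statement $c_j/c_i\to0$ and the limit must be read appropriately, in $L^2$ or in probability. For our scheme each $c_m$ is $\zeta$-prefactors times fixed polynomials in $A_i,R_i,S_{i,t}$ with finite second moments. I would therefore phrase the hypothesis as: the scaling factor of $c_m$ is $o(\text{scaling of } c_k)$. The conclusion then follows in mean square via Minkowski's inequality, since $\mathbb{E}[(f_k-c_k)^2]^{1/2}\le\sum_m|(\mathbf{V}^{-1}\mathbf{U})_{k,m}|\,\mathbb{E}[c_m^2]^{1/2}$. I expect this step, rather than the algebra, to need the most care.
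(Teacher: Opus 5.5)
Your proof is correct and follows the paper's argument. Like the paper, you split the evaluation vector into $\mathbf{V}[c_0,\ldots,c_{r-1}]^T$ plus a higher-degree residual and apply the inverse of the fixed Vandermonde matrix $\mathbf{V}$. Your one refinement is worth keeping: you bound the error in each $c_k$ relative to $c_k$ itself, using the hypothesis for every $i\le r-1$. The paper only argues $\|\boldsymbol{\delta}\|/\|[c_0,\ldots,c_{r-1}]^T\|\to 0$, which controls the error only relative to the largest low-order coefficient (here the non-vanishing $c_0$). That weaker bound would not by itself justify dividing the recovered $c_{\ell\mathsf{T}}$ by $\zeta_1(n)^\ell$ in Proposition~\ref{prop::CkDk}.
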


\begin{proof}
The evaluations satisfy $[p(x_1),\ldots,p(x_r)]^T = \mathbf{V}[c_0,\ldots,c_{r-1}]^T + \boldsymbol{\delta}$, where $\mathbf{V}$ is the invertible Vandermonde matrix and $\boldsymbol{\delta}$ collects contributions from higher-degree terms. Since $\|\boldsymbol{\delta}\|/\|[c_0,\ldots,c_{r-1}]^T\| \to 0$ as $n\to\infty$, applying $\mathbf{V}^{-1}$ recovers $c_0,\ldots,c_{r-1}$ with error vanishing as $n\to\infty$.
\end{proof}

For the case $\mathsf{T}=1$, we use the encoding polynomial defined in \eqref{equ::encoding_polynomial_t=1} to construct the overall product polynomial
\begin{align}\label{equ::general_product_polynomial_T=1}
    p(x) = \prod_{i=1}^{\mathsf{M}} p_i(x) =\prod_{i=1}^{\mathsf{M}} \left( (A_i + R_i) + \zeta_1(n) R_i x \right) = \sum_{k=0}^{\mathsf{M}-1}  \zeta_1(n)^k C_k x^k + O\left( \zeta_1(n)^\mathsf{M} \right).
\end{align}
As $n \to \infty$, the remainder term $O\!\left( \zeta_1(n)^{\mathsf{M}} \right)$ becomes negligible relative to the leading $\mathsf{M}$ terms. Hence, by Lemma~\ref{lemma::poly_recovery}, $\mathsf{N}=\mathsf{M}$ evaluations of $p(x)$ suffice to recover $C_0, C_1, \dots, C_{\mathsf{M}-1}$.

For the case $\mathsf{T} \ge 2$, using the encoding polynomial defined in \eqref{equ::encoding_polynomial}, we construct
\begin{align}\label{equ::general_product_polynomial}
    p(x) = \prod_{i=1}^{\mathsf{M}} p_i(x) = \prod_{i=1}^\mathsf{M} \left((A_i + R_i) + \zeta_2(n)\sum_{t=1}^{\mathsf{T}-1} S_{i,t} x^t + \zeta_1(n) R_i x^\mathsf{T} \right) = \sum_{k=0}^{\mathsf{M} \mathsf{T}} c_k x^k,
\end{align}
where
\begin{align}
\label{equ::product_polynomial_coefficients}
c_k
=
\sum_{{
t_1+\cdots+t_{\mathsf{M}} = k,\;
t_i \in \{0,\dots,\mathsf{T}\}
}}
\prod_{i=1}^{\mathsf{M}}
\begin{cases}
A_i + R_i, & t_i=0, \\[4pt]
\zeta_2(n)\, S_{i,t_i}, & 1 \le t_i \le \mathsf{T}-1, \\[4pt]
\zeta_1(n)\, R_i, & t_i=\mathsf{T}.
\end{cases}
\end{align}

\begin{lemma}
    Assume \eqref{eq:limits} holds, i.e.,
    $$\lim_{n \to \infty} \frac{\zeta_1{(n)}}{\zeta_2{(n)}} = \lim_{n \to \infty} \zeta_2{(n)} = \lim_{n\to\infty} \frac{\zeta_2{(n)}^{\mathsf{T}/(\mathsf{T}-1)}}{\zeta_1{(n)}} = 0.$$
    Then, for each $\ell = 0,1,\dots,\mathsf{M}-1$, the coefficient $c_{\ell\mathsf{T}}$ satisfies
    \begin{align}
        c_{\ell\mathsf{T}} = \zeta_1(n)^\ell C_\ell + o\big(\zeta_1(n)^\ell\big).
    \end{align}
\end{lemma}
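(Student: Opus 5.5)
We analyze the sum in \eqref{equ::product_polynomial_coefficients} for $k=\ell\mathsf{T}$ term by term, classifying each index tuple $(t_1,\dots,t_\mathsf{M})$ with $\sum_i t_i=\ell\mathsf{T}$ by its type. Let $a$ be the number of indices with $t_i=\mathsf{T}$ and $b$ the number with $1\le t_i\le \mathsf{T}-1$. The corresponding product has magnitude $\zeta_1(n)^a\zeta_2(n)^b$ times a random factor that does not depend on $n$ (products of $A_i+R_i$, $R_i$, $S_{i,t}$). Since these random factors are fixed and have finite moments, it suffices to compare the deterministic scalings. We aim to show that the tuples with $b=0$ give exactly $\zeta_1(n)^\ell C_\ell$, and every tuple with $b\ge 1$ satisfies $\zeta_1^a\zeta_2^b=o(\zeta_1^\ell)$.

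\emph{Step 1 (leading term).} If $b=0$, then every $t_i\in\{0,\mathsf{T}\}$ and $a\mathsf{T}=\ell\mathsf{T}$, so $a=\ell$. Summing over the choice of the set $\mathcal{S}$ of the $\ell$ indices with $t_i=\mathsf{T}$ gives $\zeta_1(n)^\ell\sum_{|\mathcal{S}|=\ell}\prod_{i\in\mathcal{S}}R_i\prod_{l\notin\mathcal{S}}(A_l+R_l)=\zeta_1(n)^\ell C_\ell$, by \eqref{equ::Ck}.

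\emph{Step 2 (remainder, the main point).} Suppose $b\ge1$. The middle indices contribute a total degree $s=\ell\mathsf{T}-a\mathsf{T}=(\ell-a)\mathsf{T}$, and each of them has $t_i\le \mathsf{T}-1$. Hence $(\ell-a)\mathsf{T}\le b(\mathsf{T}-1)$, and since $b\ge1$ forces $s>0$, we get $\ell-a\ge1$. Write $m=\ell-a\ge1$, so $b\ge m\mathsf{T}/(\mathsf{T}-1)$. Then
\[
\frac{\zeta_1^a\zeta_2^b}{\zeta_1^\ell}=\frac{\zeta_2^b}{\zeta_1^m}\le \frac{\zeta_2^{m\mathsf{T}/(\mathsf{T}-1)}}{\zeta_1^m}=\left(\frac{\zeta_2^{\mathsf{T}/(\mathsf{T}-1)}}{\zeta_1}\right)^m\to0 .
\]
The inequality uses $\zeta_2(n)<1$ for large $n$, which holds because $\zeta_2\to0$. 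The limit uses the third condition of \eqref{eq:limits}. This integer-counting bound is the crux, and it is precisely where the exponent $\mathsf{T}/(\mathsf{T}-1)$ in \eqref{eq:limits} comes from.

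\emph{Step 3 (conclusion).} There are finitely many tuples, and their number depends only on $\mathsf{M},\mathsf{T}$. Summing the $o(\zeta_1^\ell)$ contributions from Step 2 therefore gives $c_{\ell\mathsf{T}}=\zeta_1(n)^\ell C_\ell+o(\zeta_1(n)^\ell)$, where the remainder is a fixed random polynomial in the noise and inputs multiplied by vanishing scalars. The little-$o$ should be read in this sense, e.g. in $L^2$ after division by $\zeta_1^\ell$. The case $\ell=0$ is trivial, since $c_0=C_0$ exactly. The only subtlety to check is that no tuple with $b\ge1$ has $m\le0$, and Step 2 rules this out.
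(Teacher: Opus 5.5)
Your proof is correct and follows essentially the same argument as the paper: the counting bound $s(\mathsf{T}-1)\ge(\ell-k)\mathsf{T}$ on middle-layer factors, combined with the third condition in \eqref{eq:limits}. Your version is slightly more careful than the paper's. It states that any middle-layer factor forces $k<\ell$, uses $\zeta_2(n)<1$ for the exponent comparison, and treats the leading term as the full family of tuples with $t_i\in\{0,\mathsf{T}\}$ rather than a single tuple.
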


\begin{proof}
Fix $\ell\in\{0,\ldots,\mathsf{M}-1\}$. By~\eqref{equ::product_polynomial_coefficients}, each term in $c_{\ell\mathsf{T}}$ is indexed by a tuple $(t_1,\ldots,t_\mathsf{M})\in\{0,\ldots,\mathsf{T}\}^\mathsf{M}$ satisfying $\sum_i t_i = \ell\mathsf{T}$, and contributes a monomial of the form $\zeta_1(n)^k \zeta_2(n)^s$ (up to bounded factors), where $k = |\{i : t_i = \mathsf{T}\}|$ and $s = |\{i : 1 \le t_i \le \mathsf{T}-1\}|$. To show the theorem, it suffices to show that, for any $k \neq 0$,  $\zeta_1(n)^k \zeta_2(n)^s = o(\zeta_n^{\ell}).$ 

Since $\sum_{i:\, 1\le t_i\le \mathsf{T}-1} t_i = (\ell - k)\mathsf{T}$ and each such summand $t_i$ is at most $\mathsf{T}-1$, we have $s(\mathsf{T}-1) \ge (\ell-k)\mathsf{T}$, hence $s \ge \frac{(\ell-k)\mathsf{T}}{\mathsf{T}-1}$. Therefore,
$$\zeta_1(n)^k \zeta_2(n)^s = O\!\left(\zeta_1(n)^k\,\zeta_2(n)^{(\ell-k)\mathsf{T}/(\mathsf{T}-1)}\right).$$
For $k < \ell$, it follows from~\eqref{eq:limits} that
\begin{align}
    \lim_{n\to \infty} \frac{\zeta_1(n)^k\,\zeta_2(n)^{(\ell-k)\mathsf{T}/(\mathsf{T}-1)}}{\zeta_1(n)^\ell}
    = \lim_{n\to \infty} \left(\frac{\zeta_2(n)^{\mathsf{T}/(\mathsf{T}-1)}}{\zeta_1(n)}\right)^{\ell-k} = 0,
\end{align}
so every such term with $k \neq 0$ is $o(\zeta_1(n)^\ell)$. The unique tuple with $k = \ell$ contributes exactly $\zeta_1(n)^\ell C_\ell$. Hence $c_{\ell\mathsf{T}} = \zeta_1(n)^\ell C_\ell + o(\zeta_1(n)^\ell)$.
\end{proof}


From \eqref{equ::Ck}, recovery of $c_{\ell\mathsf{T}}$ (up to the known scale $\zeta_1(n)^\ell$) suffices to recover $C_\ell$ with error vanishing as $n\to\infty$. Moreover, for any indices $i,j$ satisfying $0 \le i \le (\mathsf{M}-1)\mathsf{T} < j \le \mathsf{M}\mathsf{T}$, the conditions in~\eqref{eq:limits} imply $\lim_{n\to\infty} c_j/c_i = 0$; every term in $c_j$ with $j>(\mathsf{M}-1)\mathsf{T}$ carries at least one additional factor of $\zeta_2(n)$ or $\zeta_1(n)$ relative to $c_i$ with $i\le(\mathsf{M}-1)\mathsf{T}$.
By Lemma~\ref{lemma::poly_recovery}, $(\mathsf{M}-1)\mathsf{T}+1$ evaluations of $p(x)$ suffice to asymptotically recover $C_0, C_1, \ldots, C_{\mathsf{M}-1}$. This establishes Part~(i).

\subsection*{Part~(ii): $D_k$ as a linear combination of $C_0,\ldots,C_k$}

The proof proceeds by recursion on $k$.

\textit{Base case ($k=0$).} Since $A_i+ Z_i = \alpha (A_i + R_i)$,
\begin{align}
    D_0 = \prod_{i=1}^{\mathsf{M}} (A_i + Z_i) = \alpha^{\mathsf{M}} \prod_{i=1}^{\mathsf{M}} (A_i + R_i) = \alpha^{\mathsf{M}} C_0.
\end{align}

\textit{Recursive step.} Suppose $D_0, D_1, \dots, D_{k-1}$ can each be written in terms of $C_0, C_1, \dots, C_{k-1}$ for some $k-1 \ge 0$. Expanding $C_k$:
\begin{align}
    C_k &=  \sum_{\mathcal{S} \subseteq [\mathsf{M}], |\mathcal{S}|=k} \left(\prod_{i\in \mathcal{S}} R_i \right) \left( \prod_{l \notin \mathcal{S}} (A_l+R_l) \right)  \\
    &\overset{(a)}{=} \left( \frac{1}{\alpha} \right)^{\mathsf{M}-k}  \sum_{\mathcal{S} \subseteq [\mathsf{M}], |\mathcal{S}|=k} \left(\prod_{i\in \mathcal{S}} \left( \frac{A_i+ Z_i}{\alpha} - A_i\right) \right) \left( \prod_{l \notin \mathcal{S}} (A_l+Z_l) \right),
\end{align}
where $(a)$ uses $A_i+Z_i = \alpha(A_i+R_i)$ and $R_i = \frac{A_i+Z_i}{\alpha}-A_i$. Expanding and regrouping by $|\mathcal{U}|=u$ (fixing $\mathcal{U}\subseteq\mathcal{S}$ and summing over all supersets $\mathcal{S}$) gives
\begin{align} \label{equ::Ck_Dk}
    C_k = \sum_{u=0}^{k} C_{k,u}\,D_u,
    \qquad C_{k,u}=(-1)^u \,\alpha^{u-\mathsf{M}}\binom{\mathsf{M}-u}{k-u}.
\end{align}
Since the diagonal coefficient $C_{k,k}=(-1)^k \alpha^{k-\mathsf{M}}\neq 0$, $D_k$ can be obtained as
\begin{align}
    D_k = \frac{1}{C_{k,k}}\left(C_k-\sum_{u=0}^{k-1} C_{k,u}\,D_u\right).
\end{align}
The recursion gives $D_k$ as a linear combination of $C_0,\ldots,C_k$. Combining both parts, $\{D_k\}_{k=0}^{\mathsf{M}-1}$ are asymptotically recovered from $\{\tilde{V}^{(j)}\}_{j\in[\mathsf{N}]}$.

\newpage
\bibliographystyle{IEEEtran}
\bibliography{main}

\end{document}